\newtheorem{theorem}{Theorem} 
\newtheorem{proposition}{Proposition} 
\newtheorem{corollary}{Corollary}[theorem]
\newtheorem{remark}{Remark}
\journal{}
\begin{document}
	
	\begin{frontmatter}
		
		
		
		\title{Urn Modeling of Random Graphs Across Granularity Scales: A Framework for Origin-Destination Human Mobility Networks}
		
		
		\author[inst1]{Fabio Vanni}
		\affiliation[inst1]{organization={Department of Economics, University of Insubria},
			addressline={Via Monte Generoso, 71}, 
			city={Varese},
			postcode={21100}, 
			country={ITALY}}
		
		\author[inst2]{David Lambert}
		
		\affiliation[inst2]{organization={Department of Mathematics, University of North Texas},
			addressline={225 S. Avenue B}, 
			city={Denton},
			postcode={76201}, 
			state={Texas},
			country={USA}}

\begin{abstract}
	We model human mobility as a combinatorial allocation process, treating trips as distinguishable balls assigned to location-bins and generating origin–destination (OD) networks. From this analogy, we construct a unified three-scale framework, enumerative, probabilistic, and continuum graphon ensembles, and prove a renormalization theorem showing that, in the large sparse regime, these representations converge to a universal mixed-Poisson law. The framework yields compact formulas for key mobility observables, including destination occupancy, vacancy of unvisited sites, coverage (a stopping-time extension of the coupon collector problem), and overflow beyond finite capacities. Simulations with gravity-like kernels, calibrated on empirical OD data, closely match the asymptotic predictions. By connecting exact combinatorial models with continuum analysis, the results offer a principled toolkit for synthetic network generation, congestion assessment, and the design of sustainable urban mobility policies.
\end{abstract}

		\begin{keyword}
			Origin-destination networks \sep Balls-into-bins models \sep Inhomogenous random graphs with latent variables \sep Occupancy and load problems \sep Human mobility modeling 	
		\PACS 89.75.-k \sep 05.40.-a \sep 89.65.Lm
		 \MSC[2020] 05C80 \sep 60C05 \sep 90B20 \sep 91D10
		\end{keyword}
		
	\end{frontmatter}
	
	
\section{Introduction}
	Human mobility, the movement of people and goods across space, shapes urban form, economic activity, and social interaction. Origin-destination (OD) models are a core tool for analyzing these flows, linking origins (e.g., homes, firms) to destinations (e.g., workplaces, markets) through directed connections that represent commuting, migration, trade, and related interactions \cite{barthelemy2016structure,bettencourt2021introduction,du2024unveiling}. Rooted in geography, economics, statistical physics, and applied mathematics, OD systems are naturally represented as directed graphs, where nodes are administrative units and edges quantify flow intensities \cite{barthelemy2011spatial,barbosa2018human,aoki2022urban}, integrating spatial interaction and location theory \cite{chan2001location,de2024modelling}.

	We model OD networks via a formal analogy with combinatorial allocation problems, specifically balls-into-bins, in the spirit of \cite{casiraghi2021configuration,armenter2014balls}. In this view,  the OD mobility network is characterized by directed flows of individuals between locations \cite{nie2021understanding,vanni2024visit}, represented as nodes. The flows (balls) have origins (types or colors of balls) and destinations (bins), creating a complex, directed network of mobility patterns.  By leveraging this analogy, we can derive insights into the distribution of flows, congestion effects, and optimization strategies. 
	In particular, we introduce a three-level granularity framework for random graphs, based on a systematic coarsening process that combines statistical relaxation \cite{cimini2019statistical,bianconi2009entropy,farkas2004equilibrium} and structural aggregation \cite{hartle2021dynamic,bollobas2007phase}.
	Statistical relaxation replaces exact link allocations (fine-grained representation) with probabilistic ones (coarse-grain representation), moving from enumerative graph configurations to random ensembles that match expected link patterns. Structural aggregation  groups nodes according to latent characteristics (namely attractiveness such as location or socio-economic attributes), reducing the network’s dimensionality while preserving key structural heterogeneity by using the inhomogeneous random graph configurations  which  is a class of {random graphs with latent variables}, reformulated using urn modeling methods and stochastic occupancy equations in a continuous setting (continuum-grained representation). This procedure defines three renormalized ensembles,  each representing the system at a different resolution.

	As granularity decreases, these ensembles preserve the key large-scale statistics of the original network while filtering microscopic variability, revealing persistent topological structure across scales.
	At the fine-grained level, the system is described by a fully specified link allocation, where the randomness reflects combinatorial uncertainty over admissible configurations under exact structural constraints.  At the fine-grained level, we solve occupancy and load problems using exact combinatorial enumeration, refined with asymptotic approximations. In contrast, the coarse-grained representation relaxes these constraints and models the network probabilistically, interpreting connection tendencies as the result of partial information or latent structural preferences. At the coarse-grained level, we reformulate the problem in terms of cumulants and apply the saddle-point approximation to the moment-generating function. While the two approaches differ in their treatment of uncertainty, they become statistically consistent under appropriate large-scale and sparse conditions via a mean-field argument (self-averaging limit). Both frameworks yield a discrete representation of the system, making them well-suited for generating synthetic networks and producing numerical Monte Carlo results. 
    On the other side, at the continuum-grained level, the system is described through a class of random graphs with latent variables, where the discrete set of nodes is replaced by a continuous domain endowed with a probability density. Each node is characterized by a latent variable, and the connection or allocation process is governed by a kernel function acting on these latent traits.  We evaluate macroscopic observables with integral transforms and the Laplace method. Here, the large-deviation principle for the graphon (adjacency-matrix density) provides the variational rate functional that makes the continuum approach fully consistent with the discrete ones. The principal motivation for introducing this continuum formulation is to provide a general framework in which analytical expressions for typical urn problems, such as occupancy and other load-allocation problems, can be rigorously derived. 
    In this setting, key probabilistic quantities admit tractable integral representations in the large-system limit, enabling closed-form approximations that complement and extend the discrete Monte Carlo simulations produced by the fine- and coarse-grained ensembles.
    
	Section~\ref{sec_ensemblegraph} develops the mathematical foundations of these three urn-based ensembles. We formalize the fine-, coarse-, and continuum-grained representations of random pseudo-graphs and prove their asymptotic equivalence in the large, sparse regime (Theorems~\ref{th_macromicro} and \ref{theo_graphon}). 	
	Under these conditions, the coarse-grained conditioned product of binomials ensemble, the fine-grained multivariate hypergeometric  ensemble, and the exponential graphon (under fixed mass) describe the same limiting distribution.
	
	In Section~\ref{sec_occupancy}, we derive analytical expressions for the occupancy problem at all three scales. In mobility terms, this is the allocation of visits across locations; in graph terms, it is the OD network degree distribution. Specifically, it describes the expected number of location-bins containing a given number of visit-balls. We will show that, across all three granularity scales, this problem is asymptotically equivalent to a pure multinomial balls-into-bins setting. Furthermore, under suitable mean-field assumptions, the degree (visit) distribution converges to a mixed Poisson law, extending classical Poissonization results \cite{mitzenmacher2017probability,sevast1973poisson}.

	Section~\ref{sec_load} addresses allocation-load phenomena, i.e. vacancy, coverage, and overflow, consistently across the three scales. Vacancy counts locations left unvisited after a given number of trips, indicating underutilization and access inequities. Coverage adopts a stopping-time perspective, the number of trips required until exactly a target number of locations remain unvisited, generalizing the coupon collector problem. Last, overflow quantifies allocations exceeding finite capacities, providing an indicator of congestion and stress at destinations (e.g., stations, hospitals, roads). These metrics offer a compact, tractable basis for diagnosing crowding, queuing, and systemic strain, and for evaluating policies such as demand redistribution or capacity expansion.

	Finally, in Section~\ref{sec_simul}, we validate our results using parameterization and functional forms from the data-driven case study \cite{vanni2024visit}. This gravity-like specification aligns with standard gravity models \cite{schneider1959gravity,sen2012gravity,di2022gravity} and with recent AI-driven approaches \cite{guimera2020bayesian,cabanas2025human} that identify gravity-type OD laws as interpretable, parsimonious fits to mobility data.  We then generate fine- and coarse-grained networks via Monte Carlo simulation (using purpose-built code) and compare their estimates and plots with continuum-grained asymptotic analytical predictions; the excellent agreement across scales, therefore, corroborates our findings on occupancy and load-allocation (vacancy, coverage, overflow).
The paper closes with a discussion of potential applications and outlines how the framework could be extended in forthcoming research initiatives.

\section{Configurational ensembles of random pseudo-graphs}\label{sec_ensemblegraph}
	
In this section, we first introduce the general network framework for origin-destination flows, formulated as an urn model. 
Then,  we present two discrete configurational ensemble representations; namely, the fine-grained and coarse-grained formulations-as well as a continuum-based representation that incorporates a latent variable structure and draws upon the theory of inhomogeneous random graphs.

\subsection{Origin-Destination Urn Model}

A (directed pseudo-)graph \( \mathcal{G} = (V, E) \) consists of a set \( V \) of nodes (locations) and a collection \( E \) of directed edges (trips) between them. The graph allows for both self-loops and multiple edges between the same pair of nodes.
In collective mobility networks, such as origin-destination (OD) networks, nodes represent tiles of a spatial tessellation, while edges represent flows of people between these tiles.

Formally, we define a mobility network as a \textit{directed multigraph} \( G = (V, E) \), where:

\begin{itemize}
	\item[$\circ$] \( V \) is the set of nodes, corresponding to the tiles of a spatial tessellation;
	\item[$\circ$] \( V \times V \to \mathbb{N} \) is a function assigning to each pair of nodes the number of people moving between them (i.e., the mobility flow);
	\item[$\circ$] \( E = \{(i,j) \in V \times V \mid e(i,j) \neq 0\} \) is the set of directed edges, where multiple edges between nodes represent repeated flows between the same origin and destination.
\end{itemize}

This structure qualifies as a pseudograph since it also allows for \emph{self-loops}, representing intra-tile movements.
Each movement from node \( i \) to node \( j \) is modeled as a distinct edge, so that multiple edges encode repeated movements along the same path. This allows the network to precisely encode the number of trips or interactions between any two locations.

We model the generation of links using an urn scheme, where links are interpreted as balls launched from origin baskets to destination baskets (i.e., directed trips). The balls represent travelers and are grouped into \( L \) distinct categories, each corresponding to a different origin type or location.
These balls (or links) are randomly allocated into \( N \) destination boxes (nodes) as sketched in Fig.\ref{fig_balls2bins}. Since in this context each origin can also act as a destination, we assume \( L = N \). The total number of available balls is a multiple of the number of destinations, i.e., \( M = \mathfrak{n} N \), with \( \mathfrak{n} \geq 1 \), under a proportionate stratified random sampling framework\footnote{Proportionate stratified random sampling partitions a finite population into disjoint subgroups (strata) indexed by \( h = 1, \dots, L \), according to a categorical feature. Each stratum contains \( M_h \) units, so the total population size is \( M = \sum_{h=1}^L M_h \). A sample of size \( n \) is drawn such that each stratum contributes \( n_h = w_h n \) units, with \( w_h = M_h / M \) being the relative weight of stratum \( h \). Within each stratum, the sample is drawn randomly to ensure intra-stratum randomness. This guarantees that the sample reflects the population structure. In probabilistic terms, balls are drawn from baskets labeled by \( h \), each containing \( M_h = w_h N \) balls, where \( \sum_{h=1}^L w_h = \mathfrak{n} \) is the total relative mass. In the asymptotic regime \( \mathfrak{n} \to \infty \), each stratum contains an unbounded number of balls, and the sample size allocated to each stratum scales as \( n_h = w_h n \), preserving the stratified proportions in the limit.}.

\begin{figure}[!ht]
	\centering
	\begin{subfigure}[c]{0.9\textwidth}
		\centering
		\includegraphics[trim={0 0 0 5cm},clip,width=0.7\linewidth]{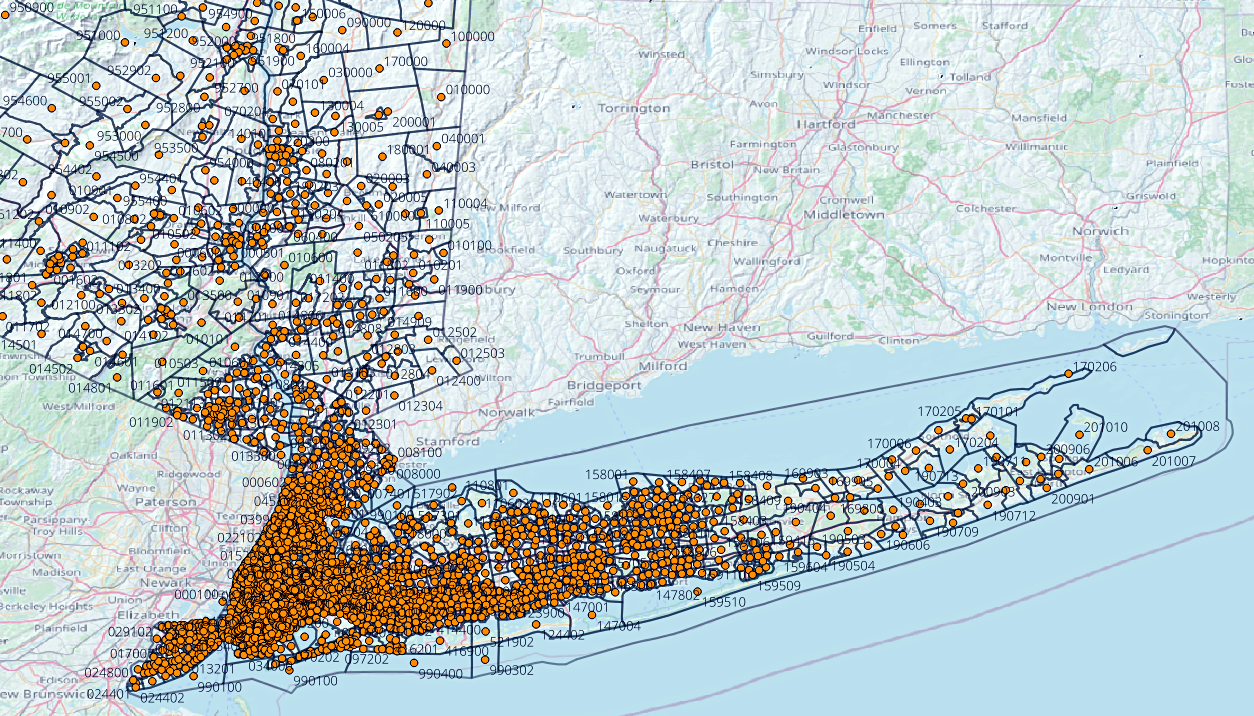}
		\caption{Centroids of census block groups as locations of a origin-destination network in New York state.}
	\end{subfigure} \\
	\vspace{1cm}
	\centering
	\begin{subfigure}[c]{0.9\textwidth}
		\centering
		\includegraphics[width=0.7\linewidth]{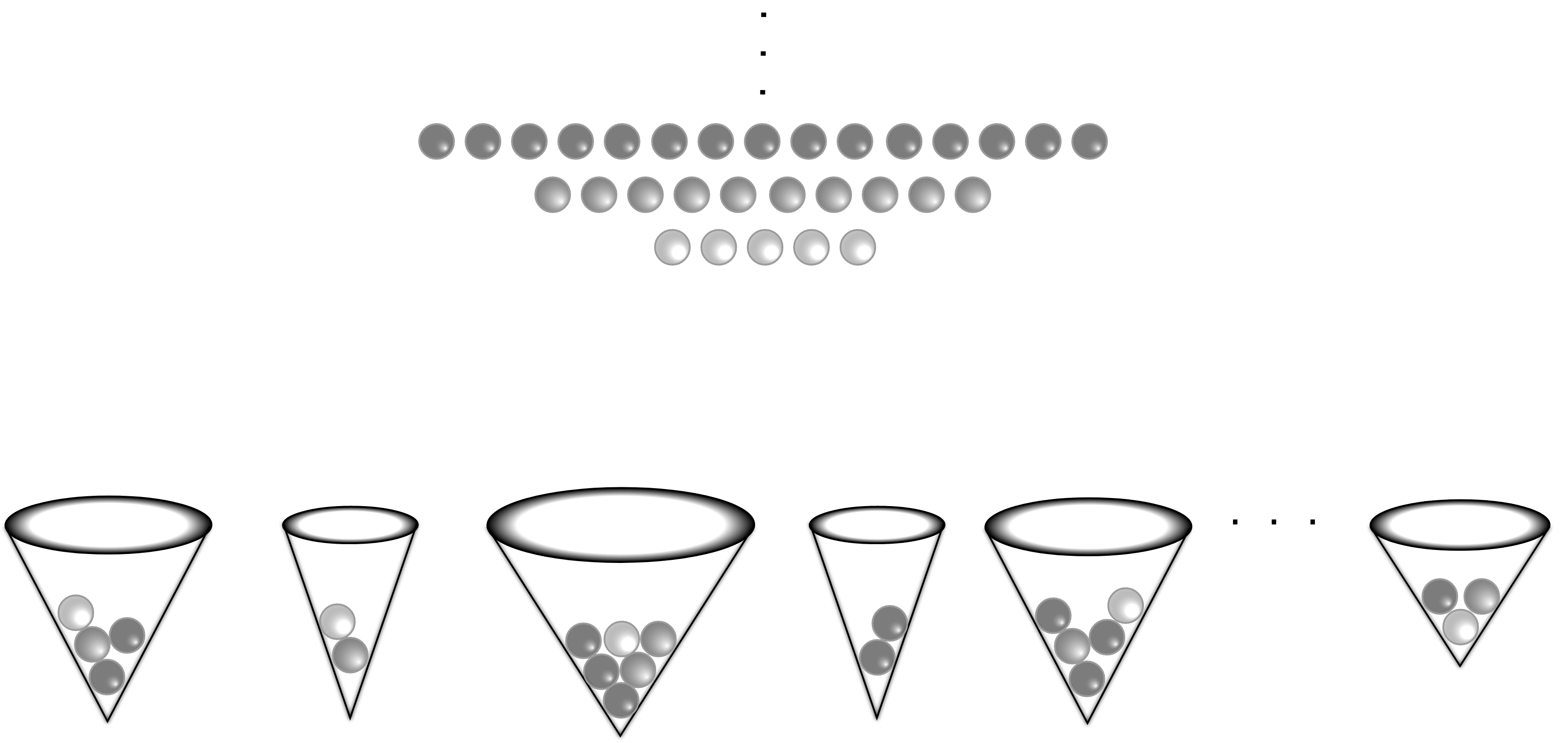}
		\caption{}
	\end{subfigure}
	\caption{Balls into bins representation. Balls represent trip-travelers and they can be of different types (shades of gray color).  The cone-bins represent locations as destinations and they have different probabilities to attract balls related to their entrance surfaces, and their capacity to attract visits is related to the volume of the bins.} \label{fig_balls2bins}
\end{figure}

Finally we define a random graph as a probability space \( \mathcal{G}(N,E) = (\Omega, \mathcal{F}, \mathbb{P}) \), where \( \Omega \) is the sample space of all graphs on \( N \) nodes, \( \mathbb{P} \) is the probability measure assigning to each graph \( G \in \Omega \) the probability \( \mathbb{P}(G) \), and \( \mathcal{F} \) is the associated $\sigma$-algebra, whose elements correspond to graph properties.
	In matrix form, the mobility network can be expressed through an {adjacency matrix}, where the entry \( a_{ij} \) counts the number of individual links (flows) from node \( i \) to node \( j \). Thus, the element \( a_{ij} \) represents the total number of trips or repeated movements from tile \( i \) to tile \( j \).

   \subsection{Coarse-grained and Fine-grained configurational probability}	
	

	We begin by defining a random graph ensemble in which the number of nodes \( N \) is fixed, and each pair of nodes \( (i,j) \) is connected independently with probability \( p_{ij} \). The total number of edges in the graph is not fixed a priori, but is instead a random variable. We refer to this construction as a \emph{coarse-grained probabilistic ensemble}, as it adopts a macroscopic perspective on graph formation referred to the {canonical ensemble} in the statistical physics of complex networks, as widely discussed in \cite{cimini2019statistical,anand2009entropy,park2004statistical}. Let us define an occupation probability	$p_{ij}$, so that the random graph $\mathcal{G}$ is a probability space over the set of graphs on the vertex set  determined by 
	$$	\text{Pr}[\{i, j\} \in \mathcal{G}] = p_{ij}$$ 	with these events mutually independent.
	Let us indicate with $\mathcal{A}$ the configuration of the mobility graph $\mathcal{G}$ when total number of observed trips is $n$, such configuration can be described in terms of the adjacency matrix of the graph which represent the OD table and where $a_{i,j}$ is the entry of the adjacency matrix indicating how many visits the destination $i$ received whose trips started from the origin $i$.

	\begin{proposition}[Coarse-grained Probabilistic Ensemble]
		A canonical random graph $\mathcal{G}_{N,\boldsymbol{p}}$ is a probability space on the set of graphs with $N$ nodes where  edges between pairs of nodes in the graph is added independently according to a probability vector $\boldsymbol{p}=\{p_{ij}\}, \forall i,j=1\ldots N$. 
		The probability to realize a member of configuration ensemble  graph  can be written as:
		\begin{equation}\label{eq_ensamble}
		\mathbb{P}_{\mathcal{C}}(\mathcal{A})=\prod_{i,j =1}^N \binom{\ell}{a_{ij}}p_{ij}^{a_{ij}}(1-p_{ij})^{\ell-a_{ij}} 
		\end{equation}
		where $\ell$, in the multi edges case, represents the number of attempts to access to the adjacency matrix entry $a_{ij}$, considering the set of multi graphs with at most $\ell$ links. 
	\end{proposition}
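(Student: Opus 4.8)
The plan is to build the ensemble $\mathcal{G}_{N,\boldsymbol p}$ out of elementary independent Bernoulli blocks and then read off the product formula \eqref{eq_ensamble} as the joint law of the resulting adjacency matrix. First I would make the slot structure explicit: for each ordered pair $(i,j)$ with $1 \le i,j \le N$ (including $i=j$, since self-loops are admitted), introduce $\ell$ potential parallel edges indexed by $k=1,\dots,\ell$, and an indicator $X_{ij}^{(k)}\in\{0,1\}$ that equals $1$ precisely when the $k$-th potential edge from $i$ to $j$ is realized. The ensemble is \emph{defined} by declaring the entire family $\{X_{ij}^{(k)}\}$ mutually independent with $\mathbb{P}(X_{ij}^{(k)}=1)=p_{ij}$. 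The adjacency entry is then the count $a_{ij}=\sum_{k=1}^{\ell}X_{ij}^{(k)}$, so that the sample space $\Omega$ is identified with the set of integer matrices satisfying $0\le a_{ij}\le\ell$, i.e.\ directed multigraphs on $N$ labeled nodes carrying at most $\ell$ parallel copies of each edge.

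Next I would compute the marginal law of a single entry: $a_{ij}$ is a sum of $\ell$ i.i.d.\ $\mathrm{Bernoulli}(p_{ij})$ variables, hence $a_{ij}\sim\mathrm{Binomial}(\ell,p_{ij})$, giving $\mathbb{P}(a_{ij}=m)=\binom{\ell}{m}p_{ij}^{m}(1-p_{ij})^{\ell-m}$, where $\binom{\ell}{m}$ counts exactly the length-$\ell$ slot-assignments consistent with the value $m$. Since the defining Bernoulli family is mutually independent across the pair index $(i,j)$ as well, the counts $\{a_{ij}\}_{i,j=1}^{N}$ are mutually independent, so the joint probability of a configuration $\mathcal{A}=(a_{ij})$ is the product of the $N^2$ marginals, which is precisely \eqref{eq_ensamble}. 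Finally I would verify that this is a genuine probability measure: summing $\mathbb{P}_{\mathcal{C}}(\mathcal{A})$ over $\mathcal{A}\in\Omega$ factorizes over the entries, and each factor sums to one by the binomial theorem, $\sum_{m=0}^{\ell}\binom{\ell}{m}p_{ij}^{m}(1-p_{ij})^{\ell-m}=\big(p_{ij}+(1-p_{ij})\big)^{\ell}=1$; extending $\mathbb{P}$ to the $\sigma$-algebra $\mathcal{F}$ generated by the cylinder events then completes the construction of $(\Omega,\mathcal{F},\mathbb{P})$.

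The only genuinely delicate point is the pseudograph bookkeeping: one must fix the convention that "the configuration $\mathcal{A}$'' denotes the equivalence class of slot-labelings having the prescribed entry counts, so that the binomial coefficients $\binom{\ell}{a_{ij}}$ are not overcounting but rather supplying the correct combinatorial multiplicity of microstates per macrostate. Once that identification is in place, the statement reduces to independence plus the definition of the binomial law, so I do not expect any analytic difficulty; I would, however, record two structural observations for later sections — that $\mathcal{A}$ has positive probability only when $\ell\ge\max_{i,j}a_{ij}$, and that the classical simple-graph canonical ensemble is recovered at $\ell=1$, while the $\ell\to\infty$ limit with $p_{ij}=\lambda_{ij}/\ell$ is exactly the Poissonization that feeds Theorems~\ref{th_macromicro} and~\ref{theo_graphon}.
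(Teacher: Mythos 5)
Your proposal is correct and takes essentially the same route as the paper: the paper simply posits the entrywise Binomial$(\ell,p_{ij})$ product as the multi-edge generalization of the Bernoulli configuration ensemble, and your slot-variable construction (with $a_{ij}=\sum_{k=1}^{\ell}X_{ij}^{(k)}$, independence across ordered pairs, and the binomial-theorem normalization check) is precisely the explicit microstate realization of that definition. Your closing remarks on the $\ell=1$ and $\ell\to\infty$ limits match the paper's subsequent corollaries, so nothing further is needed.
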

	\begin{proof}
		We have generalized the configuration ensemble graph for binary network in \cite{coolen2017generating,lee2018recent,squartini2017maximum,dorogovtsev2022nature} in order to consider multiple links in the adjacency matrix. In addition the variable $\ell$ takes into account possible multiple attempts to access the entry $a_{ij}$ in the adjacency matrix. The marginal probability of having a link can be written as $p_{ij}=\sum_{\mathcal{A}} a_{ij} 	\mathbb{P}_{\mathcal{C}}(\mathcal{A})$.
	\end{proof}
	
	
	We now adopt a fine-grained description of the same urn allocation problem by specifying the ensemble of random graphs in terms of the exact number of nodes \( N \) and the precise multiedge capacities \( m_{ij} \) allowed between each pair \( (i,j) \).  We consider the allocation of a total of \( n \) indistinguishable links (or ``balls'') to \( N\times N \) distinguishable node pairs \( (i,j) \), where each pair acts as a bin with an individual capacity constraint. In this combinatorial framework, a total of \( n \) links are drawn, and the configuration is such that each pair \( (i,j) \) has been selected exactly \( a_{ij} \leq m_{ij} \) times, where \( a_{ij} \) denotes the entry of the adjacency matrix.
	
	We refer to this construction as an \emph{enumerative random graph ensemble}, denoted by \( \mathcal{G}_{N,\boldsymbol{m}} \), which defines a uniform probability space over the set of all admissible unlabeled multigraphs consistent with the target edges  \( \{ m_{ij} \} \) and the total actual number of links \( n = \sum_{i,j} a_{ij} \). This formulation reminds to the {microcanonical ensemble} in statistical physics \cite{bianconi2007entropy,anand2009entropy,squartini2017maximum}, where the exact microstate is specified through a combinatorial allocation under constraints\footnote{Similarly with the micronanical ensemble, the fine-grained ensemble assigns equal probability to all microstates consistent with a set of constraints, but it is a more detailed version of the microcanonical ensemble since it can be viewed as a constrained multigraph allocation problem where the matrix $	m={m_{ij}}$ defines the pairwise link capacity constraints for the ensemble, specifying the maximum allowable number of links between each pair $(i,j)$ which  encodes the "skeleton" or "infrastructure" of the multigraph. }.

	%
	\begin{proposition}[Fine-grained Enumerative Ensemble]\label{prop_fine}
		Let $\mathcal{G}_{N,\boldsymbol{m}}$ be  the family of all  labeled  directed multigraphs with $N$ nodes and  exactly a number of links according to the vector $\boldsymbol{m}=\{m_{11},m_{12},\ldots m_{ij},\ldots,m_{NN}\}$ between the pairs $(i,j)$ $\forall i,j=1\ldots N$ which have  been selected $m_{ij}$ times. Then the probability to realize a member of configuration ensemble  graph  can be written as:
		\begin{equation}\label{eq_ensamblem}
		\mathbb{P}_\mathtt{m}(\mathcal{A}) = \frac{\binom{m_{11}}{a_{11}} \binom{m_{12}}{a_{12}} \cdots \binom{m_{ij}}{a_{ij}} \cdots \binom{m_{NN}}{a_{NN}}}{\binom{NM}{n}} =\frac{\prod_{i,j=1}^N \binom{\pi_{ij}NM}{a_{ij}}}{\binom{NM}{n}} 
		\end{equation}
		where  $\sum_{i,j=1}^N m_{ij}=NM$ and $\sum_{i,j=1}^N a_{ij} = n$  so that $\pi_{ij}=\frac{m_{ij}}{NM}=\text{const.}$  approximates the probability  to increase of one unit  the adjacency matrix entry $a_{ij}$.
		This expression gives the exact combinatorial probability of selecting a configuration \( \mathcal{A} \) by uniformly drawing \( n \) links from the total pool of \( NM \) available slots, under the capacity constraints \( a_{ij} \leq m_{ij} \).
		
		
	\end{proposition}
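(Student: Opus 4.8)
The plan is to recognize \(\mathbb{P}_\mathtt{m}\) as the multivariate hypergeometric law induced by uniform sampling without replacement from a structured pool. First I would make the sample space explicit: the infrastructure vector \(\boldsymbol{m}\) reserves \(m_{ij}\) distinguishable slots for the ordered pair \((i,j)\), so there is a pool of \(\sum_{i,j=1}^N m_{ij} = NM\) slots partitioned into \(N^2\) disjoint blocks. A member of the ensemble carrying total mass \(n\) corresponds to an \(n\)-element subset of this pool, and the adjacency matrix \(\mathcal{A}=(a_{ij})\) simply records how many of the chosen slots fall in block \((i,j)\). By the very definition of the fine-grained (microcanonical-type) ensemble, every such \(n\)-subset is equiprobable, so \(\mathbb{P}_\mathtt{m}\) is the uniform law on the \(\binom{NM}{n}\) subsets; this is an assumption of the construction, not something to be derived.

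Next I would count the subsets compatible with a prescribed \(\mathcal{A}\). Because the blocks are disjoint, a compatible subset is built by independently choosing which \(a_{ij}\) of the \(m_{ij}\) slots of block \((i,j)\) are occupied, so the product rule gives \(\prod_{i,j=1}^N \binom{m_{ij}}{a_{ij}}\) favorable configurations — automatically zero unless \(0 \le a_{ij} \le m_{ij}\), which encodes the admissibility (capacity) constraint. Dividing by the total count \(\binom{NM}{n}\) yields the first displayed formula, and the substitution \(m_{ij} = \pi_{ij}NM\) gives the second. I would then record, as a consistency check, that this is a genuine probability distribution: summing over all \(\mathcal{A}\) with \(\sum_{i,j} a_{ij} = n\) and invoking the multivariate Chu--Vandermonde convolution, \(\sum \prod_{i,j}\binom{m_{ij}}{a_{ij}} = \binom{\sum_{i,j} m_{ij}}{n} = \binom{NM}{n}\) (equivalently, extract the coefficient of \(x^n\) in \(\prod_{i,j}(1+x)^{m_{ij}} = (1+x)^{NM}\)), so the total mass is \(1\).

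Finally, to connect the statement with Proposition 1, I would compute a single-block marginal by collapsing all blocks other than \((i,j)\) into one of size \(NM - m_{ij}\); the law of \(a_{ij}\) then reduces to an ordinary hypergeometric distribution with parameters \((NM, m_{ij}, n)\), whence \(\mathbb{E}[a_{ij}] = n\,m_{ij}/(NM) = n\,\pi_{ij}\). This both justifies reading \(\pi_{ij}\) as the per-draw probability of incrementing \(a_{ij}\) and aligns it with the marginal \(p_{ij}\) of the coarse-grained ensemble in the appropriate regime. No step here is genuinely difficult; the only point needing care is pinning down the sample space — keeping the slot-labeling bookkeeping consistent with the multigraph description so that the naive count \(\prod_{i,j}\binom{m_{ij}}{a_{ij}}\) is exactly the number of ensemble members mapping to \(\mathcal{A}\), with no hidden over- or under-counting from graph (un)labeling conventions.
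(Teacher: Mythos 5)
Your proposal is correct and follows essentially the same route as the paper: uniform sampling of an $n$-subset from the pool of $NM$ slots partitioned into blocks of sizes $m_{ij}$, with the numerator counting favorable block-wise choices and the denominator counting all $n$-subsets. The extra touches you add — the Chu--Vandermonde normalization check and the hypergeometric marginal giving $\mathbb{E}[a_{ij}] = n\pi_{ij}$ — are not in the paper's proof but are correct and usefully tie the constant $\pi_{ij}$ to its stated interpretation.
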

	\begin{proof}
		The goal is to distribute a total of \( n \) links among these available slots, where each slot \( (i,j) \) has capacity \( m_{ij} \), i.e., it can receive at most \( m_{ij} \) links. Each pair \( (i,j) \) is allowed up to \( m_{ij} \) parallel links. Let \( A = \{a_{ij}\} \) be the realization of such a configuration, where \( a_{ij} \leq m_{ij} \), and 
		\(
		\sum_{i,j} a_{ij} = n.
		\)
		The set $\mathcal{G}_{N,\boldsymbol{m}}$ consists of all labeled graphs with \( N \) nodes and an exact set of links \( \{m_{ij}\} \) between the node pairs \( (i,j) \), where each \( m_{ij} \) represents the number of links between node \( i \) and node \( j \). 
		The graph is constructed by distributing exactly \( n \) links (where \( n = \sum_{i,j} a_{ij} \)) among the possible \( \binom{2\binom{NM}{2}}{n} \) configurations (total number of ways to place \( n \) links among all possible pairs of nodes).
		The probability of realizing a specific graph configuration \( \mathcal{A} \) with the given link distribution \( \{m_{ij}\} \) is $P_m(\mathcal{A})$ as in the Proposition,
		The numerator \( \prod_{i,j} \binom{m_{ij}}{a_{ij}} \) is the number of ways to choose \( a_{ij} \) links among the \( m_{ij} \) available between each pair \( (i,j) \).
		The denominator \( \binom{NM}{n} \) is the total number of ways to assign \( n \) links among all \( NM \) stubs, assuming all \( m_{ij} \) are distributed in a single bin of size \( NM \), where
		\(
		NM = \sum_{i,j} m_{ij}
		\).  A similar framework derivation has been obtained in \cite{casiraghi2021configuration}.
	\end{proof}
	
	Among the twos, the coarse-grained random graph ensemble \( \mathcal{G}_{N,\boldsymbol{p}} \), characterized by independent edge probabilities \( \{p_{ij}\} \), is more amenable to analytical treatment due to the independence structure of edges. In contrast, the  fine-grained ensemble \( \mathcal{G}_{N,\boldsymbol{m}} \), which imposes fixed multiedge capacities \( \{m_{ij}\} \), lacks such independence and exhibits combinatorial dependencies among edges. 
	
	The two configurational ensembles provide complementary perspectives that are both epistemic and ontological according to which mechanism is believed to be more phenomenologically realistic.
	The fine-grained ensemble approach assumes complete information about the allowable structure, akin to knowing the precise result of an edge-by-edge rewiring of allocation mechanism: the randomness in this ensemble is purely enumerative i.e., over the finite set of admissible configurations. On the other hand, the coarse-grained ensemble approach represents a probabilistic inference framework under partial information, where the connection probabilities  represent  edge structure preferences: the model reflects uncertainty or incomplete knowledge about the system and the generative process is stochastic and reflects an epistemic limitation: you don't observe the full allocation mechanism, only its probabilistic tendencies.

	Nonetheless, the two ensembles are closely related, and under suitable asymptotic regimes, they yield equivalent statistical behavior:
	
	\begin{theorem}[Asymptotic Equivalence of Ensembles]\label{th_macromicro}
		Let \( \mathcal{A} = \{a_{ij}\} \) be an admissible graph configuration with total number of links \( n = \sum_{i,j} a_{ij} \). Consider the following asymptotic regime of sparse and large graphs:
		\begin{itemize}
			\item[$\circ$] coarse-grained ensemble: \( \ell \gg 1 \), \( p_{ij} \ll 1 \), and \( \ell \sum_{i,j} p_{ij} \ll 1 \),
			\item[$\circ$] fine-grained ensemble: \( n \gg 1 \), \( N \to \infty \), and \( n \ll \mathfrak{n}  \).
		\end{itemize}
		Then, under these sparse conditions, the conditional probability of \( \mathcal{A} \) in the coarse-grained ensemble, given the total number of links, is asymptotically equivalent to the fine-grained probability:
		\begin{equation}
		\mathbb{P}_{\mathcal{C}}\Big(\mathcal{A}\,\Big|\sum_{i,j=1}^N a_{ij} = n\Big) \sim \mathbb{P}_\mathtt{m}(\mathcal{A}) = n! \prod_{i,j=1}^N \frac{\pi_{ij}^{a_{ij}}}{a_{ij}!}, \qquad \text{ where } \; 	\pi_{ij} := \frac{m_{ij}}{NM} = \frac{p_{ij}}{\sum_{i,j} p_{ij}}.
		\end{equation}
	\end{theorem}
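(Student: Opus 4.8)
The plan is to show that \emph{both} sides of the asserted relation converge, in their respective sparse regimes, to the common multinomial law $n!\prod_{i,j}\pi_{ij}^{a_{ij}}/a_{ij}!$, and then conclude by transitivity of $\sim$. The coarse-grained half rests on a Poissonization/de-Poissonization argument. Starting from \eqref{eq_ensamble}, each factor is the point mass of a $\mathrm{Bin}(\ell,p_{ij})$ variable; in the regime $\ell\gg1$, $p_{ij}\ll1$ a Poisson approximation bound (e.g.\ Le Cam's inequality, with total-variation error $\le \ell p_{ij}^2$) gives $\binom{\ell}{a_{ij}}p_{ij}^{a_{ij}}(1-p_{ij})^{\ell-a_{ij}}=e^{-\lambda_{ij}}\lambda_{ij}^{a_{ij}}/a_{ij}!\,(1+o(1))$ with $\lambda_{ij}=\ell p_{ij}$. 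Multiplying over all $N^2$ pairs, $\mathbb P_{\mathcal C}(\mathcal A)\sim e^{-\Lambda}\prod_{i,j}\lambda_{ij}^{a_{ij}}/a_{ij}!$ with $\Lambda=\ell\sum_{i,j}p_{ij}$. Once each $a_{ij}$ is replaced by its Poisson surrogate the total $\sum_{i,j}a_{ij}$ is $\mathrm{Pois}(\Lambda)$, so $\mathbb P_{\mathcal C}(\sum a_{ij}=n)\sim e^{-\Lambda}\Lambda^n/n!$ (equivalently, by the multinomial theorem applied to the summed numerators). Dividing, the $e^{-\Lambda}$ cancels and, using $\lambda_{ij}/\Lambda=p_{ij}/\sum_{i,j}p_{ij}=\pi_{ij}$ together with $\sum_{i,j}a_{ij}=n$,
\[
\mathbb P_{\mathcal C}\Big(\mathcal A\,\Big|\,\textstyle\sum_{i,j}a_{ij}=n\Big)\sim \frac{n!}{\Lambda^{n}}\prod_{i,j}\frac{\lambda_{ij}^{a_{ij}}}{a_{ij}!}=n!\prod_{i,j}\frac{\pi_{ij}^{a_{ij}}}{a_{ij}!},
\]
which is the classical statement that a vector of independent Poissons conditioned on its sum is multinomial.

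\textbf{Fine-grained half.} Start from \eqref{eq_ensamblem}, namely $\mathbb P_{\mathtt m}(\mathcal A)=\prod_{i,j}\binom{\pi_{ij}NM}{a_{ij}}\big/\binom{NM}{n}$, and recall $NM=\mathfrak n N^2$ since $M=\mathfrak n N$. In the regime $N\to\infty$, $n\gg1$, $n\ll\mathfrak n$ every occupancy obeys $a_{ij}\le n\ll m_{ij}=\pi_{ij}NM$ and $n\ll NM$, so the falling factorials linearize: $\binom{\pi_{ij}NM}{a_{ij}}=\frac{(\pi_{ij}NM)^{a_{ij}}}{a_{ij}!}\big(1+O(a_{ij}^{2}/m_{ij})\big)$ and $\binom{NM}{n}=\frac{(NM)^{n}}{n!}\big(1+O(n^{2}/NM)\big)$. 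Substituting and using $\sum_{i,j}a_{ij}=n$ to cancel $(NM)^{\sum a_{ij}}$ against $(NM)^{n}$,
\[
\mathbb P_{\mathtt m}(\mathcal A)\sim \frac{n!\,(NM)^{n}\prod_{i,j}\pi_{ij}^{a_{ij}}/a_{ij}!}{(NM)^{n}}=n!\prod_{i,j}\frac{\pi_{ij}^{a_{ij}}}{a_{ij}!},
\]
i.e.\ the standard multivariate-hypergeometric $\to$ multinomial limit. Combining this with the coarse-grained computation yields the claimed asymptotic equivalence.

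\textbf{Main obstacle.} The delicate point is not the algebra but the \emph{uniformity} of the $(1+o(1))$ corrections across the $N^2$ node pairs, since a product of $N^2$ near-unit factors need not tend to $1$. On the coarse-grained side one must verify $\prod_{i,j}\bigl(1+O(\ell p_{ij}^{2})\bigr)\to1$, i.e.\ $\sum_{i,j}\ell p_{ij}^{2}\to0$; this is exactly why the third hypothesis $\ell\sum_{i,j}p_{ij}\ll1$ is imposed, as $\sum_{i,j}\ell p_{ij}^{2}\le(\max_{i,j}p_{ij})\,\ell\sum_{i,j}p_{ij}$ and $p_{ij}\ll1$. On the fine-grained side one needs the analogous aggregate bounds $\sum_{i,j}a_{ij}^{2}/m_{ij}\to0$ and $n^{2}/NM\to0$; since at most $n$ of the $a_{ij}$ are nonzero and $m_{ij}=\pi_{ij}\mathfrak n N^{2}$, both reduce to $n^{2}\ll\mathfrak n N^{2}$, which is guaranteed by $n\ll\mathfrak n$ and $N\to\infty$. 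Turning the term-by-term heuristics above into these uniform estimates — and checking that the exceptional configurations with some $\pi_{ij}=0$ contribute nothing — is the real content of the proof; with those in hand, the remaining cancellations are purely formal.
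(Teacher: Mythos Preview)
Your proof follows essentially the same route as the paper: Poisson-approximate each binomial factor in the coarse-grained ensemble and then invoke the classical fact that independent Poissons conditioned on their sum are multinomial; on the fine-grained side, linearize the falling factorials to pass from the multivariate hypergeometric to the multinomial. One small caution: Le Cam's inequality bounds total variation, which is an \emph{additive} error and does not by itself give the pointwise multiplicative $(1+o(1))$ you claim for each factor; the paper sidesteps this by computing the binomial-to-Poisson ratio directly via Stirling and Taylor expansions, which yields the multiplicative form and makes the aggregation of corrections over all $N^{2}$ pairs (which you rightly flag as the real obstacle) explicit.
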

	
	\begin{proof}
In the coarse-grained case, conditioning on $\sum_{i,j} a_{ij}=n$, we have 
$\mathbb{P}_{\mathcal{C}}(\mathcal{A}\mid n)=\mathbb{P}_{\mathcal{C}}(\mathcal{A})/\mathbb{P}_{\mathcal{C}}(n)$. 
For $\ell\to\infty$ with $p_{ij}\ll 1$ and $\ell p_{ij}<\infty$, Stirling and Taylor expansions give:
\[
\mathbb{P}_{\mathcal{C}}(\mathcal{A}) \sim \prod_{i,j}\frac{(\ell p_{ij})^{a_{ij}}}{a_{ij}!}\,e^{-\ell p_{ij}}, 
\qquad 
\mathbb{P}_{\mathcal{C}}(n)\sim \frac{1}{n!}e^{-\ell\sum p_{ij}}(\ell\sum p_{ij})^n.
\]
Hence
\[
\mathbb{P}_{\mathcal{C}}(\mathcal{A}\mid n)=n!\prod_{i,j}\frac{(p_{ij}/\sum p_{ij})^{a_{ij}}}{a_{ij}!}\left(1+\mathcal{O}\!\left(\tfrac{1}{\ell}\right)\right),
\]
which tends to a multinomial law with weights $p_{ij}$.  

In the fine-grained case, 
\[
\mathbb{P}_m(\mathcal{A})=\frac{\prod_{i,j}\binom{m_{ij}}{a_{ij}}}{\binom{NM}{n}}. 
\]
Using Stirling’s formula and expanding the ratio of factorials yields
\[
\mathbb{P}_m(\mathcal{A})\approx n!\prod_{i,j}\frac{1}{a_{ij}!}\left(\frac{m_{ij}}{NM}\right)^{a_{ij}}
\exp\!\left(-\tfrac{1}{2}\sum_{i,j}\frac{a_{ij}(a_{ij}-1)}{m_{ij}}+\tfrac{n^2}{2NM}\right).
\]
Under sparsity ($n\ll NM$, $a_{ij}\ll m_{ij}$), corrections vanish and we obtain
\[
\mathbb{P}_m(\mathcal{A})\sim n!\prod_{i,j}\frac{\pi_{ij}^{a_{ij}}}{a_{ij}!}, \qquad \pi_{ij}=\frac{m_{ij}}{NM},
\]
which again is multinomial.  

For the complete derivations see Appendix~\ref{app_th_macromicro}.	
\end{proof}
	
From the previous Theorem we can derive the following two important consequences.	
	
	\begin{corollary}\label{coro_ell}
		The number of attempts to access the adjacency matrix entries is proportional to the expected total number of links in the unconditional case  as 
		\begin{equation}
		\ell = \langle n\rangle /\sum_{i,j=1}^N p_{ij}.
		\end{equation}
	\end{corollary}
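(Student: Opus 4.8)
The plan is to read the identity straight off the coarse-grained probabilistic ensemble defined in Eq.~\eqref{eq_ensamble}, before any conditioning on the total edge count is imposed. In that ensemble the entries $a_{ij}$ of the adjacency matrix are mutually independent, and each $a_{ij}$ is the number of successes in $\ell$ independent Bernoulli attempts with success probability $p_{ij}$; that is, $a_{ij}\sim\mathrm{Binomial}(\ell,p_{ij})$. Hence the first moment of each entry is $\langle a_{ij}\rangle=\ell\,p_{ij}$, and this is the only probabilistic input the claim requires.

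First I would apply linearity of expectation to $n=\sum_{i,j=1}^N a_{ij}$ to obtain $\langle n\rangle=\sum_{i,j=1}^N\langle a_{ij}\rangle=\ell\sum_{i,j=1}^N p_{ij}$. Since $\sum_{i,j}p_{ij}>0$ for any nontrivial graph, I can divide through and solve for the number of attempts, giving $\ell=\langle n\rangle/\sum_{i,j=1}^N p_{ij}$, which is the stated formula. I would also point out the consistency with Theorem~\ref{th_macromicro}: dividing $\langle a_{ij}\rangle=\ell p_{ij}$ by $\langle n\rangle=\ell\sum_{i,j}p_{ij}$ recovers the normalized weights $\pi_{ij}=p_{ij}/\sum_{i,j}p_{ij}$ appearing in the limiting multinomial law, so $\ell$ is precisely the proportionality constant linking the unnormalized marginals $p_{ij}$ to the occupancy scale.

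There is essentially no analytic obstacle here — the argument is a single application of linearity of expectation to a product-of-binomials law, with no asymptotics needed. The only point that deserves a word of care is interpretive rather than technical: the identity is exact at the level of expectations in the \emph{unconditional} ensemble, whereas the sparse regime of Theorem~\ref{th_macromicro} additionally requires $\ell\sum_{i,j}p_{ij}\ll1$, so I would emphasize that $\langle n\rangle$ denotes the (small) expected link count of the canonical model and that $\ell$ is determined by it, rather than chosen independently. If a parallel reading in the fine-grained language is wanted, one may further match $m_{ij}$ to $\ell p_{ij}$ up to the overall normalization $NM=\sum_{i,j}m_{ij}$, but this is not needed for the corollary as stated.
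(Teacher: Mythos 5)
Your proof is correct and follows essentially the same route as the paper's: both read off $\langle a_{ij}\rangle=\ell p_{ij}$ from the binomial marginals of the coarse-grained ensemble, sum by linearity to get $\langle n\rangle=\ell\sum_{i,j}p_{ij}$, and rearrange. Your closing remark connecting $\ell$ to the normalized weights $\pi_{ij}$ matches the paper's own observation that conditioning rescales $\mathbb{E}[a_{ij}]$ by $n/\langle n\rangle$.
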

	\begin{proof}
		This comes from the observation that  the expected value of each entry \( a_{ij} \) in the adjacency matrix \( \mathcal{A} \) is $\mathbb{E}[a_{ij}] = \ell \cdot p_{ij}$.
		Thus, the unconditional expected adjacency matrix \( \mathbb{E}[\mathcal{A}] \) has entries \( \ell \cdot p_{ij} \) for each pair \( (i, j) \).
		Under the condition that the total number of links is fixed at \( n \), i.e., \( \sum_{i,j} a_{ij} = n \), the conditional expected value of each \( a_{ij} \) given \( \sum_{i,j} a_{ij} = n \) is:
		\[
		\mathbb{E}[a_{ij} \mid \sum_{i,j} a_{ij} = n] = n \cdot \frac{p_{ij}}{\sum_{i,j} p_{ij}}=\mathbb{E}[a_{ij}] \cdot \frac{n}{\langle{n}\rangle },
		\]
		where \( \langle{n}\rangle = \mathbb{E}[n]= \ell \sum_{i,j} p_{ij} \) is the expected total number of links in the unconditional case. This shows that the conditional expectation rescales the unconditional expectation by the factor \( \frac{n}{\langle{n}\rangle} \) to satisfy the fixed total constraint.
	\end{proof}
	In particular, for homogeneous random graphs, $\langle n\rangle =\ell \sum_{i,j=1}^N p_{ij}= p N^2$ since $\ell=1$ and $p_{ij}=p$.
	
	\begin{corollary}[Homogeneous random graph]
		Let us consider the case of homogeneous binary random graph where $p_{ij}=p$ $\forall i,j=1\ldots N$. Then, the probability in eq.\eqref{eq_ensamble} to generate a canonical random graph $\mathcal{G}_{n,\boldsymbol{p}}$ in a configuration $\mathcal{A}$  is  equivalent to the Gilbert derivation of binomial random graphs since: 
		\begin{align*}
		\mathbb{P}_{\mathcal{C}}(\mathcal{A})=p^n(1-p)^{N^2-n}
		\end{align*} 
		where $n=\sum_{ij}a_{ij}$. 
		
		Furthermore, the  Erd\"os and R\'enyi approach to generate a micro-canonical random graph $\mathcal{G}_{n,\boldsymbol{m}}$ is obtained when the probability eq.\eqref{eq_ensamblem} for directed and binary graphs (with self-loops) becomes:
		$$\mathbb{P}_\mathtt{m}(\mathcal{A})=\frac{1}{\binom{N^2}{n}}=\mathbb{P}_{\mathcal{C}}(\mathcal{A}|n),$$ which denotes such a
		random graph  as the uniform (or homogeneous) random graph. Finally $\mathbb{E}[n]= pN^2 $.
	\end{corollary}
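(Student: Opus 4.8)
The plan is to obtain both random-graph models as direct specializations of Proposition~\ref{prop_fine} and the coarse-grained ensemble of Eq.~\eqref{eq_ensamble}, rather than invoking the asymptotic machinery of Theorem~\ref{th_macromicro}; in the binary homogeneous case the identities hold exactly and no limiting argument is needed.

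First, for the canonical (Gilbert) statement, I would start from \eqref{eq_ensamble} and set the number of access attempts to $\ell=1$ (a binary graph: each ordered pair is either present or absent) together with $p_{ij}\equiv p$. Since then $a_{ij}\in\{0,1\}$ we have $\binom{1}{a_{ij}}=1$, so the product collapses to $\prod_{i,j=1}^N p^{a_{ij}}(1-p)^{1-a_{ij}}$. Counting the $N^2$ ordered pairs (self-loops included) and writing $n=\sum_{i,j}a_{ij}$ gives $\mathbb{P}_{\mathcal{C}}(\mathcal{A})=p^{n}(1-p)^{N^2-n}$, which is exactly the Gilbert $\mathcal{G}(N,p)$ law. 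The expectation $\mathbb{E}[n]=pN^2$ then follows either by linearity of expectation over the $N^2$ independent $\mathrm{Bernoulli}(p)$ entries, or directly from Corollary~\ref{coro_ell} with $\ell=1$ and $\sum_{i,j}p_{ij}=pN^2$.

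Second, for the microcanonical (Erd\H{o}s--R\'enyi) statement, I would specialize \eqref{eq_ensamblem} to a binary directed multigraph with self-loops by taking $m_{ij}\equiv 1$, so that $NM=\sum_{i,j}m_{ij}=N^2$, i.e.\ $M=N$ (consistent with the $L=N$, $M=\mathfrak{n}N$ convention at $\mathfrak{n}=1$). Then every factor $\binom{m_{ij}}{a_{ij}}=\binom{1}{a_{ij}}=1$, the numerator is $1$, and $\mathbb{P}_{\mathtt m}(\mathcal{A})=1/\binom{N^2}{n}$, the uniform law on the $\binom{N^2}{n}$ directed graphs with exactly $n$ edges. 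To identify it with the conditioned canonical law I would compute $\mathbb{P}_{\mathcal{C}}(n)=\sum_{\mathcal{A}:\,\sum a_{ij}=n}\mathbb{P}_{\mathcal{C}}(\mathcal{A})=\binom{N^2}{n}p^n(1-p)^{N^2-n}$, since all such configurations carry the same weight $p^n(1-p)^{N^2-n}$ from the first part, and then divide: $\mathbb{P}_{\mathcal{C}}(\mathcal{A}\mid n)=\mathbb{P}_{\mathcal{C}}(\mathcal{A})/\mathbb{P}_{\mathcal{C}}(n)=1/\binom{N^2}{n}=\mathbb{P}_{\mathtt m}(\mathcal{A})$.

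The only point needing care --- not really an obstacle --- is edge-counting bookkeeping: because the graphs are directed and admit self-loops, the relevant slot count is $N^2$ (not $\binom{N}{2}$ or $\binom{N+1}{2}$), and one should check that the placeholder normalization $\binom{2\binom{NM}{2}}{n}$ appearing in the proof of Proposition~\ref{prop_fine} reduces, in this directed self-loop binary setting, to the clean $\binom{N^2}{n}$. It is also worth flagging that the exact identity here is the without-replacement (hypergeometric) form $1/\binom{N^2}{n}$, which coincides with the with-replacement multinomial limit $n!\prod_{i,j}\pi_{ij}^{a_{ij}}/a_{ij}!$ of Theorem~\ref{th_macromicro} only asymptotically in the sparse regime $n\ll N^2$; the corollary's claim is the exact finite-$N$ version.
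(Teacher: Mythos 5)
Your proposal is correct and follows essentially the same route as the paper's own proof: specialize the coarse-grained formula with $\ell=1$ and $p_{ij}\equiv p$ to get the Gilbert law, specialize the fine-grained formula with $m_{ij}\equiv 1$ (so $NM=N^2$) to get the uniform law, and identify the latter with the conditioned canonical law by dividing by $\mathbb{P}_{\mathcal{C}}(n)=\binom{N^2}{n}p^n(1-p)^{N^2-n}$. If anything, your bookkeeping is slightly cleaner than the paper's, whose proof momentarily writes the exponent as $N(N-1)-\sum a_{ij}$ before equating it to $N^2-n$; your insistence on counting all $N^2$ ordered pairs including self-loops resolves that inconsistency in favor of the stated result.
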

	\begin{proof}
		From eq.\eqref{eq_ensamble} In the coarse-grained ensemble of random graphs, we consider only one entry is possible since the graph is a binary simple graph (no multi edges), so that  $\ell=1$. Then  since $M=N$, i.e. $h=1$: $$\mathbb{P}_{\mathcal{C}}(\mathcal{A})=\prod_{i,j} p^{a_{ij}}(1-p)^{1-a_{ij}} = p^{\sum_{i,j} a_{ij}}(1-p)^{N(N-1)-\sum_{i,j} a_{ij}}=p^n(1-p)^{N^2-n}$$ which is the Gilbert graph derivation as in  \cite{janson2011random,bollobas1998random}. Consequently, the conditional coarse-grained distribution is $$\mathbb{P}_{\mathcal{C}}(\mathcal{A}|n)=\frac{\mathbb{P}_{\mathcal{C}}(\mathcal{A})}{\binom{N^2}{n}p^n(1-p)^{N^2-n}}$$
		
		On the other side, in the fine-grained ensemble of random graphs, we have $m_{ij}=\{0,1\}$, so that:
		$$\mathbb{P}_\mathtt{m}(\mathcal{A})=\binom{N^2}{n}^{-1}\equiv \mathbb{P}_{\mathcal{C}}(\mathcal{A}|n)$$
	\end{proof}



\subsection{Continuum-Grained Configurational Probability}

The continuum-grained ensemble generalizes the coarse-grained configuration model by replacing the discrete set of edge probabilities \( \{p_{ij}\} \) with a continuous kernel function \( \kappa \), which serves as a limit object for large sparse graphs. In this limit, \( \kappa \) defines a graphon, interpreted as a continuous adjacency function over a latent space, encoding the connection intensity or probability between nodes characterized by latent features.
To formalize this, we adopt a latent-variable framework for the urn-based mobility graph, following the approach in \cite{vanni2024visit}. Each node (or bin) \( i \in \{1,\ldots,N\} \) is endowed with two independent latent variables:
$
x_i \in \mathcal{X},  y_i \in \mathcal{Y},
$
where \( x_i \) denotes the destination-type (capturing the attractiveness of node \( i \) as a destination) and \( y_i \) the origin-type (representing the population or demand type generating trips from node \( i \)). These latent variables are drawn independently from probability measures \( \mu_{\mathcal{X}} \) and \( \mu_{\mathcal{Y}} \), respectively, defined over Polish spaces \( (\mathcal{X}, \mathcal{B}_{\mathcal{X}}) \) and \( (\mathcal{Y}, \mathcal{B}_{\mathcal{Y}}) \), with Borel sigma-algebras and densities absolutely continuous with respect to Lebesgue measure:
\[
x_1, \dots, x_N \overset{\text{iid}}{\sim} \mu_{\mathcal{X}}, \qquad y_1, \dots, y_N \overset{\text{iid}}{\sim} \mu_{\mathcal{Y}}, \qquad \text{with } x_i \perp y_j.
\]

In the spirit of inhomogeneous random graph theory \cite{bollobas2007phase,turova2011survey},
 define a directed random graph model \( G^{\mathcal{V}}(N,\kappa) \) over the vertex space \( \mathcal{V} = (\mathcal{X}, \mathcal{Y}, \mu_{\mathcal{X}}, \mu_{\mathcal{Y}}) \), 
 For each \( N \), the collection of vertices in the graph \( G^{\mathcal{V}}(N, \kappa) \) is formed by either a deterministic or random sequence \( x_1, \ldots, x_N \) of points within \( S \), such that for any \( \mu \)-continuity set \( A \subseteq \mathcal{X} \) and \( B \subseteq \mathcal{Y} \),
 \[
 \frac{\#\{i : x_i \in A\}}{N} \overset{P}{\to} \mu_{\mathcal{X}}(A) \quad , \quad	\frac{\#\{i : y_i \in B\}}{N} \overset{P}{\to} \mu_\mathcal{Y} (B)
 \]
 Moreover, for each \( i,j \in \{1, \ldots, N\} \), an edge from \( j \) to \( i \) is formed independently with probability:
\[
p_{ij} = \min\left\{ \frac{\kappa(x_i, y_j)}{N},\, 1 \right\},
\]
for some symmetric or directed kernel \( \kappa : \mathcal{X} \times \mathcal{Y} \to [0, \infty) \), assumed to be measurable and integrable, i.e., \( \kappa \in L^1(\mu_{\mathcal{X}} \otimes \mu_{\mathcal{Y}}) \).

The average number of links per node (edge density) in \( G^{\mathcal{V}}(N,\kappa) \) satisfies:
\begin{equation}\label{eq_edgedense}
\frac{1}{N} \mathbb{E}[e(G^{\mathcal{V}}(N,\kappa))] = \frac{\ell}{N^2}\sum_{i,j=1}^{N} \big( {\kappa(x_i, y_j)}\wedge 1 \big)\to \ell \iint_{\mathcal{X} \times \mathcal{Y}} \kappa(x,y) \, d\mu_{\mathcal{X}}(x) \, d\mu_{\mathcal{Y}}(y),
\end{equation}
as \( N \to \infty \), interpreting the right-hand side as the population-average link intensity — the continuum analogue of average degree in sparse graphs.

To capture the empirical structure of such graphs, define the empirical graphon in sparse regime \cite{glasscock2015graphon}:
\begin{equation}
\mathcal{\tilde{A}}_N(x,y) := \frac{1}{N} \sum_{i,j=1}^N a^{(\ell)}_{ij} \, \delta_{x_i}(x) \, \delta_{y_j}(y),
\end{equation}
where \( a^{(\ell)}_{ij} \in [0, \infty) \) is the multi edge from \( j \) to \( i \) and in order to examine the sparse limit can be defined as:
\[
a_{ij}^{(\ell)} := \operatorname{Binomial} \left( \ell,\, \tfrac{\kappa(x_i,y_j)}{\ell N} \right).
\] 
As \( N \to \infty \),  this empirical measure converges weakly (in distribution) to a deterministic limit \( \mathcal{\tilde{A}}(x,y) \in L^1(\mathcal{X} \times \mathcal{Y}) \), which plays the role of a continuous adjacency matrix over the latent type space \( \mathcal{X} \times \mathcal{Y} \).

In the continuum-weighted regime, the ensemble distribution of such graphs is characterized by a probability density \( P(\mathcal{\tilde{A}}) \) over admissible limit graphons \( \mathcal{\tilde{A}} \), encoding the statistical variability of link weights across the latent space.

\begin{proposition}[Continuum-Grained  Ensemble]\label{conti_ensamble}
	Let \( \tilde{\mathcal{G}}_{N,\kappa} \) be a probability space over multi-edge random graphs with \( N \) nodes, where each node \( i \) is associated with independent latent types \( (x_i, y_i) \) drawn from \( \mu_{\mathcal{X}} \otimes \mu_{\mathcal{Y}} \). For each ordered pair \( (i,j) \), an edge from \( j \) to \( i \) is formed independently according to the intensity kernel \( \kappa \in L^1(\mu_{\mathcal{X}} \otimes \mu_{\mathcal{Y}}) \). 
	
	Then, the empirical graphon
	\(
	\tilde{\mathcal{A}}_N(x,y)
	\)
	converges in distribution to a limit \( \tilde{\mathcal{A}} \in L^1(\mathcal{X} \times \mathcal{Y}) \), and satisfies a large deviations principle in the weak topology of \( L^1 \):
	\[
	\mathbb{P}_{\aleph}(\tilde{\mathcal{A}}) = \Pr_N\{\tilde{\mathcal{A}}_N \approx \tilde{\mathcal{A}} \} \asymp \exp\left[ -N \, \mathcal{I}(\tilde{\mathcal{A}}) + o(N) \right],
	\]
	where the rate functional \( \mathcal{I} : L^1 \to [0,\infty] \) is given by:
	\[
	\mathcal{I}(\tilde{\mathcal{A}}) = D(\tilde{\mathcal{A}} \Vert \ell \kappa) +\ell  \|\kappa\|_1 - \|\tilde{\mathcal{A}}\|_1,
	\]
where \( D(\tilde{\mathcal{A}} \Vert \ell \kappa)\)  is the relative entropy, \( \|\tilde{\mathcal{A}}\|_1 \) is the total observed edge mass, and \( \ell \|\kappa\|_1 \) is the expected total edge mass under the model.

\end{proposition}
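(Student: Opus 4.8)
The plan is to establish the large-deviation principle (LDP) first and then read off the asserted convergence in distribution from the fact that the resulting rate functional has a unique minimizer. First I would condition on the latent variables and work along a typical realization of $(x_i,y_i)_{i\le N}$, i.e.\ one for which the empirical type measures satisfy $\frac1N\sum_i\delta_{x_i}\to\mu_{\mathcal X}$ and $\frac1N\sum_j\delta_{y_j}\to\mu_{\mathcal Y}$ weakly (such realizations have full probability). Conditionally on these, the multi-edge weights $\{a^{(\ell)}_{ij}\}$ are mutually independent, each being $\ell$ independent Bernoulli attempts of probability $p_{ij}=\kappa(x_i,y_j)/N$, so $\mathbb E[a^{(\ell)}_{ij}]=\ell\,\kappa(x_i,y_j)/N$ in line with Corollary~\ref{coro_ell}. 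The empirical graphon $\tilde{\mathcal A}_N=\frac1N\sum_{i,j}a^{(\ell)}_{ij}\,\delta_{(x_i,y_j)}$ is then a linear image of this independent array, regarded as a random element of the space $\mathcal M(\mathcal X\times\mathcal Y)$ of finite measures with the weak topology (equivalently, weak-$L^1$ on densities with respect to $\mu_{\mathcal X}\otimes\mu_{\mathcal Y}$).

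Next I would compute the scaled cumulant generating functional. For a bounded measurable $h:\mathcal X\times\mathcal Y\to\mathbb R$ one has $\langle\tilde{\mathcal A}_N,h\rangle=\frac1N\sum_{i,j}a^{(\ell)}_{ij}h(x_i,y_j)$, so conditional independence and a Taylor expansion of the Binomial log-moment generating function (whose argument is $1+O(1/N)$ uniformly on compacts, the truncation $p_{ij}=\min\{\kappa/N,1\}$ being inactive for $N$ large) yield
\[
\frac1N\log\mathbb E\!\left[e^{N\langle\tilde{\mathcal A}_N,h\rangle}\right]=\frac{\ell}{N^2}\sum_{i,j}\kappa(x_i,y_j)\big(e^{h(x_i,y_j)}-1\big)+o(1),
\]
and, by the law of large numbers for this double empirical average along the typical configuration, the right-hand side converges to
\[
\Lambda(h):=\ell\!\iint_{\mathcal X\times\mathcal Y}\!\kappa(x,y)\big(e^{h(x,y)}-1\big)\,d\mu_{\mathcal X}(x)\,d\mu_{\mathcal Y}(y),
\]
which I would recognize as the log-Laplace functional of a Poisson random measure of intensity $\ell\kappa\,d\mu_{\mathcal X}\otimes d\mu_{\mathcal Y}$ --- exactly the structure that forces the Poisson-type rate functional in the statement.

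Finally, since $\Lambda$ is finite, G\^ateaux-differentiable and essentially smooth on bounded test functions, and since $\{\tilde{\mathcal A}_N\}$ is exponentially tight at speed $N$ (the total mass $\|\tilde{\mathcal A}_N\|_1$ is a normalized sum of independent essentially-Poisson contributions, and a cutoff $\kappa\mapsto\kappa\wedge K$ controls the $L^1$ tail of $\kappa$), an abstract G\"artner--Ellis / Dawson--G\"artner projective-limit argument gives the LDP $\mathbb P_{\aleph}(\tilde{\mathcal A})\asymp\exp[-N\,\mathcal I(\tilde{\mathcal A})+o(N)]$ with good rate functional $\mathcal I=\Lambda^{*}$, the convex conjugate. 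To evaluate it I would maximize $\langle\tilde{\mathcal A},h\rangle-\Lambda(h)$ pointwise in $h$: the supremum is $+\infty$ unless $\tilde{\mathcal A}\ll\mu_{\mathcal X}\otimes\mu_{\mathcal Y}$ with density supported where $\kappa>0$, and otherwise the stationarity condition $d\tilde{\mathcal A}/d(\mu_{\mathcal X}\otimes\mu_{\mathcal Y})=\ell\kappa\,e^{h}$ identifies the optimizer $h^{\star}=\log\!\big(\tilde{\mathcal A}/(\ell\kappa)\big)$, whence
\[
\mathcal I(\tilde{\mathcal A})=\iint_{\mathcal X\times\mathcal Y}\!\Big(\tilde{\mathcal A}\log\tfrac{\tilde{\mathcal A}}{\ell\kappa}-\tilde{\mathcal A}+\ell\kappa\Big)\,d\mu_{\mathcal X}\,d\mu_{\mathcal Y}=D(\tilde{\mathcal A}\Vert\ell\kappa)+\ell\|\kappa\|_1-\|\tilde{\mathcal A}\|_1 .
\]
Since $\mathcal I\ge0$ vanishing only at $\tilde{\mathcal A}=\ell\kappa$, the LDP forces $\tilde{\mathcal A}_N\to\ell\kappa$ in probability, i.e.\ the claimed convergence in distribution to a deterministic limit. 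The hard part will be making the infinite-dimensional LDP rigorous: the weak topology on $\mathcal M(\mathcal X\times\mathcal Y)$ is not metrizable on unbounded sets, so one must descend to finite-dimensional projections $\tilde{\mathcal A}_N\mapsto(\langle\tilde{\mathcal A}_N,h_1\rangle,\dots,\langle\tilde{\mathcal A}_N,h_k\rangle)$, apply finite-dimensional G\"artner--Ellis there, and lift via the projective-limit theorem together with exponential tightness; and if the latent types are treated as genuinely random rather than quenched, one must also check that their own Sanov-type fluctuations (also at speed $N$) leave $\mathcal I$ unchanged, which holds because they concentrate at $(\mu_{\mathcal X},\mu_{\mathcal Y})$ and $\Lambda$ depends continuously on the type measures. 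The routine Stirling/Taylor bookkeeping and these topological details would be relegated to the appendix.
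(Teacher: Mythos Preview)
Your proposal is correct and follows essentially the same route as the paper: the paper's proof is a brief sketch that first passes to the Poisson limit of the Binomial edge multiplicities and then invokes ``the Sanov theorem for product Poisson measures (or alternatively, a G\"artner--Ellis theorem under suitable regularity)'' to obtain the LDP with the stated rate functional, without further detail. You have chosen the G\"artner--Ellis alternative and carried it out explicitly --- computing the scaled cumulant generating functional, identifying it as the Poisson log-Laplace functional, taking the Legendre transform pointwise, and flagging the projective-limit and exponential-tightness technicalities --- which is considerably more thorough than the paper's own argument but not a different strategy.
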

\begin{proof}
	Since \( \ell \to \infty \), we have \( a_{ij}^{(\ell)} \xrightarrow{\mathsf{d}} \text{Poisson}(\kappa(x_i,y_j)/N) \) by the classical Poisson limit of Binomials.
	The Sanov theorem for product Poisson measures (or alternatively, a G\"{a}rtner-Ellis theorem under suitable regularity) \cite{dembo2009large,leonard2000large} implies that \( \tilde{\mathcal{A}}_N \) satisfies a large deviations principle in the weak topology of \( L^1(\mathcal{X} \times \mathcal{Y}) \), at speed \( N \), with good rate functional:
	\[
	\mathcal{I}(\tilde{\mathcal{A}}) = \iint_{\mathcal{X} \times \mathcal{Y}} \left[ \tilde{\mathcal{A}}(x,y) \log\left( \frac{\tilde{\mathcal{A}}(x,y)}{\ell \kappa(x,y)} \right) - \tilde{\mathcal{A}}(x,y) + \ell \kappa(x,y) \right] \, d\mu_{\mathcal{X}}(x) \, d\mu_{\mathcal{Y}}(y).
	\]
	This is the relative entropy between \( \tilde{\mathcal{A}} \) and \( \kappa \), and it is nonnegative, lower semicontinuous, and vanishes if and only if \( \tilde{\mathcal{A}} =\ell  \kappa \) almost everywhere.	
	Therefore, for any measurable set \( A \subseteq L^1 \), the large deviations principle holds:
	\[
	\mathbb{P}_{\aleph}(\tilde{\mathcal{A}}_N \in A) \asymp \exp\left( -N \cdot \inf_{\tilde{\mathcal{A}} \in A} \mathcal{I}(\tilde{\mathcal{A}}) \right).
	\]
	
\end{proof}

In probabilistic ensembles, \( \|\mathcal{\mathcal{\tilde{A}}}\|_1 \)  represents the {realization} of edge mass, while \( \ell \|\kappa\|_1 \) gives the {expected} edge mass under the model. The rate functional \( \mathcal{I}(\mathcal{\tilde{A}}) \) quantifies how different the observed link intensity \( \mathcal{\tilde{A}} \) is from the model kernel \( \kappa \), both in terms of information content and probabilistic likelihood.

Because the empirical graphon $\widetilde{\mathcal{A}}_N(x, y)$ in sparse regime is built with Dirac deltas,
evaluating it on the \( (i,j) \)-th latent pair simply divides the discrete multiplicity by \( N \)
$
\widetilde{\mathcal{A}}_N(x_i, y_j) = {a_{ij}}/{N}$.
That is, each entry \(a_{ij}\) records an exact count of links, while \(\tilde{\mathcal{A}}_{N}(x_i,y_j)=a_{ij}/N\) is a link density as the same count rescaled by the network size \(N\).
The first description keeps the data as raw integers appropriate for a finite set of nodes; the second turns it into a flow density that remains \(\mathcal{O}(1)\) as \(N\) grows.
Hence the two quantities carry exactly the same information, one at a discrete scale, the other at a continuum-normalized scale.

Fixing the total number of links, we get the graph constraint to be:
\[
\sum_{i,j} a_{ij} = n \quad \Longleftrightarrow \quad \|\widetilde{\mathcal{A}}_N\|_1 
= \iint \widetilde{\mathcal{A}}_N(x, y)\, d\mu_{\mathcal{X}}(x)\, d\mu_{\mathcal{Y}}(y) = \frac{n}{N},
\]
so the {total} number of links \( n = \mathcal{O}(N) \) in the sparse continuum model becomes an \( \mathcal{O}(1) \) “mass” for the rescaled graphon.

Under suitable regularity and normalization,  in the large-\( N \) limit, the coarse-grained probability \( \mathbb{P}_{\mathcal{C}}(\mathcal{A}) \) of observing a configuration converging to \( \tilde{\mathcal{A}} \) behaves according to the following statement:

\begin{theorem}[Large Deviation Graphon under Fixed Mass]\label{theo_graphon}
Let \( \tilde{\mathcal{A}} : \mathcal{X} \times \mathcal{Y} \to [0, \infty) \) be an admissible empirical graphon, and let \( \kappa : \mathcal{X} \times \mathcal{Y} \to [0, \infty) \) be the model kernel. Suppose both functions are integrable with respect to the product measure \( \mu \otimes \nu \).  Under the constraint the total mass of the graph is fixed as:
\[
\|\widetilde{\mathcal{A}}_N\|_1 
=\ell \|\widetilde{\kappa}\|_1 
=\frac{n}{N}
\]
then, the coarse-grained model converges to the continuum configurational ensemble:
\[
\mathbb{P}_{\mathcal{C}}(\mathcal{A} \big | n)
\quad \longleftrightarrow \quad
\mathbb{P}_{\aleph}(\tilde{\mathcal{A}}  \big | \|\tilde{\mathcal{A}}\|_1 = n/N)
\asymp
\exp\left[ -N \cdot D(\tilde{\mathcal{A}} \| \ell \kappa) \right].
\]

where we can define the probability density as:
\begin{equation}
\pi_{ij}=\frac{\ell N }{n}{\kappa(x_i,y_j)} 
\end{equation}

 with the kernel evaluated at the sampled points $(x_i,y_j)$.
\end{theorem}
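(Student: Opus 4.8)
The plan is to obtain the statement as a \emph{conditioned} version of the large deviations principle of Proposition~\ref{conti_ensamble}, and then to reconcile the resulting conditional rate with the multinomial limit already recorded for the discrete ensembles in Theorem~\ref{th_macromicro}.

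First I would invoke Proposition~\ref{conti_ensamble} in unconditional form, $\mathbb{P}_{\aleph}(\tilde{\mathcal{A}}_N \approx \tilde{\mathcal{A}}) \asymp \exp[-N\,\mathcal{I}(\tilde{\mathcal{A}})]$ with $\mathcal{I}(\tilde{\mathcal{A}}) = D(\tilde{\mathcal{A}}\|\ell\kappa) + \ell\|\kappa\|_1 - \|\tilde{\mathcal{A}}\|_1$, recalling that $\mathcal{I}$ is a good rate functional on weak-$L^1$, nonnegative, with its unique zero at $\tilde{\mathcal{A}}=\ell\kappa$. Then I would apply the conditioning principle for large deviations to the total-mass event $C := \{B\in L^1 : \|B\|_1 = n/N\}$:
\[
\mathbb{P}_{\aleph}\bigl(\tilde{\mathcal{A}}_N \approx \tilde{\mathcal{A}} \,\bigm|\, \tilde{\mathcal{A}}_N\in C\bigr) \;\asymp\; \exp\!\Bigl[-N\bigl(\mathcal{I}(\tilde{\mathcal{A}}) - \inf_{B\in C}\mathcal{I}(B)\bigr)\Bigr].
\]
The hypothesis fixes $n/N = \ell\|\kappa\|_1$, so on $C$ the two affine terms of $\mathcal{I}$ cancel, leaving $\mathcal{I}(B) = D(B\|\ell\kappa)$ for every $B\in C$; and since $\|\ell\kappa\|_1 = \ell\|\kappa\|_1 = n/N$ the kernel $\ell\kappa$ itself belongs to $C$, so $\inf_{B\in C}\mathcal{I}(B) = D(\ell\kappa\|\ell\kappa) = 0$. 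This yields the conditional asymptotics $\mathbb{P}_{\aleph}(\tilde{\mathcal{A}}\mid \|\tilde{\mathcal{A}}\|_1 = n/N) \asymp \exp[-N\,D(\tilde{\mathcal{A}}\|\ell\kappa)]$, the right-hand side of the claim, and it identifies the typical graphon as $\ell\kappa$, whose normalization to a probability density on $\mathcal{X}\times\mathcal{Y}$ is $\kappa/\|\kappa\|_1 = (\ell N/n)\kappa$, so that $\pi_{ij} = (\ell N/n)\kappa(x_i,y_j)$ after evaluation at the sampled points.

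It remains to connect this to the discrete side. By Theorem~\ref{th_macromicro}, in the sparse regime $\mathbb{P}_{\mathcal{C}}(\mathcal{A}\mid n) \sim n!\prod_{i,j}\pi_{ij}^{a_{ij}}/a_{ij}!$ with $\pi_{ij} = p_{ij}/\sum_{k,l}p_{kl}$; substituting the continuum edge weights $p_{ij}=\kappa(x_i,y_j)/N$ (sparse regime) and invoking the law of large numbers $N^{-2}\sum_{k,l}\kappa(x_k,y_l)\to\|\kappa\|_1$ turns these weights into the rescaled kernel density $\kappa/\|\kappa\|_1$, while Corollary~\ref{coro_ell} rewrites the normalization through the fixed total $n=\langle n\rangle$, recovering the closed form above. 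Finally, putting $a_{ij}=N\,\tilde{\mathcal{A}}_N(x_i,y_j)$ and applying Stirling to this sparse multinomial (equivalently, Sanov's theorem for the empirical measure of the $N^2$ near-Poisson entries carrying total mass $\mathcal{O}(N)$) produces exponential decay at speed $N$ with rate exactly $D(\tilde{\mathcal{A}}\|\ell\kappa)$ once the fixed mass removes the linear terms — the same functional as on the continuum side, which closes the equivalence $\mathbb{P}_{\mathcal{C}}(\mathcal{A}\mid n)\longleftrightarrow\mathbb{P}_{\aleph}(\tilde{\mathcal{A}}\mid \|\tilde{\mathcal{A}}\|_1=n/N)$.

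The step I expect to be most delicate is making the conditioning rigorous in the weak $L^1$ topology: there $B\mapsto\|B\|_1$ is only lower semicontinuous (mass may leak to infinity), so conditioning on the exact level set must be performed through a shrinking window $\{\,\bigl|\,\|\tilde{\mathcal{A}}_N\|_1 - n/N\,\bigr|\le\delta\,\}$ with $\delta\downarrow 0$, using goodness/exponential tightness of $\mathcal{I}$ and the nonnegativity of $\tilde{\mathcal{A}}$ to rule out escape of mass and to interchange the $N\to\infty$ and $\delta\to 0$ limits. A secondary, purely bookkeeping difficulty is keeping the powers of $\ell$ and $N$ aligned between the discrete per-slot probabilities $p_{ij}$ and the continuum kernel $\kappa$, so that the discrete pmf $\pi_{ij}$ (which sums to one over the $N^2$ pairs) and its continuum density form $(\ell N/n)\kappa$ are matched consistently, and likewise the two rate functionals.
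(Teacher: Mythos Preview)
Your proposal is correct and follows essentially the same route as the paper: both start from the unconditional LDP of Proposition~\ref{conti_ensamble}, impose the fixed-mass constraint so that the affine terms drop out and only $D(\tilde{\mathcal{A}}\|\ell\kappa)$ remains, derive $\pi_{ij}$ as the normalized kernel, and then match with the coarse-grained multinomial of Theorem~\ref{th_macromicro} via Stirling. The paper's proof is terser (it dismisses the conditioning as ``just a special case'' and moves directly to the Stirling computation on the multinomial), whereas you spell out the contraction/conditioning principle explicitly and flag the weak-$L^1$ topology issue; that extra care is welcome but does not constitute a different argument.
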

\begin{proof}
	This is just a special case of Proposition \ref{conti_ensamble}, and the probability density can be recovered as:
	\begin{equation*}
	\pi_{ij}=\frac{p_{ij}}{\sum_{ij}p_{ij}}\to \frac{\ell \kappa(x,y)}{\ell \| \kappa \|_1} \Rightarrow  \pi_{ij}=\frac{\ell N }{n}{\kappa(x_i,y_j)} 
	\end{equation*} 	
	Let \(A=\{a_{ij}\}_{1\le i,j\le N}\) be a discrete configuration with total mass \(n=\sum_{i,j}a_{ij}\) and let its continuum rescaling be \(\widetilde A:=A/N\); by definition the \(L^{1}\)-mass of the rescaled matrix is \(\|\widetilde A\|_{1}=\tfrac{1}{N}\sum_{i,j}a_{ij}=n/N\). In the large-deviation expression \(\exp[-N\,D(\widetilde A\|\ell\kappa)]\) the reference profile \(\ell\kappa\) must carry the same mass as \(\widetilde A\); hence one sets \(\ell:=\|\widetilde A\|_{1}\). Substituting the preceding identity for \(\|\widetilde A\|_{1}\) yields \(\ell=n/N\), which rearranges to the equality \(n=\ell N\). Thus the relation follows directly from the definition of \(\ell\) as the empirical average load per node. Starting from the coarse-grained probability \(\mathbb{P}_C(A \mid n) = \frac{n!}{\prod_{i,j} a_{ij}!} \prod_{i,j} \pi_{ij}^{a_{ij}}\), and applying Stirling’s approximation \(\log a_{ij}! \approx a_{ij} \log a_{ij} - a_{ij}\), one obtains \(\log \mathbb{P}_C(A \mid n) \approx -n D(A/n \| \pi)\), where \(\pi_{ij} = \ell \kappa(x_i, y_j)/\sum_{i,j} \kappa(x_i, y_j)\). Substituting \(\ell = n/N\) gives \(\pi_{ij} = \kappa(x_i, y_j)/\sum_{i,j} \kappa(x_i, y_j)\), and changing variables from \(A/n\) to \(\widetilde A = A/N\) yields \(D(A/n \| \pi) = D(\widetilde A \| \ell\kappa)\), so that \(\mathbb{P}_C(A \mid n) \asymp \exp[-N D(\widetilde A \| \ell\kappa)] = \mathbb{P}_{\aleph}(\widetilde A \big | \|\widetilde A\|_1 = \ell)\), proving the equivalence between the coarse-grained and continuum-grained probability representations.

\end{proof}

		\section{Occupancy Problem \& Degree Distributions}\label{sec_occupancy}
		In occupancy theory, objects (balls) of various types are randomly assigned to containers (boxes or bins), which may differ in type and capacity. Depending on the setting, bins can have finite or infinite capacity, and the balls can be either distinguishable (ordered) or indistinguishable (unordered) (cf.\ \cite{johnson1977urn,feller1971introduction}). This framework allows for the study of several important random variables, such as \( N_k \), the number of urns that contain exactly \( k \) balls.  		
		Urn models provide a natural and general framework for formulating and analyzing occupancy problems, with many classical problems in this area arising as specific cases of urn-based formulations.
		
		The allocations of visiting patterns in human mobility graphs can be formalized in combinatorial terms as an occupation problem of balls into bins, where bins (nodes) represent the destination locations and balls (links) are the travelers originating from different locations (colors). So each mobility graph can be seen as a balls-into-bins setting. The graph is directed since  locations are the nodes of the networks seen as origin destinations (bins or boxes) and as origin locations as well (which define the ball type as colors).

\subsection{Fine-grained occupancy distribution}		
		We aim to derive the degree distribution of the origin-destination mobility graph \(\mathcal{G}\) by modeling it as an occupancy problem. Specifically, consider \(n\) balls selected from a total of \(M\) balls, which are independently thrown into \(N\) bins. 	If there are k balls in a particular box, then $k$ is the occupancy number of the box. Occupancy problems arise when analyzing the distribution of \(N_k\), which represents the number of bins that contain exactly \(k\) balls, where \(k = 0, 1, \ldots, n\). Consequently, \( N_k \) is the number of boxes containing exactly \( k \) balls, then  the occupancy distribution $P_n(k)=\frac{\mathbb{E}[N_k]}{N}$ is  the fraction of the expected number of boxes with occupancy number $k$.
		
		From a microscopic perspective (fine-grained  ensemble), each ball is placed  into one of the bins sequentially, with the proportion of balls being assigned to the \( i \)-th bin given by \( \pi_i=\sum_{j}m_{ij}/NM \), so that the proportions will sum to 1, i.e. \( \pi_1 + \pi_2 + \cdots + \pi_N = 1 \).

		\begin{theorem}[Fine-grained occupancy]\label{th_microcanoccu}
			Let \( Z^{(n)}_i \), the occupancy umber, which denotes the number of balls placed into box \( i \) when \( n \) total  balls (visits) have been drawn, regardless of type, from an urn containing \( M \) balls. The joint probability mass function (pmf) of the random variables \( (Z^{(n)}_1, Z^{(n)}_2, \dots, Z^{(n)}_N) \) is the joint distribution of the occupancy numbers which is given by:			
			\[
			\Pr\bigg[\bigcap_{i=1}^{N}(Z^{(n)}_{i}=k_{i})\bigg] = P_m(k_1, k_2, \ldots, k_N) = \tfrac{n!}{\prod_{i=1}^N k_i!} \prod_{i=1}^N \pi_i^{k_i} \cdot \tfrac{\prod_{j=0}^{k_i-1} \left(1 - \frac{j}{\pi_i NM} \right)}{\prod_{j=0}^{n-1} \left(1 - \frac{j}{NM} \right)}
			\]
			
			where \( k_1, k_2, \dots, k_N \) are nonnegative integers such that \( \sum_{i=1}^{N} k_i = n \), and \( \pi_i \) represents the proportion of the total number of balls placed into box \( i \), given by $
			\pi_i = \sum_{j=1}^{N} \pi_{ij}$.
			
			For large \( N \), the expected number of boxes containing exactly \( k \) balls, denoted by \( \mathbb{E}[N_k] \), is given by:
			
			\[
			\mathbb{E}[N_k] = \sum_{i=1}^N \Pr(Z^{(n)}_i=k) = \sum_{i=1}^N \frac{n!}{k! (n - k)!} \pi_i^k (1 - \pi_i)^{n - k} \left( 1+ \mathcal{O}\left(\tfrac{n^2}{\mathfrak{n}N^2}\right) \right)
			\]
			
			where \(\Pr(Z_i^{(n)}=k) \) is the marginal probability that box \( i \) contains exactly \( k \) balls.
		\end{theorem}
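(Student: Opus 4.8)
The plan is to recognise the aggregated allocation as a multivariate hypergeometric law, after which the joint formula is an exact algebraic rearrangement and the $\mathbb{E}[N_k]$ formula is a Stirling/Taylor expansion of its one-dimensional marginals.

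\emph{Step 1 (identify the law).} Summing the origin index out of the fine-grained law of Proposition~\ref{prop_fine}, the destination occupancy $Z^{(n)}_i$ counts how many of the $n$ links drawn uniformly without replacement from the pool of $NM=\sum_{i,j}m_{ij}$ slots land in the $i$-th destination, whose aggregate capacity is $m_i:=\sum_j m_{ij}=\pi_i NM$. Hence $(Z^{(n)}_1,\dots,Z^{(n)}_N)$ is multivariate hypergeometric, $\Pr[\bigcap_i\{Z^{(n)}_i=k_i\}]=\binom{NM}{n}^{-1}\prod_{i=1}^N\binom{m_i}{k_i}$ on $\{\sum_i k_i=n\}$. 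I would obtain this either by applying the Chu--Vandermonde identity to $\sum_{\{a_{ij}\,:\,\sum_j a_{ij}=k_i\}}\mathbb{P}_m(\{a_{ij}\})$, or directly by colouring each of the $NM$ slots by its destination and invoking the standard urn description of sampling without replacement.

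\emph{Step 2 (the exact joint pmf).} Expand the binomials into falling factorials, $\binom{m_i}{k_i}=\frac{m_i^{k_i}}{k_i!}\prod_{j=0}^{k_i-1}\!\big(1-\tfrac{j}{m_i}\big)$ and $\binom{NM}{n}=\frac{(NM)^n}{n!}\prod_{j=0}^{n-1}\!\big(1-\tfrac{j}{NM}\big)$. Because $\sum_i k_i=n$ and $m_i=\pi_i NM$, one has $\prod_i m_i^{k_i}=(NM)^n\prod_i\pi_i^{k_i}$, so all powers of $NM$ cancel and the remaining factors assemble into exactly the stated expression, with correction ratio $\prod_i\prod_{j=0}^{k_i-1}(1-\tfrac{j}{\pi_i NM})\big/\prod_{j=0}^{n-1}(1-\tfrac{j}{NM})$. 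This is pure bookkeeping; no asymptotics enter.

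\emph{Step 3 (marginal and binomial reduction).} The marginal of a multivariate hypergeometric is univariate hypergeometric, so $\Pr[Z^{(n)}_i=k]=\binom{n}{k}\,m_i^{\underline{k}}(NM-m_i)^{\underline{n-k}}\big/(NM)^{\underline{n}}$. Taking logarithms, the leading term is $\binom{n}{k}\pi_i^k(1-\pi_i)^{n-k}$ and the logarithm of the remainder equals $\sum_{j=0}^{k-1}\log(1-\tfrac{j}{m_i})+\sum_{j=0}^{n-k-1}\log(1-\tfrac{j}{NM-m_i})-\sum_{j=0}^{n-1}\log(1-\tfrac{j}{NM})=-\tfrac{1}{2}\tfrac{k(k-1)}{m_i}-\tfrac{1}{2}\tfrac{(n-k)(n-k-1)}{NM-m_i}+\tfrac{1}{2}\tfrac{n(n-1)}{NM}+\cdots$. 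In the sparse regime of Theorem~\ref{th_macromicro} ($N\to\infty$, $n\ll\mathfrak{n}$) and for the bounded values of $k$ relevant to a degree distribution, each summand is $\mathcal{O}(n^2/(NM))=\mathcal{O}(n^2/(\mathfrak{n}N^2))$ uniformly in $i$ (note $NM=\mathfrak{n}N^2$ and $m_i=\Theta(\mathfrak{n}N)\to\infty$); exponentiating gives $\Pr[Z^{(n)}_i=k]=\binom{n}{k}\pi_i^k(1-\pi_i)^{n-k}\big(1+\mathcal{O}(n^2/(\mathfrak{n}N^2))\big)$. Finally, writing $N_k=\sum_i\mathbf{1}\{Z^{(n)}_i=k\}$ and using linearity of expectation gives $\mathbb{E}[N_k]=\sum_i\Pr[Z^{(n)}_i=k]$, and pulling the uniform error out of the sum yields the claim.

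\textbf{Main obstacle.} The only subtle point is Step 3: the $\mathcal{O}(n^2/(\mathfrak{n}N^2))$ estimate must be \emph{uniform over all destinations}, and the worst case is a low-capacity box, where $m_i$ is smallest and $\tfrac{k(k-1)}{2m_i}$ largest; one must therefore read off from the proportionate-stratification hypothesis that $m_i=\Theta(\mathfrak{n}N)$ (so $m_i\gg k^2$) while keeping $k$ fixed. It also helps to observe that the two leading contributions $-\tfrac{(n-k)(n-k-1)}{2(NM-m_i)}$ and $+\tfrac{n(n-1)}{2NM}$ nearly cancel, so $\mathcal{O}(n^2/(\mathfrak{n}N^2))$ is a safe over-estimate; everything in Steps~1, 2 and the final summation is exact identity or linearity of expectation.
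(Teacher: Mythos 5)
Your proposal is correct and follows essentially the same route as the paper: aggregation to a multivariate hypergeometric law via the Vandermonde identity, falling-factorial expansion for the exact joint pmf, asymptotic expansion of the univariate hypergeometric marginal, and linearity of expectation over indicator variables. Your Step 3 matches the paper's footnote derivation (direct binomial-coefficient approximation of the hypergeometric marginal), and your explicit attention to uniformity of the error over low-capacity boxes is a welcome refinement the paper leaves implicit.
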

		
		\begin{proof}
			The proof is shown in the appendix~\ref{app_th_microcanoccu}
		\end{proof}
This expression represents the expected number of boxes with exactly \( k \) balls, incorporating the finite-population corrections from the hypergeometric-like structure,  providing a precise approximation of the expected count with an explicit big-O error bound. 

In the multinomial limit of $\mathfrak{n}\to \infty$, the framework can be treated as  a classic  "balls into bins" problem of finding the distribution of $N_k$ when there are $N$ urns and each ball will fall  into any 	particular bin $j$ according to constant proportions  $\pi_j$ and  independently from other bins.
So the following  corollary recovers the occupancy distribution in the condition of pure multinomial bin-into-bins setting.
	\begin{remark}[Asymptotic limit]\label{coro_asym_fin}
			Let \( \tilde{Z}^{(n)}_i \) denote the number of balls placed into box \( i \) when \( n \) total balls (visits) have been drawn, regardless of type, from an  infinite reservoir of undistinguishable balls i.e. $ \mathfrak{n} \to \infty$. 
	 Then, the occupancy distribution  is a pure binomial mixture:
		\begin{align}
		P_n(k)=\tfrac{\mathbb{E}[N_k]}{N}= & \tfrac{1}{N}\sum_{i=1}^{m} \binom{n}{k}\pi_i^k (1 - \pi_i)^{n-k} =\tfrac{1}{N} \sum_{i=1}^N\frac{\left(n\pi_i \right)^k}{k!}e^{-n\pi_i}\left( 1+\mathcal{O}(n\pi_i^2)\right)
		\end{align}
		which represents the	expected probability mass function (PMF) of the visit distribution, under the condition of large and sparse networks $N\to \infty$ and $n\ll \mathfrak{n}$.
	\end{remark}	
\begin{proof}
In this case, we allow the boxes to have heterogeneous occupancy proportions $\pi_i$. 
Following the generating function approach to occupancy distributions presented in \cite[Ch.~3]{johnson1977urn}, we derive the corresponding distribution. 
Since the proof is original and involves non-trivial steps that go beyond the standard results in the reference, we present the full derivation in Appendix~\ref{app_asym_fine}.

\end{proof}

Let us notice that in balls-and-bins problems, a primary challenge arises from the dependencies among the bins: if we know the number of balls in the first \( N - 1 \) bins, the number of balls in the final bin is fully determined. This dependency complicates analysis since the bin loads are not independent, and independent random variables are typically far easier to work with. A powerful technique to address these dependencies is \textit{Poissonization}, which introduces independent Poisson variables to approximate the original system.
The key distinction between throwing \( M \) balls with independent probabilities \( p_{ij} \) and assigning each bin an independently Poisson-distributed count with mean \( \sum_{j=1}^M p_{ij} \) is that, in the first case, we have exactly \( M \) balls in total, whereas in the Poissonized version, \( M \) only represents the expected total number of balls across bins. However, by conditioning on the event that the sum of the independent Poisson variables equals \( M \), we can approximate the system as if we were tossing \( M \) balls into \( N \) bins independently according to the probabilities \( p_{ij} \).

\subsection{Coarse-grained occupancy distribution}	
		From a macroscopic (coarse-grained) perspective, we consider throwing balls into \( N \) bins, where each ball \( j \) lands in bin \( i \) according to the probabilities \( p_{ij} \), for \( i,j = 1, \ldots, N \). And we launched a total of \( n \) balls.
		
\begin{theorem}[Coarse-grained occupancy]\label{th_canoccu}
			Let \( Y_i^{(\ell)} \) represent the number of balls in the \( i \)-th bin after $\ell$ attempts, where \( 1 \leq i \leq N \), then the joint probability mass function of the macroscopic random variables \( (Y_1^{(\ell)}, \ldots, Y_N^{(\ell)}) \), conditioned on \( \sum_{i} Y_i^{(\ell)} = n \), is the joint distribution of the occupancy numbers which is given by:
			\begin{align*}
			\Pr\bigg[\bigcap_{i=1}^{N}(Y^{(\ell)}_{i}=k_{i})\Big | \sum_{i} Y_i^{(\ell)} = n \bigg] 
			=& P_C(k_1, k_2, \dots, k_N \big | \sum_{i=1}^N k_i=n)  \\
			=&   \tfrac{n!}{\prod_{i=1}^N k_i!} \prod_{i,j=1}^N \bigg( \tfrac{p_{ij}}{\sum_{j=1}^N p_{ij}} \bigg)^{k_i} 
			\left[ 1 + \mathcal{O}\left(  \tfrac{k_i^2}{\ell} \right) \right],
			\end{align*}
			where \( k_1, k_2, \dots, k_N \) are nonnegative integers such that \( \sum_{i=1}^{N} k_i = n \).
			
			In particular, for large and sparse networks, the expected number of bins containing exactly \( k \) balls, denoted by \( \mathbb{E}[N_k] \), is given by:
			\begin{align}
			\mathbb{E}[N_k] &= \sum_{i=1}^N \Pr(Y^{(\ell)}_i=k \mid \sum Y^{(\ell)}_i=n) 
			= \sum_{i=1}^N \binom{n}{k} \nu_i^k (1-\nu_i)^{n-k} 
			\bigg[ 1 + \mathcal{O}\left( \tfrac{k^2}{\ell} \right) \bigg] 
			\end{align}
			where \( \nu_i = \frac{\sum_j p_{ij}}{\sum_i\sum_{j}p_{ij}} \) represents the (normalized) probability of a ball landing in bin \( i \).
\end{theorem}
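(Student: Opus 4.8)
The plan is to reduce the coarse-grained bin loads to a multinomial vector by marginalizing the already-established pairwise conditional law of Theorem~\ref{th_macromicro}, and then to refine the error term by a careful Stirling expansion in the style of that theorem's appendix proof. First I would observe that in the coarse-grained ensemble the bin load is $Y_i^{(\ell)}=\sum_{j=1}^N a_{ij}$, where the $a_{ij}$ are independent with $a_{ij}\sim\mathrm{Binomial}(\ell,p_{ij})$, so the vector $(Y_1^{(\ell)},\dots,Y_N^{(\ell)})$ depends on disjoint blocks of the $a_{ij}$ and hence has independent coordinates. By Theorem~\ref{th_macromicro}, conditioning $\mathcal{A}=\{a_{ij}\}$ on $\sum_{i,j}a_{ij}=n$ makes it asymptotically $\mathrm{Multinomial}\big(n;\{\pi_{ij}\}\big)$ with $\pi_{ij}=p_{ij}/\sum_{i,j}p_{ij}$, uniformly over the relevant sparse configurations. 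Marginalizing over the origin index $j$ — summing over all $\{a_{ij}\}_{j}$ with $\sum_j a_{ij}=k_i$ for each $i$ — the multinomial theorem collapses $\sum_{\sum_j a_{ij}=k_i}\tfrac{k_i!}{\prod_j a_{ij}!}\prod_j\pi_{ij}^{a_{ij}}=\big(\sum_j\pi_{ij}\big)^{k_i}=\nu_i^{k_i}$, so $(Y_1^{(\ell)},\dots,Y_N^{(\ell)})$ conditioned on $\sum_i Y_i^{(\ell)}=n$ is $\mathrm{Multinomial}(n;\nu_1,\dots,\nu_N)$ with $\nu_i=\sum_j p_{ij}/\sum_{i,j}p_{ij}$, which is exactly the leading term of the statement.

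To upgrade the $O(1/\ell)$ remainder of Theorem~\ref{th_macromicro} to the sharper $1+O(k_i^2/\ell)$ claimed here, I would return to the exact binomial pmf of each $a_{ij}$ and expand as in the appendix of that theorem: the ratio of $\binom{\ell}{a}p^a(1-p)^{\ell-a}$ to its Poisson surrogate $\tfrac{(\ell p)^{a}}{a!}e^{-\ell p}$ equals $1+O(a^2/\ell)+O(\ell p^2)$, with the first term coming from $\binom{\ell}{a}/\ell^{a}=1-O(a^2/\ell)$ and the second from $(1-p)^{\ell-a}e^{\ell p}$. Conditioning on $Y_i^{(\ell)}=k$ pins $\sum_j a_{ij}=k$, and since $\sum_j a_{ij}^2\le\big(\sum_j a_{ij}\big)^2=k^2$, aggregating these per-pair corrections over $j$ produces a single factor $1+O(k^2/\ell)$ for bin $i$; dividing by the normalizer $\Pr\!\big(\sum_i Y_i^{(\ell)}=n\big)\sim e^{-\Lambda}\Lambda^{n}/n!$ with $\Lambda=\ell\sum_{i,j}p_{ij}$ contributes only an overall $1+O(1/\ell)$ that is absorbed into the same order. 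This yields the multiplicative form in the first display of the theorem.

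Finally, writing $N_k=\sum_{i=1}^N\mathbf{1}\{Y_i^{(\ell)}=k\}$ and using linearity gives $\mathbb{E}\!\big[N_k\,\big|\,\sum_i Y_i^{(\ell)}=n\big]=\sum_{i=1}^N\Pr\!\big(Y_i^{(\ell)}=k\,\big|\,\sum_i Y_i^{(\ell)}=n\big)$, and the marginal of a single coordinate of $\mathrm{Multinomial}(n;\nu_1,\dots,\nu_N)$ is $\mathrm{Binomial}(n,\nu_i)$, so $\mathbb{E}[N_k]=\sum_i\binom{n}{k}\nu_i^{k}(1-\nu_i)^{n-k}\,[1+O(k^2/\ell)]$; as in Remark~\ref{coro_asym_fin}, the additional sparse approximation $\binom{n}{k}\nu_i^{k}(1-\nu_i)^{n-k}\sim\tfrac{(n\nu_i)^k}{k!}e^{-n\nu_i}$ then exhibits the mixed-Poisson degree law. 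The main obstacle I anticipate is not the leading-order multinomial — that is Theorem~\ref{th_macromicro} after marginalizing origins — but making the error accounting honest: one must check that conditioning on the total (dividing by $\Pr(\sum_i Y_i^{(\ell)}=n)$) does not inflate the per-pair Stirling remainders beyond $O(k_i^2/\ell)$, which I would handle by manipulating exact pmf ratios with uniform Stirling bounds rather than abstract total-variation estimates, exactly in the manner used for Theorem~\ref{th_macromicro} and deferred to its appendix.
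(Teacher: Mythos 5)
Your proposal is correct and follows essentially the same route as the paper's proof in Appendix~\ref{app_th_canoccu}: both start from the conditional multinomial law of Theorem~\ref{th_macromicro}, sum over all configurations $\{a_{ij}\}$ with $\sum_j a_{ij}=k_i$ (the multinomial-theorem collapse to $\nu_i^{k_i}$), then marginalize to a Binomial$(n,\nu_i)$ coordinate and sum over $i$ by linearity. Your treatment is somewhat more explicit about where the $O(k_i^2/\ell)$ remainder comes from and why conditioning on the total does not degrade it, which the paper only gestures at, but the underlying argument is the same.
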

		\begin{proof}
			The proof is presented in the appendix\ref{app_th_canoccu}.
		\end{proof}

In statistical mechanics of random graphs, the assumption of \emph{self-averaging} is commonly invoked when analyzing macroscopic properties such as degree distributions, centrality measures, or path statistics (cf.\ \cite{SELFroy2006small,coolen2017generating}).\footnote{The self-averaging property refers to the phenomenon whereby, in sufficiently large systems, the observable characteristics of a single realization converge to the ensemble average computed over many independent realizations. That is, a single large system behaves statistically like the average over the ensemble.}
We now derive the degree (or occupancy) distribution under a coarse-grained random graph ensemble in the sparse regime, using the self-averaging assumption.
Self-averaging is an asymptotic property stating that the empirical degree distribution observed in a single realization of the graph converges (in probability) to the ensemble-averaged distribution. Specifically, let \( N_k \) denote the number of nodes of degree \( k \) in a graph with \( N \) nodes. Then, as \( N \to \infty \), the indgree (visit) distribution can be written as:
\[
\frac{N_k}{N} \xrightarrow{P} \mathbb{E}\left[ \frac{N_k}{N}  \right]=P(k)=\sum_{\{\mathcal{A}\}}\sum_{i}\mathbb{P}_{\mathcal{C}}(\mathcal{A}) \delta_{k,k_i},
\]
where the sum runs over all possible edge configurations incident to node \( i \), and the delta function selects those resulting in degree \( k \). So this ensemble average is representative of the typical outcome in a single realization. So the occupancy distribution can be recovered as degree distribution by using the self-averaging assumption in a random graph ensemble with coarse-grained link probabilities $p_{ij}$:
\begin{remark}[Self-averaging limit]
	Under the self-averaging assumption, the degree distribution under a given number of total links $n$, is the mixture of poisson distributions:
	\begin{equation}
	P_n(k)=\sum_{i=1}^m P(k_i=k)  \qquad \text{with} \quad P(k_i=k) = \frac{e^{-\langle k \rangle_i}\langle k \rangle_i^k}{k!}
	\end{equation}
	where $\langle k \rangle_i=n\nu_i$.
\end{remark}
\begin{proof}
We follow the generating function approach of \cite[Ch.~3]{coolen2017generating}. 
The method starts from the self-averaging property of macroscopic observables, which allows one to replace averages over realizations by ensemble averages, and then uses the integral representation of the Kronecker delta to enforce degree constraints. 
A subsequent Taylor expansion of exponential terms yields the explicit form of the degree distribution. 
In our case, the compound nature of the allocation process modifies the standard outcome from a Poisson to a compound Poisson law. 
Since the derivation closely parallels the standard case and involves only lengthy but direct algebraic manipulations, we do not present the full proof here.
\end{proof}
	
		It is important to clarify the role and interpretation of {Poissonization} in random graph models, particularly in empirical applications such as human mobility or trade networks.
		
		As discussed in previous sections, {Poissonization} refers to the replacement of a binomial distribution, characterized by a fixed number of independent trials and success probability, with a Poisson distribution of the same mean. This is often done for analytical convenience: the Poisson distribution admits closed-form calculations more easily, and it is especially accurate in the asymptotic regime where the number of potential edges is large and the edge probabilities are small.
		In particular, a second order Poisson approximation of the binomial occupancy distribution  is given by  \cite{barbour1992poisson} as:
		\begin{equation}
		P_n(k)=\sum_{i=1}^N \frac{(n\nu_i)^k}{k!} e^{-n\nu_i} \left( 1 + \mathcal{O}(n\nu_i^2,\, \tfrac{k^2}{n}) \right).
		\end{equation}
		The correction term \( \mathcal{O}(n\nu_i^2,\, k^2/n) \) quantifies the relative error between the Binomial and its Poisson approximation and the approximation is especially accurate when:
		\[
		n\nu_i^2 \ll 1 \quad \text{and} \quad \frac{k^2}{n} \ll 1.
		\]
		Thus, Poissonization introduces a useful tractable approximation of binomial edge formation, especially under sparse and large-graph limits. Nevertheless, it may fail to capture higher-order dependencies inherent in real-world networks.		

		
\subsection{Continuum-grained occupancy distribution}

Let us again interpret the latent variables as intrinsic attributes of a spatial region, capturing various structural and contextual factors. In this framework, the \( n \) colored balls represent individual travelers, where the color encodes the identity or characteristics of their origin locations. The \( N \) bins correspond to destination sites, each associated with a latent attribute \( x_i \), for \( i = 1, \ldots, N \), which quantifies the location's inherent \emph{attractiveness}, that is, its capacity to fulfill needs or offer desirable resources.

We can formalize the problem of occupancy distribution as follows:
\begin{proposition}[Continued-grained occupancy]\label{prop_continuous_occu}
	Let \( \tilde{\mathcal{G}}_{N,\kappa} \) be a probability space over multi-edge random graphs with \( N \) nodes, where each node \( i \in \{1,\dots,N\} \) is associated with an independent latent type \( (x_i, y_i) \in \mathcal{X} \times \mathcal{Y} \), drawn i.i.d.\ from the product measure \( \mu_{\mathcal{X}} \otimes \mu_{\mathcal{Y}} \).
	
	Assume that the empirical graphon $	\tilde{\mathcal{A}}_N(x,y)$
	converges in distribution in \( L^1(\mathcal{X} \times \mathcal{Y}) \) to a limiting graphon \( {\kappa(x,y)} \in L^1(\mathcal{X} \times \mathcal{Y}) \).
Define the limiting expected degree function:
	\[
	\lambda(x) := {\ell} \int_{\mathcal{Y}} \kappa(x, y) \, \mu_{\mathcal{Y}}(dy)
	\]
	
	Then for every fixed \( k \in \mathbb{N}_0 \), the empirical occupancy distribution satisfies
	\[
\frac{N_k}{N}  \xrightarrow{\mathbb{P}} P(k)= \int_{\mathcal{X}} \frac{\lambda(x)^k}{k!} e^{-\lambda(x)} \, \mu_{\mathcal{X}}(dx),
	\]
	where \( N_k \) denotes the number of vertices of degree \( k \) in \( \tilde{\mathcal{G}}_{N,\kappa} \), and the limit distribution is a {mixed Poisson law} with mixing distribution \( \mu_{\mathcal{X}} \circ \lambda^{-1} \).	
 \end{proposition}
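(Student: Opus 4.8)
The plan is to condition on the latent types, recognise the conditional law of a vertex's in-degree as a sum of sparse independent contributions, obtain a Poisson marginal by a law-of-small-numbers estimate, and then promote the resulting convergence of means to convergence in probability through a variance bound. Write $D_i:=\sum_{j=1}^{N}a^{(\ell)}_{ij}$ for the in-degree (destination occupancy) of vertex $i$ and $\Xi:=(x_1,\dots,x_N,y_1,\dots,y_N)$ for the latent configuration. Conditionally on $\Xi$ the pair variables $a^{(\ell)}_{ij}$ are independent, hence the $D_i$ are independent, and each $D_i$ is a sum of $N$ independent nonnegative integer variables of conditional mean $O(1/N)$, with total conditional mean $\mu_i=\sum_j\mathbb E[a^{(\ell)}_{ij}\mid\Xi]$; with the per-pair intensities normalised as in \eqref{eq_edgedense}, the strong law of large numbers applied to the i.i.d.\ sample $y_1,\dots,y_N\sim\mu_{\mathcal Y}$ gives, for $\mu_{\mathcal X}$-a.e.\ destination type $x$, that $\mu_i\to\lambda(x_i)$ with $\lambda$ the expected-degree function of the statement. (If $\kappa$ is merely in $L^1$, run this first on the truncation $\kappa\wedge T$ and let $T\to\infty$, bounding the discarded contribution to $D_i$ by a first-moment estimate together with $\|\kappa-\kappa\wedge T\|_1\to0$; the cap in $p_{ij}=\min\{\kappa(x_i,y_j)/N,1\}$ is active only on an event of probability $o(1)$.)

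Fix $k\in\mathbb N_0$. Decomposing each $a^{(\ell)}_{ij}$ into its constituent Bernoulli trials and applying Le Cam's Poisson-approximation inequality (Chen--Stein method; cf.\ \cite{barbour1992poisson}), conditionally on $\Xi$ the total-variation distance between $\mathcal L(D_i\mid\Xi)$ and $\mathrm{Poisson}(\mu_i)$ is bounded by the sum of the squared per-trial probabilities --- an empirical average of $\kappa^2$ rescaled by a vanishing factor --- and hence tends to $0$ almost surely in the sparse limit (again by the SLLN on the truncated kernel). Combined with $\mu_i\to\lambda(x_i)$ and the continuity of $\lambda\mapsto e^{-\lambda}\lambda^k/k!$, this yields $\Pr[D_i=k\mid x_i=x]\to e^{-\lambda(x)}\lambda(x)^k/k!$ for $\mu_{\mathcal X}$-a.e.\ $x$; integrating against $\mu_{\mathcal X}$ and invoking dominated convergence (the integrand is bounded by $1$) gives $\mathbb E[N_k]/N=\Pr[D_1=k]\to P(k):=\int_{\mathcal X}e^{-\lambda(x)}\lambda(x)^k/k!\,\mu_{\mathcal X}(dx)$.

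To pass from the mean to convergence in probability, write $N_k=\sum_i\mathbf 1\{D_i=k\}$ and decompose $\mathrm{Var}(N_k)=\mathbb E[\mathrm{Var}(N_k\mid\Xi)]+\mathrm{Var}(\mathbb E[N_k\mid\Xi])$. Given $\Xi$ the indicators are independent and $[0,1]$-valued, so $\mathrm{Var}(N_k\mid\Xi)\le N/4$ and the first term contributes $O(1/N)$ to $\mathrm{Var}(N_k/N)$ --- this is the step where the $N^2$ edge variables collapse into $N$ conditionally independent loads. For the second term, $\mathbb E[N_k\mid\Xi]=\sum_i g_i$ with $g_i:=\Pr[D_i=k\mid\Xi]\in[0,1]$; resampling one latent coordinate of $\Xi$ changes either a single $x_i$ --- hence only $g_i$, by at most $1$ --- or a single $y_j$, which alters the law of only the one summand $a^{(\ell)}_{ij}$ (of small mean) and so changes $\mathcal L(D_i\mid\Xi)$, and thus $g_i$, by $O(1/N)$ in total variation, hence $\sum_i g_i$ by $O(1)$. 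An Efron--Stein estimate over the $2N$ i.i.d.\ latent coordinates then gives $\mathrm{Var}(\sum_i g_i)=O(N)$, again $O(1/N)$ after dividing by $N^2$. Thus $\mathrm{Var}(N_k/N)\to0$, and together with $\mathbb E[N_k/N]\to P(k)$ this gives $N_k/N\to P(k)$ in $L^2$, hence in probability. Finally $P(k)=\int\mathrm{Poisson}(\lambda)(\{k\})\,(\mu_{\mathcal X}\circ\lambda^{-1})(d\lambda)$, identifying the limit as the mixed Poisson law with mixing measure $\mu_{\mathcal X}\circ\lambda^{-1}$.

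The main obstacle is the weak integrability assumed on $\kappa$ (only $L^1$): the empirical second moment $N^{-1}\sum_j\kappa(x_i,y_j)^2$ controlling the Le Cam bound, and the SLLN limits themselves, need not behave well without truncation, so the whole argument must first be carried out for $\kappa\wedge T$ with error bounds uniform in $N$ and only then sent $T\to\infty$; the capping $p_{ij}=\min\{\kappa(x_i,y_j)/N,1\}$ must likewise be shown to matter only on a negligible event. A lighter, secondary issue is the joint passage $\ell\to\infty$, $N\to\infty$: keeping each $D_i$ as a sum of Binomials and invoking Le Cam once (rather than Poissonising pair by pair) lets the two limits commute, both being facets of the same sparse regime already exploited in Proposition~\ref{conti_ensamble} and Theorem~\ref{theo_graphon}.
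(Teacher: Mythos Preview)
Your argument is correct, and in fact it is considerably more detailed than what the paper offers: the paper's own ``proof'' consists solely of the remark that the derivation follows \cite{bollobas2007phase,bollobas1998random} with minor notational adjustments. What you have written is essentially the skeleton of the Bollob\'as--Janson--Riordan argument those citations point to: condition on the latent types, use a Le~Cam/Chen--Stein bound to Poissonise the conditional in-degree, invoke the SLLN on the $y$-sample to identify the limiting mean as $\lambda(x_i)$, and then control the variance of $N_k/N$ to upgrade mean convergence to convergence in probability. Your truncation $\kappa\wedge T$ to cope with $\kappa\in L^1$ only is exactly the device used in \cite{bollobas2007phase}, and your explicit Efron--Stein bound for the conditional-expectation term is a clean way to handle the shared $y$-dependence (the cited references use a closely related second-moment computation). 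So there is no genuine methodological difference to report: you have reconstructed the proof the paper defers to, with the added benefit of making the variance step and the $L^1$-truncation explicit.
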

\begin{proof}
The derivation of the occupancy distribution follows \cite{bollobas2007phase,bollobas1998random}, with minor modifications to align with our notation and ensure consistency with the rest of the paper.
\end{proof}
	The theorem states that the degree of a node with latent type \( x \in \mathcal{X} \) is asymptotically distributed as a Poisson random variable with mean \( \lambda(x) \), while the distribution of types across the population is governed by \( \mu_{\mathcal{X}} \). The overall degree distribution is thus obtained by averaging over the type distribution, resulting in a mixed Poisson law.		
		The following theorem shows the equivalency between inhomogeneous random graphs and  latent-variable (fitness) networks.
\begin{theorem}\label{theo_continuumdegreee}
		Under the constraint of a fixed number of links $n$ we have that 	$$ \ell \|\widetilde{\kappa}\|_1 
		=\frac{n}{N}. $$
		So  we can write the visiting-degree distribution we have that:
		$$ \lambda (x|n)=\frac{n}{N}\frac{\int_{\mathcal{Y}} \kappa(x, y) \, \mu_{\mathcal{Y}}(dy)}{ \|\widetilde{\kappa}\|_1 }:=n\nu_x $$
	\begin{equation}
P(k|n)=	P_n(k)= \int_{\mathcal{X}} \frac{(n\nu_x)^k}{k!} e^{-n\nu_x} \, \rho(x)dx \qquad \text{where } \; \nu_x=\frac{1}{N\|\widetilde{\kappa}\|_1}\int_{\mathcal{Y}} \kappa(x, y) \, \phi(y)dy
	\end{equation}
	in the limit of infinitely large pseudo-graphs continuous when the measures $\mu$ are absolutely continuous with respect to the Lebesgue measure so that $\mu_x(dx)=\rho(x)dx$ and $\mu_y(dy)=\phi(y)dy$ and where the unit rate $\nu_x$ is the limit of infinitely large and sparse graphs.
\end{theorem}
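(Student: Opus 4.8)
The plan is to obtain the statement as a direct corollary of Proposition~\ref{prop_continuous_occu} once the free edge-density parameter of the continuum ensemble is pinned down by the fixed-mass constraint of Theorem~\ref{theo_graphon}. First I would recall that Proposition~\ref{prop_continuous_occu} gives $N_k/N \xrightarrow{\mathbb{P}} \int_{\mathcal{X}} \frac{\lambda(x)^k}{k!} e^{-\lambda(x)}\,\mu_{\mathcal{X}}(dx)$ with $\lambda(x) = \ell \int_{\mathcal{Y}} \kappa(x,y)\,\mu_{\mathcal{Y}}(dy)$, where $\ell$ is the unconstrained average load per node. Imposing the mass constraint $\|\widetilde{\mathcal{A}}_N\|_1 = \ell\|\widetilde{\kappa}\|_1 = n/N$ exactly as in Theorem~\ref{theo_graphon} and solving for $\ell$ yields $\ell = n/(N\|\widetilde{\kappa}\|_1)$. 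Substituting this value into $\lambda(x)$ produces the conditioned mean $\lambda(x\mid n) = \frac{n}{N}\,\frac{\int_{\mathcal{Y}}\kappa(x,y)\,\mu_{\mathcal{Y}}(dy)}{\|\widetilde{\kappa}\|_1} =: n\nu_x$, which is the first displayed identity of the theorem.

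Next I would insert this expression for the Poisson mean back into the mixed-Poisson integral and specialize to absolutely continuous latent laws, $\mu_{\mathcal{X}}(dx) = \rho(x)\,dx$ and $\mu_{\mathcal{Y}}(dy) = \phi(y)\,dy$. This gives immediately $P_n(k) = \int_{\mathcal{X}} \frac{(n\nu_x)^k}{k!} e^{-n\nu_x}\,\rho(x)\,dx$ with $\nu_x = \frac{1}{N\|\widetilde{\kappa}\|_1}\int_{\mathcal{Y}}\kappa(x,y)\,\phi(y)\,dy$, as claimed. To complete the asserted equivalence between the inhomogeneous-random-graph picture and the latent-variable (fitness) picture, I would then identify $\nu_x$ as the continuum limit of the discrete weights $\nu_i = (\sum_j p_{ij})/(\sum_{i,j} p_{ij})$ of Theorem~\ref{th_canoccu}: under the kernel parametrization $p_{ij} = \kappa(x_i,y_j)/N$ together with the empirical-measure convergence $\#\{i : x_i\in A\}/N \to \mu_{\mathcal{X}}(A)$, one has $\sum_j p_{ij} = \frac{1}{N}\sum_j \kappa(x_i,y_j) \to \int_{\mathcal{Y}}\kappa(x_i,y)\,\phi(y)\,dy$ and $\sum_{i,j} p_{ij} \to N\|\widetilde{\kappa}\|_1$, so that $\nu_i \to \nu_{x_i}$ and the two degree laws coincide.

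The main obstacle is making the conditioning on the fixed total link number $n$ rigorous, i.e. justifying that the effect of conditioning $\sum_{i,j} a_{ij} = n$ in the continuum limit is precisely the rescaling $\lambda \mapsto \lambda(\cdot\mid n)$ rather than a mere formal substitution. I would handle this through the Gibbs-conditioning (large-deviations) principle already invoked for Theorem~\ref{theo_graphon}: conditioning the product-Poisson graphon ensemble of Proposition~\ref{conti_ensamble} on total mass $n/N$ concentrates on the unique minimizer of $D(\widetilde{\mathcal{A}}\,\|\,\ell\kappa)$ subject to $\|\widetilde{\mathcal{A}}\|_1 = n/N$, which is the tilted profile $\widetilde{\mathcal{A}} = \frac{n}{N}\,\kappa/\|\kappa\|_1$, i.e. $\ell\kappa$ with $\ell$ replaced by $n/(N\|\widetilde{\kappa}\|_1)$; the conditioned degree of a type-$x$ vertex is then Poisson with this tilted mean. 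A secondary, more routine point is the uniform integrability required to carry the convergence $N_k/N \to P_n(k)$ through the conditioning, which follows from $\kappa\in L^1(\mu_{\mathcal{X}}\otimes\mu_{\mathcal{Y}})$ in the same way as in Proposition~\ref{prop_continuous_occu}.
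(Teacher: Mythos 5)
Your proposal is correct and follows essentially the same route as the paper: the paper's proof likewise obtains $\ell = n/(N\|\widetilde{\kappa}\|_1)$ from the fixed-mass constraint of Theorem~\ref{theo_graphon} and substitutes it into the rate $\lambda(x)$ of Proposition~\ref{prop_continuous_occu} to get $\lambda(x\mid n)=n\nu_x$ and the mixed-Poisson form. Your additional remarks, identifying $\nu_x$ as the continuum limit of the discrete weights $\nu_i$ and justifying the conditioning via Gibbs conditioning on the large-deviation rate functional, supply detail the paper leaves implicit (the former appears in the discussion following the theorem, the latter is subsumed in the citation of Theorem~\ref{theo_graphon}), but they do not change the argument.
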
	
\begin{proof}
	The statement follows directly from Theorem~\ref{theo_graphon} and Proposition~\ref{prop_continuous_occu}. 
	Indeed, imposing the global constraint of a fixed number of links $n$ yields the scaling $
	\ell = \frac{n}{N}\frac{1}{\|\widetilde{\kappa}\|_1}
	$.
	Substituting this into the conditional rate expression derived in Proposition~\ref{prop_continuous_occu} gives $\lambda(x\mid n)=n\nu_x$, and the Poisson mixture form of $P_n(k)$ then follows immediately.
\end{proof}	

In conclusion,  this entrance rate $\nu_x$ can be viewed as a function of the bin’s attractiveness, with \( \nu_x = f(x) \), linking accessibility to underlying attractiveness.

The process of going from discrete bin-selection chance $\nu_i$  to a continuum selection rate $\nu_x$ follows  the law of large numbers and the convergence of the empirical distribution of bin attributes, enabling a mean-field approximation in the large-$N$ limit.   The probability that a location-bin  with attractiveness \( x \) receives an incoming balls (a visit) is defined as:
\begin{align}\label{eq:rate_edg_discr}
\nu_i=\nu(x_i) = \frac{\sum_{j=1}^{N} \kappa (x_i,y_j)}{\sum_{l=1}^{N} \sum_{m=1}^{N} \kappa(x_l,y_m)},
\end{align}
which corresponds to the probability that a new trip is directed to a destination with attractiveness \( x \). The numerator accounts for all possible links that could be directed to a location-bin of attractiveness \( x \), while the denominator normalizes over all potential source-destination pairs.

In the limit of a large number of locations \( N \), attractiveness can be treated as a continuous variable and the empirical distribution of $x_i$ converges (by the Glivenko-Cantelli theorem) to the true  probability density of of attractiveness $\rho(x)$. Consequently, the empirical number of bins with attractiveness in a small interval $[x,x+dx]$ is approximated by \( N(x)dx = N \rho(x)dx \). It replaces the discrete empirical distribution with a smooth deterministic density $\rho(x)$, and integrals replace sums\footnote{This transition from sums to expectations relies on the concentration of measure phenomenon: in the large $N$ limit, the empirical average of i.i.d. random variables concentrates sharply around its expected value. This justifies approximating aggregate sums by deterministic expectations in the continuum model. Using concentration of measure to justify replacing sums with expectations is equivalent to a probabilistic mean-field approximation.} so that the discrete expression eq.~\eqref{eq:rate_edg_discr} then converges to the continuum approximation:
\begin{align}\label{eq:rate_edg_cont}
\nu_x = \frac{N \int_0^{\infty} \kappa(x,y)\, \phi(y)\, dy}{N^2 \int_0^{\infty} \int_0^{\infty} \kappa(x,y)\, \rho(x)\, \phi(y)\, dx\, dy}.
\end{align}

In this formulation, the numerator represents the average number of links directed toward a bin of attractiveness \( x \), while the denominator captures the total number of links formed across the entire system. Therefore, \( \nu_x \) quantifies the relative probability that a new link is assigned to a destination of attractiveness \( x \), under the linking kernel \( \kappa(x,y) \).

\section{Load allocation problems: vacancy, coverage and overflow}\label{sec_load}
The balls-into-bins framework encompasses several fundamental allocation load problems that address distinct aspects of random distribution processes. While in the  occupancy problem we have focused on the distribution of balls across bins, alternative formulations within the same framework reveal deeper insights into allocation load problems such as congestion phenomena, and spatial coverage. Each of these problems corresponds to a distinct mechanism underlying human mobility and resource allocation in urban systems.

We begin by considering the vacancy problem, which focuses on the number of bins that remain empty after $n$ independent allocations. In mobility terms, it quantifies the number of destinations (e.g., urban zones, service points, or facilities) that remain unvisited or unused after $n$ individuals have independently selected destinations. This measure reflects the degree of underutilization and can signal infrastructural redundancy or inequities in accessibility.

Next, we consider the overflow problem, which captures congestion or saturation effects. Specifically, it measures the number of allocations that exceed the finite capacity of bins, corresponding to infrastructure or service limits at destinations. Unlike binary notions of access or inaccessibility, overflow provides a continuous and quantitative indicator of excess demand. It is particularly suited to modeling stress in urban systems, such as overcrowded transit stations, saturated hospitals, or overloaded vehicular infrastructure.

Together, these allocation-based models offer a flexible and analytically tractable framework for understanding the emergence of crowding, queuing, and systemic failure in human mobility networks and urban resource systems.

\subsection{Vacancy \& coverage problems as unserved destination-bins}
The vacancy problem is about the number of empty bins after $n$ throws and it measures how many destinations (e.g., locations, urban zones, service facilities) remain unvisited or unused after $n$ individuals make travel choices, indicating underutilized areas or infrastructural excess.
So we will investigate the number of destinations  \( E_n \) with no visit after $n$ trips, and, in addition, the coverage as  the stopping time \( T_b=\min\{n:E_n=b \} \), which indicates the (random) number of trips required until exactly $b$ destinations  remain unvisited (empty). Consequently we will estimate how many trip-balls do you need to throw, on average, until only $b$ destination-bins are still empty. 
\begin{proposition}[Coarse-grained vacancy \& coverage]\label{prop_cvc}
	Let \( n \) balls be independently thrown into \( N \) bins. Each ball chooses bin \( i \in \{1, \dots, N\} \) with fixed probability \( \nu_i \), where \( \sum_{i=1}^N \nu_i = 1 \).
	
	In the asymptotic regime assumption of large and sparse graphs, we have the following behaviors:
	
	\begin{itemize}
		\item  $E_n$ converges in distribution to a Poisson random variable	that is $	\mathbb{P}(E_n = \zeta) \to \text{Po}(\sum_i (1-\nu_i)^n)$.
		Consequently, the vacancy probability distribution is given by:
		\begin{equation}\label{vacancy_discr}
		\frac{E_n}{N} \; \xrightarrow{\mathbb{P}}\; P_n(0) = \tfrac{1}{N}\sum_{i=1}^N (1 - \nu_i)^n \approx  \tfrac{1}{N}\sum_{i=1}^N e^{-n \nu_i}
		\end{equation} 
		which is meant to be the probability that a given destination-bin is still empty after $n$ independent trips.
		
		\item The mean stopping time in the occupancy problem:
		\begin{equation}
		\mathbb{E}[T_b]= \tfrac{1}{\min_i\{\nu_i\}} \ln \tfrac{N}{b} +\Theta(N)
		\end{equation}
		which  estimates  the average number of trips needed to reach the configuration with exactly $b$ empty destination-bins, we name it coverage. The exact constant hidden inside the big theta $\Theta(N)$ depends on the variance structure of the model (e.g., the distribution of $\nu_i$).

	\end{itemize} 
\end{proposition}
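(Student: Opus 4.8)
\section*{Proof proposal for Proposition~\ref{prop_cvc}}

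For the vacancy statement, I would write $E_n=\sum_{i=1}^N \mathbf 1\{Z^{(n)}_i=0\}$ as a sum of indicators of the emptiness events, which in a multinomial allocation of $n$ balls are negatively associated (the classical negative-association criterion for occupancy counts). Since $\Pr(Z^{(n)}_i=0)=(1-\nu_i)^n$, we get $\mathbb E[E_n]=\lambda_n:=\sum_{i=1}^N(1-\nu_i)^n$, and negative association gives $\mathrm{Var}(E_n)\le\sum_i\mathrm{Var}(\mathbf 1\{Z^{(n)}_i=0\})\le\lambda_n$; hence Chebyshev yields $E_n/N-\lambda_n/N\xrightarrow{\mathbb P}0$, and $\lambda_n/N=\tfrac1N\sum_i(1-\nu_i)^n=P_n(0)$, the occupancy law at $k=0$. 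The replacement $(1-\nu_i)^n=e^{n\log(1-\nu_i)}=e^{-n\nu_i}\bigl(1+O(n\nu_i^2)\bigr)$ is uniform under the paper's sparsity ($n\max_i\nu_i^2\to0$), giving $P_n(0)\approx\tfrac1N\sum_i e^{-n\nu_i}$. For the distributional limit I would run a factorial-moment (Chen--Stein) argument: $\mathbb E[(E_n)_r]=\sum_{(i_1,\dots,i_r)\text{ distinct}}(1-\nu_{i_1}-\cdots-\nu_{i_r})^n$; the elementary inequality $1-\sum_l a_l\le\prod_l(1-a_l)$ gives the upper bound $\mathbb E[(E_n)_r]\le\lambda_n^r$, while $(1-\sum_l\nu_{i_l})^n=\prod_l(1-\nu_{i_l})^n\,\exp\!\bigl(-n\,O((\sum_l\nu_{i_l})^2)\bigr)=\prod_l(1-\nu_{i_l})^n(1+o(1))$ uniformly since $n\max_i\nu_i^2\to0$, and the repeated-index contributions are $O\bigl(\lambda_n^{r-1}\max_i(1-\nu_i)^n\bigr)=o(1)$ under the uniform-negligibility condition $\min_i n\nu_i\to\infty$ (automatic in the regime where $\lambda_n$ stays bounded with $\Theta(N)$ comparable bins). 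Thus every factorial moment converges to $\lambda^r$ with $\lambda=\lim\lambda_n$, and moment-determinacy of the Poisson law gives $E_n\xrightarrow{\mathsf d}\mathrm{Po}(\lambda)$, i.e. $\Pr(E_n=\zeta)\to e^{-\lambda}\lambda^\zeta/\zeta!$ with mean $\sum_i(1-\nu_i)^n$.

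For the coverage statement I would pass to the continuous-time (``Poissonized'') coupling the paper already invokes: attach to bin $i$ an independent Poisson clock of rate $\nu_i$, so the superposition has rate $\sum_i\nu_i=1$ and its $n$-th point is exactly the $n$-th throw, while the first-hit times $\tau_i\sim\mathrm{Exp}(\nu_i)$ are independent. Then $E_n$ first equals $b$ precisely when the $(N-b)$-th distinct bin is hit, so $T_b=N(\tau_{(N-b)})$, where $N(\cdot)$ counts superposition points and $\tau_{(N-b)}$ is the $(N-b)$-th order statistic of the $\tau_i$. Optional stopping for the mean-zero martingale $N(t)-t$ (legitimate because $\tau_{(N-b)}$ is a stopping time dominated by $\max_i\tau_i$, which has finite mean $\le\sum_i\nu_i^{-1}$) gives $\mathbb E[T_b]=\mathbb E[\tau_{(N-b)}]$. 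Writing $U(t):=\#\{i:\tau_i>t\}$, again a sum of negatively associated indicators with $\mathbb E[U(t)]=\mu(t):=\sum_i e^{-\nu_i t}$, the layer-cake identity gives $\mathbb E[\tau_{(N-b)}]=\int_0^\infty\Pr(\tau_{(N-b)}>t)\,dt=\int_0^\infty\Pr(U(t)\ge b+1)\,dt$, and the Poisson approximation from the first paragraph (applied pointwise in $t$) lets me replace $\Pr(U(t)\ge b+1)$ by $\Pr(\mathrm{Po}(\mu(t))\ge b+1)$ up to errors that are integrable and of lower order.

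The leading term then comes from locating where $\mu(t)$ crosses the level $b$: for large $t$, $\mu(t)$ is governed by the slowest clock, $\mu(t)\asymp e^{-\nu_{\min}t}\cdot(\text{effective multiplicity of order }N)$ with $\nu_{\min}:=\min_i\nu_i$, so $\mu(t)\approx b$ at $t_\star=\nu_{\min}^{-1}\log(N/b)$ up to an $O(1)$ shift inside the logarithm; for $t\ll t_\star$ one has $\mu(t)\gg b$ and $\Pr(\mathrm{Po}(\mu(t))\ge b+1)=1-e^{-\Theta(\mu(t))}$, while for $t\gg t_\star$ the tail $\Pr(\mathrm{Po}(\mu(t))\ge b+1)\le\mu(t)^{b+1}/(b+1)!$ decays. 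Splitting $\int_0^\infty$ at $t_\star$ and bounding the boundary-layer and tail contributions by $O(\nu_{\min}^{-1})$ yields $\mathbb E[T_b]=\nu_{\min}^{-1}\log(N/b)+O(\nu_{\min}^{-1})=\tfrac1{\min_i\nu_i}\ln\tfrac Nb+\Theta(N)$ in the normalization $\nu_i\asymp1/N$; the implicit constant absorbs the detailed structure of the spectrum $\{\nu_i\}$ near its minimum (in the homogeneous case $\nu_i=1/N$ one recovers $N(H_N-H_b)$). The main obstacle is exactly this last step: pinning down the leading coefficient and the order of the remainder requires quantitative control of how many rates cluster near $\nu_{\min}$, since both the ``$N$'' inside $\log(N/b)$ and the $\Theta(N)$ correction are sensitive to that clustering; the two secondary technical points are the de-Poissonization (passing from $\tau_{(N-b)}$ back to the discrete $T_b$) and the uniformity of the Poisson approximation of $U(t)$ across the whole range of $t$, including small $t$ where $\mu(t)\asymp N$.
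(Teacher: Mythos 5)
Your argument is correct, but it takes a genuinely different route from the paper's. For the vacancy part, the paper simply cites the Sevastyanov--Holst Poissonization results, whereas you give a self-contained proof: negative association of the emptiness indicators plus Chebyshev for the convergence of $E_n/N$, and a factorial-moment (Chen--Stein style) computation $\mathbb{E}[(E_n)_r]\to\lambda^r$ for the distributional Poisson limit; the conditions you invoke ($n\max_i\nu_i^2\to0$ and $\max_i(1-\nu_i)^n\to0$) are exactly the paper's stated asymptotic regime, so nothing is lost. For the coverage part the divergence is sharper: the paper works in discrete time, writing $\mathbb{E}[T_b]=n_b+\operatorname{Var}[E_{n_b}]/f_N'(n_b)+o(1)$ via a delta-method expansion around the root of $\sum_i e^{-n_b\nu_i}=b$, and then estimates $f_N'(n_b)=\Theta(b/N)$ and $\operatorname{Var}[E_{n_b}]=\Theta(b)$ to get the $\Theta(N)$ correction; you instead embed in continuous time, identify $T_b=N(\tau_{(N-b)})$ with $\tau_i\sim\mathrm{Exp}(\nu_i)$ independent, and use optional stopping to get the \emph{exact} identity $\mathbb{E}[T_b]=\mathbb{E}[\tau_{(N-b)}]$ before doing the layer-cake asymptotics. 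Your route buys exactness at the de-Poissonization step (indeed, the ``de-Poissonization'' worry you list at the end is already resolved by your own optional-stopping identity, at least for the mean), while the paper's delta-method makes the source of the $\Theta(N)$ term --- the variance-to-slope ratio --- more transparent. Both approaches share the same honest limitation, which you correctly flag and the paper hides in its ``balance condition'' $0<C<N\nu_i<D$: the leading coefficient $1/\min_i\nu_i$ and the constant inside $\log(N/b)$ are only pinned down when the rates cluster at scale $1/N$, since a thin spectrum near $\nu_{\min}$ would change the effective multiplicity in $\mu(t)\approx b$.
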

\begin{proof}
By the approximation for occupancy problems as shown in \cite{sevast1973poisson,holst1977some}, the number of empty bins satisfies $E_n \xrightarrow{d} \text{Po}(m_n)$ with $m_n\approx \sum_i e^{-n\nu_i}$, giving the vacancy probability $P_n(0)\approx \tfrac{1}{N}\sum_i e^{-n\nu_i}$. For the coverage time $T_b$, defined by $E_{T_b}=b$, let $n_b$ solve $\sum_i e^{-n_b\nu_i}=b$. Then $\mathbb{E}[T_b]=n_b+\Theta(N)$, where, using the Laplace method applied as a boundary approximation, $n_b\sim \tfrac{1}{\min_i\nu_i}\log(N/b)$. Full details are given in Appendix~\ref{app_cvc}.
\end{proof}

In the proposition above, assumptions  requires the condition  that even the destination-bin with the lowest chance of being visited still has a vanishing probability of remaining empty. That is, the balls are spread out enough so that no bin is completely ignored in the limit. Moreover, the balanced conditon  assumption ensures that the expected total number of empty bins remains bounded as \( N \to \infty \), so that, while some bins may remain empty, their number does not grow without bound. Let us, finally, notice that the fine-grained configurational approach provides the same estimate of empty bins as above, where the occupancy rate is the proportion of balls being assigned to the \( i \)-th bin given by   $ \pi_i=\sum_{j}m_{ij}/NM =\nu_i$.

\begin{corollary}[classical coupon collector]
	In the case of homogeneous random graph, where we have  uniform occupancy probabilities $\nu_i=\nu_j=\frac{1}{N},\forall i,j$,  we recover the classical problem of coupon collector where the vacancy probability and the mean stopping time are:
	\begin{align*}
	P_n(0)=&\;\left(1-\frac{1}{N} \right)^n \approx \;e^{-\frac{n}{N}}\\
	\mathbb{E}[T_b]=& \,N\ln\frac{N}{b}
	\end{align*}
\end{corollary}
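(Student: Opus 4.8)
The plan is to obtain the corollary as a direct specialization of Proposition~\ref{prop_cvc} to the uniform regime $\nu_i\equiv 1/N$, and then to cross-check the outcome against the elementary coupon-collector computation, which in this case returns the classical formulas on the nose.

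First, for the vacancy probability I would substitute $\nu_i = 1/N$ into Eq.~\eqref{vacancy_discr}. Since every summand is identical, the heterogeneous average collapses: $\tfrac{1}{N}\sum_{i=1}^N(1-\nu_i)^n = \tfrac{1}{N}\cdot N\cdot\bigl(1-\tfrac1N\bigr)^n = \bigl(1-\tfrac1N\bigr)^n$, which is the first claimed identity. The stated asymptotics $\bigl(1-\tfrac1N\bigr)^n\approx e^{-n/N}$ then follows from $\log\bigl(1-\tfrac1N\bigr) = -\tfrac1N + O(N^{-2})$, so that $\bigl(1-\tfrac1N\bigr)^n = \exp\!\bigl(-\tfrac nN + O(nN^{-2})\bigr) = e^{-n/N}\,(1+o(1))$ throughout the sparse regime $n=O(N)$ in which Proposition~\ref{prop_cvc} operates.

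Second, for the mean coverage time, the short route is to read off $\mathbb{E}[T_b] = \tfrac{1}{\min_i\nu_i}\ln(N/b) + \Theta(N)$ from Proposition~\ref{prop_cvc} with $\min_i\nu_i = 1/N$, which gives $\mathbb{E}[T_b] = N\ln(N/b)+\Theta(N)$, i.e.\ $N\ln(N/b)$ to leading order. To make the identification with the coupon collector problem fully explicit I would also include the self-contained geometric argument: once $j$ bins are occupied, a fresh ball lands in one of the $N-j$ empty bins with probability $(N-j)/N$, so the waiting time for the $(j+1)$-th distinct bin is geometric with mean $N/(N-j)$; summing these independent waiting times for $j=0,\dots,N-b-1$ (the stage at which exactly $b$ bins remain empty) yields $\mathbb{E}[T_b] = N\sum_{j=0}^{N-b-1}\tfrac{1}{N-j} = N\,(H_N - H_b)$ with $H_m=\sum_{i=1}^m 1/i$. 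The expansion $H_N - H_b = \ln(N/b)+O(1)$ recovers $\mathbb{E}[T_b] = N\ln(N/b)+\Theta(N)$, and the degenerate case $b\to 0$ reproduces the textbook $N H_N\sim N\ln N$ for collecting all coupons.

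There is no genuine obstacle: the corollary is a pure specialization. The only care required is bookkeeping — distinguishing the exact identities ($P_n(0)=\bigl(1-\tfrac1N\bigr)^n$ and $\mathbb{E}[T_b]=N(H_N-H_b)$) from the leading-order statements ($e^{-n/N}$ and $N\ln(N/b)$), and verifying that the harmonic-number remainder $N(H_N-H_b)-N\ln(N/b)$, for $b$ held fixed, is of order $\Theta(N)$, consistent with the error term carried through Proposition~\ref{prop_cvc}.
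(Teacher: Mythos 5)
Your proposal is correct, and its primary route --- direct substitution of $\nu_i\equiv 1/N$ into Proposition~\ref{prop_cvc}, i.e.\ into Eq.~\eqref{vacancy_discr} for the vacancy probability and into the $\tfrac{1}{\min_i\nu_i}\ln(N/b)+\Theta(N)$ formula for the coverage time --- is exactly what the paper intends, since the corollary appears there without a separate proof as a pure specialization. The supplementary geometric-waiting-time argument you add for the second formula is a genuinely independent and more elementary derivation: it produces the exact identity $\mathbb{E}[T_b]=N\,(H_N-H_b)$, which is sharper than the $N\ln(N/b)+\Theta(N)$ obtained by specializing the proposition's delta-method/Laplace bound, and it verifies concretely that the remainder $N(H_N-H_b)-N\ln(N/b)$ is $O(N/b)$ for fixed $b$, consistent with the $\Theta(N)$ error the proposition carries. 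Your insistence on separating the exact statements ($P_n(0)=\bigl(1-\tfrac1N\bigr)^n$ and $N(H_N-H_b)$) from the leading-order ones ($e^{-n/N}$ and $N\ln(N/b)$) is also the right way to read the corollary, whose displayed equality for $\mathbb{E}[T_b]$ should be understood only to leading order.
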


Using concentration of measure, sums can be replaced by their expectations under a mean-field approximation. In the continuum-grained formulation, the vacancy probability and and the expected coverage time are given by:
\begin{align} {P_n}(0) = & \int_{\mathcal{X}} \left(1 - \nu(x)\right)^n \rho(x)\, dx \\ \mathbb{E}[T_b]=n_b \quad \text{ s.t. } & \;\int_{\mathcal{X}} e^{-n_b\nu_0 x^{\alpha}}\rho(x)dx =b/N 
\end{align}
where \( \nu(x) \) denotes the allocation intensity associated with bins of type \( x \), and \( \rho(x) \) is the probability density function over bin types. This expression constitutes the continuous analogue of the discrete mixture formula presented in equation~\eqref{vacancy_discr}, and it often yields a more tractable analytical treatment.

\subsection{Overflow problem as overloaded destination-bins}
We now consider the case where bins have limited capacities. In this setting, overflow refers to the cumulative excess allocation that occurs when the number of incoming balls assigned to a bin exceeds its capacity. Specifically, for each bin, the overflow is defined as the surplus beyond its maximal admissible load. This quantity captures the extent to which allocation demands exceed structural constraints and serves as a natural indicator of saturation.
In the context of human mobility networks, overflow models the excess flow of individuals or trips directed toward a destination or transit hub beyond its physical or operational capacity. Unlike hard-capacity exclusion models (i.e. full bins), here additional trips are still routed to the destination even after capacity is reached (under certain reasonable limits), but they generate congestion externalities. For instance, in public transportation systems, overflow occurs when the number of arriving passengers at a station exceeds its service or holding capacity, inducing delays and operational inefficiencies. In road traffic networks, overflow reflects the accumulation of vehicles beyond parking or throughput capacity, manifesting as congestion. Thus, the expected overflow provides a quantitative measure of systemic pressure on the network and identifies critical nodes subject to stress under constrained infrastructure. 

In the urn-model representation of occupancy in random graphs, each destination is modeled as a bin characterized by an intrinsic attractiveness $x_i$, proportional to its resource availability, and the capacity of each destination bin is expressed as $C_i = C_0 x_i$, with the normalization constant $C_0 = NM/\sum_i x_i$, where $NM$ is the total number of visits and in the following proposition we formalize the notion of overflow and its estimate in a coarse-grained representation of random graphs.

\begin{proposition}[Coarse-grained Overflow]
	Let \( n \) be the total number of trips independently assigned to \( N \) destination-bins, where each trip is allocated to bin \( i \in \{1, \dots, N\} \) with fixed probability \( \nu_i > 0 \), such that \( \sum_{i=1}^N \nu_i = 1 \). Assume each destination \( i \) has a finite capacity \( C_i \in \mathbb{N} \). 
		Let $Y^{(n)}_i$ the number of trips received by destination-bin $i$ after $n$ total trip-balls have been thrown, and define the overflow (or residual excess) at destination-bin \( i \) as
	$
	R_i = \max(Y^{(n)}_i - C_i, 0).
	$
	
	Then, the expected total overflow is
	\[
	\mathbb{E}[R] = \sum_{i=1}^N \mathbb{E}[R_i],
	\]
	where, under asymptotic limit of large and sparse graphs, the expectation of the overflow at destination-bin \( i \) admits the approximation:
	\[
	\mathbb{E}[R_i] =\sum_{k=C_i+1}^N \left[\sum_{j=k+1}^{\infty} \frac{(n\nu_i)^j}{j!}e^{-n\nu_i} \right] 
	\]
where  the summation only considers terms where $k>C_i$, as there would be no overflow if $k\leq C_i$.
\end{proposition}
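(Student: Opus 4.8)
The plan is to reduce the statement to a single-bin calculation and then feed in the Poissonization already established for the occupancy problem. The decomposition $\mathbb{E}[R]=\sum_{i=1}^{N}\mathbb{E}[R_i]$ is immediate from linearity of expectation applied to $R=\sum_{i}R_i$ with $R_i=(Y^{(n)}_i-C_i)^{+}$, so the entire content lies in evaluating $\mathbb{E}[R_i]$ for one destination-bin. By construction each of the $n$ trip-balls lands in bin $i$ independently with probability $\nu_i$, so the marginal load is $Y^{(n)}_i\sim\mathrm{Binomial}(n,\nu_i)$ --- exactly the coarse-grained occupancy marginal of Theorem~\ref{th_canoccu}.

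The next step is the combinatorial identity that produces the double sum. For a non-negative integer variable one has the layer-cake representation $(Y^{(n)}_i-C_i)^{+}=\sum_{k>C_i}\mathbf{1}\{Y^{(n)}_i\ge k\}$ (valid term-by-term, also when $Y^{(n)}_i<C_i$), and taking expectations gives $\mathbb{E}[R_i]=\sum_{k=C_i+1}^{n}\Pr(Y^{(n)}_i\ge k)$, where the outer sum truncates at $k=n$ since $Y^{(n)}_i\le n$ almost surely. This already exhibits the claimed structure: the outer index $k$ ranges over capacity-exceeding levels, and the inner object is the upper tail $\Pr(Y^{(n)}_i\ge k)=\sum_{j\ge k}e^{-n\nu_i}(n\nu_i)^{j}/j!$ once the binomial tail is replaced by its Poisson counterpart.

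The final step is the asymptotic replacement. In the large--sparse regime ($n\gg1$, $\nu_i\ll1$, $n\nu_i=\Theta(1)$) the classical Poisson limit --- the same Poissonization used for the degree/occupancy distribution, quantified by the Le~Cam / Barbour--Hall bound $\lVert\mathcal{L}(Y^{(n)}_i)-\mathrm{Poisson}(n\nu_i)\rVert_{\mathrm{TV}}\le n\nu_i^{2}$ --- gives $\Pr(Y^{(n)}_i\ge k)=\sum_{j\ge k}e^{-n\nu_i}(n\nu_i)^{j}/j!+\mathcal{O}(n\nu_i^{2})$. Substituting into the tail sum yields $\mathbb{E}[R_i]=\sum_{k=C_i+1}^{n}\sum_{j\ge k}e^{-n\nu_i}(n\nu_i)^{j}/j!\,(1+o(1))$, and summing over $i$, with $C_i=C_0 x_i$, produces the stated formula for $\mathbb{E}[R]$.

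The hard part will be the error control in the last step: $(Y^{(n)}_i-C_i)^{+}$ is unbounded, so a total-variation bound does not by itself control $\lvert\mathbb{E}[(Y^{(n)}_i-C_i)^{+}]-\mathbb{E}[(W_i-C_i)^{+}]\rvert$ for $W_i\sim\mathrm{Poisson}(n\nu_i)$. I would resolve this either by (i) a coupling argument --- under the Stein--Chen / quantile coupling one has $\mathbb{E}\lvert Y^{(n)}_i-W_i\rvert=\mathcal{O}(n\nu_i^{2})$, and since $y\mapsto(y-C_i)^{+}$ is $1$-Lipschitz this transfers directly to the overflow --- or (ii) truncating the loads a few standard deviations above $n\nu_i$, bounding the residual tails for both laws by Chernoff estimates, and applying the TV bound only to the bounded remainder. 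Either route shows the per-bin relative error is $\mathcal{O}(n\nu_i^{2})$, so after summation the aggregate error is $\mathcal{O}\!\bigl(\sum_i n\nu_i^{2}\bigr)=o(\mathbb{E}[R])$ under the sparsity hypothesis; the bookkeeping of this error is essentially the only real work, everything else being the layer-cake identity plus the already-proved Poisson limit.
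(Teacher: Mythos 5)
Your proposal is correct and follows essentially the same route as the paper's proof: linearity of expectation, the tail-sum (layer-cake) identity $\mathbb{E}[(Y^{(n)}_i-C_i)^{+}]=\sum_{k>C_i}\Pr(Y^{(n)}_i\ge k)$ for the binomial marginal, and Poissonization of the tail, with your added discussion of error control via coupling or truncation being more careful than the paper, which simply swaps the binomial tail for the Poisson tail without justification. One small point: your (correct) identity yields inner sums $\sum_{j\ge k}$, whereas the stated formula (and the paper's own manipulation $\sum_{k=C_i+1}(k-C_i)\Pr[Y^{(n)}_i=k]=\sum_{k=C_i+1}\Pr[Y^{(n)}_i>k]$) uses $\sum_{j\ge k+1}$, an off-by-one that drops the term $\Pr(Y^{(n)}_i\ge C_i+1)$ --- the slip is in the paper rather than in your derivation, but you should flag it instead of asserting that your sum reproduces the stated formula exactly.
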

\begin{proof}
	Let us notice that a bin cannot have overflow without being full. Therefore, overflow depends on the same probabilistic behavior that governs the number of full bins but extends it to quantify the excess beyond capacity.
	Let $Y^{(n)}_i$ the number of balls assigned to bin $i$, now following a binomial distribution with parameter $\nu_i$ (the probability that any specific ball is assigned to bin $i$). Then 	
$
	P[Y^{(n)}_i=k]=\binom{n}{k}\nu_i^k(1-\nu_i)^{n-k}  
$,
	so that the expectation of overflow $R_i$ for bin $i$ can be computed as:
	\begin{align}
	\mathbb{E}[R_i]=& \sum_{k=C_i+1}^N (k-C_i) P[Y^{(n)}_i=k] = \sum_{k=C_i+1}^N  P[Y^{(n)}_i>k]
	=  \sum_{k=C_i+1}^N \left[\sum_{j=k+1}^{\infty} \frac{(n\nu_i)^j}{j!}e^{-n\nu_i} \right]  \quad \rightarrow \; 	\mathbb{E}[R]=\sum_{i=1}^N  \mathbb{E}[R_i]
	\end{align}
which is the expected total overflow.
	
\end{proof}

The fine-grained ensemble provides a natural and exact benchmark for evaluating capacity-respecting allocation under full information of the mobility structure.
Let us emphasize that the {fine-grained enumerative ensemble} introduced in Proposition~\ref{prop_fine} corresponds to a {fine-grained allocation process under full information}, particularly relevant in human mobility theory when the entire infrastructure of origin-destination pairs and their respective capacities is known a priori. In this setting, the allocation of \( n \) links (or trips) is performed by uniformly sampling without replacement from a fixed pool of \( C = \sum_i C_i \) available stubs, where each destination bin \( i \) possesses exactly \( C_i \) slots (e.g., physical or infrastructural capacity). 
As a direct consequence, overflow is not only avoided, it is {impossible}. Thus, the probability of overflow is identically zero, and the expected number of overflowed links is trivially
$
\mathbb{E}\left[ (Z_i^{(n)} - C_i)_+ \right] = 0 , \text{for all } i.
$
This regime contrasts sharply with coarse-grained probabilistic models, in which routing decisions are made under partial information and capacity violations may occur with nonzero probability. 

Let us, finally, discuss the overflow problem in continuum-grained representation of random graphs.

\begin{proposition}[Continuum-grained Overflow]
	Let \( x \in \mathbb{R}_+ \) be a latent variable with density \( \rho(x) \), such that each bin is characterized by an attribute \( x \), a capacity function \( C(x) \in \mathbb{N} \), and a selection probability \( \nu(x) \), with normalization condition
$
	\int_{\mathbb{R}_+} \nu(x) \rho(x) \, dx = 1.
$
	Under the asymptotic limit of large and sparse graph, the total expected overflow is given by:
\begin{equation}\label{eq_overflow}
	\mathbb{E}[R] \sim \int_{\mathbb{R}^+} [n\nu(x) - C(x)]_+\; \rho(x) dx,
\end{equation}
where the overflow region \( \{x : \lambda(x) > C(x)\} \) has nonzero measure as \( n \to \infty \).

\end{proposition}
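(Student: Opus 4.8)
The plan is to transfer the discrete coarse-grained overflow identity to the continuum by Poissonization, a law-of-large-numbers replacement of the bin sum by a type integral, and a Poisson concentration estimate that linearizes $\mathbb{E}[(Z-C)_+]$.

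First I would start from the coarse-grained overflow proposition, which gives $\mathbb{E}[R]=\sum_{i=1}^{N}\mathbb{E}\big[(Y_i^{(n)}-C_i)_+\big]$ with $Y_i^{(n)}\sim\operatorname{Binomial}(n,\nu_i)$. In the large and sparse regime the Poissonization of Section~\ref{sec_occupancy} lets me replace each $Y_i^{(n)}$ by a Poisson variable $Z_i$ with mean $\lambda_i=n\nu_i$, at the cost of the second-order error $\mathcal{O}(n\nu_i^2)$ per bin, which is negligible once summed under sparsity. Writing $\nu_i=\nu(x_i)$, $C_i=C(x_i)$ with $x_i\overset{\text{iid}}{\sim}\rho$, the claim reduces to $\mathbb{E}[R]\sim\sum_{i=1}^{N}g_n(x_i)$, where $g_n(x):=\mathbb{E}\big[(Z_x-C(x))_+\big]$ and $Z_x\sim\operatorname{Poisson}\!\big(n\nu(x)\big)$.

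Next I would linearize $g_n$. From $(z-c)_+=(z-c)+(c-z)_+$ and $\mathbb{E}Z_x=n\nu(x)$ one gets $g_n(x)=\big(n\nu(x)-C(x)\big)+\mathbb{E}\big[(C(x)-Z_x)_+\big]$. On the overflow region $\{x:\lambda(x):=n\nu(x)>C(x)\}$ the Chernoff bound for the Poisson law gives $\mathbb{E}[(C(x)-Z_x)_+]\le C(x)\,\mathbb{P}(Z_x\le C(x))\le C(x)\,e^{-\lambda(x)\,\varphi(C(x)/\lambda(x))}$, with $\varphi$ the Poisson large-deviation rate, which is $o(1)$ as $n\to\infty$; on the complement $\{\lambda(x)\le C(x)\}$ one has directly $g_n(x)\le\mathbb{E}\big[Z_x\mathbf{1}\{Z_x>C(x)\}\big]\to 0$. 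Hence $g_n(x)\sim[\,n\nu(x)-C(x)\,]_+$ pointwise. Passing this to the bin sum via the concentration-of-measure / Glivenko--Cantelli argument already used in the paper, and absorbing the bin count $N$ into the continuum normalization of $\nu$ and $C$ (so that $\int\nu\rho=1$), the empirical sum converges to $\int g_n(x)\rho(x)\,dx$; a dominated-convergence step with the integrable envelope $g_n(x)\le n\nu(x)$ then swaps the limit with the integral and yields $\mathbb{E}[R]\sim\int_{\mathbb{R}_+}[\,n\nu(x)-C(x)\,]_+\,\rho(x)\,dx$. The hypothesis that $\{\lambda>C\}$ has positive $\rho$-measure guarantees the limit is not identically zero, while the mild regularity that $\{\lambda=C\}$ is $\rho$-null ensures the boundary set contributes nothing.

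I expect the main obstacle to be the uniformity of the Poisson concentration across latent types, specifically inside the thin boundary layer $\{x:\lambda(x)\approx C(x)\}$, where the Chernoff exponent $\lambda\varphi(C/\lambda)$ degenerates and the pointwise asymptotic $g_n(x)\sim[\lambda(x)-C(x)]_+$ fails to be uniform. Handling it cleanly requires quantitative regularity of $\nu$, $C$ and $\rho$ near the level set $\{\lambda=C\}$ (a transversality or Hölder-type condition) so that both its $\rho$-mass and its contribution to $\int g_n\rho$ vanish as $n\to\infty$; the remaining steps (Poissonization error, sum-to-integral passage, and dominated convergence) are routine given the sparsity assumptions.
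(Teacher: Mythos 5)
Your proof is correct and follows the same overall strategy as the paper's (a Poisson model per latent type, concentration to linearize the overflow, then integration over the type density), but you execute the key step differently and more rigorously. The paper simply asserts that since $Z(x)$ concentrates around $\lambda(x)=n\nu(x)$, one has $\mathbb{E}[(Z(x)-C(x))_+]\approx(\lambda(x)-C(x))_+$; this is informal because $(\cdot)_+$ is nonlinear and high-probability concentration does not by itself control the expectation. Your exact identity $(z-c)_+=(z-c)+(c-z)_+$, which gives $\mathbb{E}[(Z_x-C(x))_+]=\bigl(\lambda(x)-C(x)\bigr)+\mathbb{E}[(C(x)-Z_x)_+]$, reduces the problem to bounding a single nonnegative error term, and the Chernoff estimate $\mathbb{E}[(C-Z_x)_+]\le C\,\mathbb{P}(Z_x\le C)$ disposes of it cleanly on the bulk of the overflow region. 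You also make explicit two passages the paper glosses over: the Poissonization of the underlying binomial allocation (with its $\mathcal{O}(n\nu_i^2)$ error) and the empirical sum-to-integral step via Glivenko--Cantelli (note that this step naturally produces a factor $N$ in front of the integral, which the paper's Proposition omits but then reinstates in the worked example of Section~\ref{sec_simul}; your remark about absorbing $N$ into the normalization is the right way to reconcile the two). Finally, your identification of the boundary layer $\{x:\lambda(x)\approx C(x)\}$ as the genuine obstruction to uniformity is accurate and is precisely the point the paper's proof leaves unaddressed; under the paper's parametrization ($C(x)=C_0x$, $\nu(x)=\nu_0x^{\alpha}$, $\alpha<1$) the level set $\{\lambda=C\}$ is a single point $x_\star(n)$, so the transversality condition you ask for holds and the layer's contribution indeed vanishes.
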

\begin{proof}
	Let \( Z(x) \sim \mathrm{Poisson}(\lambda(x)) \) with \( \lambda(x) = n \nu(x) \) and capacity \( C(x) \). The overflow at bin type \( x \) is defined as
	\[
	R(x) := \max(Z(x) - C(x), 0), \quad \text{so that} \quad \mathbb{E}[R(x)] = \sum_{k=C(x)+1}^{\infty} (k - C(x)) \cdot \frac{\lambda(x)^k}{k!} e^{-\lambda(x)}.
	\]
	
	Now observe that if \( \lambda(x) \gg C(x) \), the distribution of \( Z(x) \) is sharply concentrated around its mean \( \lambda(x) \). By the law of large numbers, we have
	$	Z(x) \approx \lambda(x) \quad \text{with high probability},$
	so
$\mathbb{E}[R(x)] \approx \max(\lambda(x) - C(x), 0).$
	
	Therefore, the total expected overflow becomes
	\[
	\mathbb{E}[R] = \int \mathbb{E}[R(x)] \rho(x) dx \approx \int \max(\lambda(x) - C(x), 0) \rho(x) dx  \sim \int_{\lambda(x) > C(x)} (\lambda(x) - C(x)) \rho(x) dx,
	\]
	which completes the proof.
\end{proof}

	In human mobility congestion, {overflow} measures {how much} the flow exceeds capacity, capturing the severity of congestion, and it quantifies the excess demand, guiding infrastructure planning and flow mitigation.  It goes beyond simple {full bins} that merely indicate which destinations exceed their capacity, identifying {where} congestion occurs.

\section{Data-driven simulations}\label{sec_simul}
To demonstrate the applicability of the theoretical framework, we refer to empirical findings from \cite{vanni2024visit}, which provide structural parameters for a real-world human mobility network. Rather than re-estimating the network, we apply the model using latent variables derived from the empirical analysis.

Each location node in the network is endowed with two latent attributes that govern its role in mobility dynamics: a destination-specific variable \( x \), encoding its \emph{attractiveness}, that is, its capacity to absorb incoming trips, and an origin-specific variable \( y \), representing its \emph{trip-generation potential}, or the propensity of agents to initiate trips from that location. The variable \( x \) typically reflects structural features such as employment density, commercial activity, and transportation infrastructure, while \( y \) captures demographic or socio-economic activity related to residential patterns \footnote{
Let us observe that hidden-variable models describe networks by assigning edge probabilities through unobserved, node-specific attributes \cite{kim2018review,rastelli2016properties,hartle2021dynamic}. These latent traits are treated as random, either because they are not directly measurable (epistemic uncertainty) or because variability is irreducible (ontological randomness). Observable features such as degrees, motifs, or link distances are thus indicators of latent structure rather than direct measurements. In this view, latent variables act as generative causes that encode structural heterogeneity, explaining heavy-tailed degrees, clustering, and modular communities. They are best regarded as mathematical constructs that bridge underlying mechanisms with statistical inference.}.
In the origin-destination mobility network analyzed in~\cite{vanni2024visit}, the random graph model adopts this latent variable structure by assigning to each destination node an attractiveness variable \( x \) corresponding to its non-residential land-use intensity, and to each origin node a variable \( y \) representing residential land-use. The interaction between origins and destinations is governed by a multiplicative linking kernel of the form
$
\kappa(x, y) = x^{\alpha} y^{\beta},
$
with exponents set to \(\alpha = \tfrac{5}{6}\) and \(\beta = \tfrac{3}{4}\). The latent variables are drawn from the probability densities \(\rho(x) \sim \rho_0 x^{-\eta}\), with an estimate coefficient of $\eta=2$ for attractiveness and \(\phi(y) \sim \phi_0 e^{-\lambda y}\), with an estimate rate of $\lambda=0.2$ for trip-generation potential, representing a heavy-tailed spatial heterogeneity at destinations and an exponentially decaying profile at origins. Moreover, we adopt the notion of \emph{natural cut-off} $x_{max}$, see \cite{boguna2004cut}, as a fundamental bound on the maximum value in scale-free fat-tailed distributions, which becomes a key element in estimating the critical properties of processes defined on networks with scale-free topologies\footnote{Extreme Value Theory states that the maximum of $N$ i.i.d.\ variables drawn from a distribution with power-law tail $\rho(x) \sim x^{-\eta}$ is itself a random variable with a well-defined distribution. The cumulative distribution of the maximum is $\Pi(x) = \left( \int_1^x \rho(y)\,dy \right)^N \approx \left(1 - x^{1-\eta}\right)^N$ for large $x$, so its density is $\pi(x) = \frac{d\Pi}{dx} \sim N x^{-\eta} (1 - x^{1-\eta})^{N-1}$. The expected maximum, or natural cut-off, is then $x_c(N) = \int x\, \pi(x)\,dx$, which scales as $x_c(N) \sim N^{1/(\eta - 1)}$ for large $N$, consistently with the heuristic condition $N \int_{x_c}^\infty \rho(x)\,dx \sim 1$.}, so that $x_{max}\sim N^{\frac{1}{\gamma -1}}$.

These functional forms and parameter values will serve as the baseline specification for all subsequent Monte Carlo simulations, numerical experiments, and analytical derivations presented in this study.

\subsection*{$\rhd$  \textbf{Coarse}-grained ensemble}
We consider a canonical ensemble of directed graphs with \( N \) nodes, where the total number of links is exactly \( n \), and no hard capacity constraints are imposed on the edge multiplicities. Each node \( i \) is endowed with two latent variables: an \emph{attractiveness score} \( x_i \sim \rho(x) \propto x^{-\eta} \), drawn from a truncated Pareto distribution, and a \emph{trip-generation potential} \( y_i \sim \phi(y) = \lambda e^{-\lambda y} \), drawn from an exponential distribution.
These latent variables define a link propensity kernel, where the unnormalized weight for a directed edge from node \( j \) to node \( i \) is given by
$
p_{ij} := x_i^{\alpha} y_j^{\beta}.
$

We generate a configuration \( A = \{a_{ij}\} \in \mathbb{N}^{N \times N} \) by drawing \( n \) links independently from the categorical distribution over node pairs with probabilities \( p_{ij} \). That is, each link is assigned to a directed pair \( (i,j) \) with probability \( p_{ij} \), and the number of times each pair is selected determines the corresponding entry \( a_{ij} \) in the adjacency matrix:
\[
a_{ij} := \text{number of times } (i,j) \text{ was selected}.
\]
This repeated categorical sampling procedure is equivalent in distribution to drawing the vector of occupancies \( (a_{11}, \ldots, a_{NN}) \) from a multinomial distribution:
$
(a_{11}, \ldots, a_{NN}) \sim \mathrm{Multinomial}(n, \{\pi_{ij}\}).
$
The resulting ensemble defines the set of directed multigraphs with exactly \( n \) links and unconstrained edge multiplicities:
\[
\mathcal{G}_{N,\textbf{p}}(n) = \left\{ A \in \mathbb{N}^{N \times N} \,\middle|\, \sum_{i,j} a_{ij} = n \right\},
\]
with sampling probability governed by the latent scores. In the sparse regime (\( n \ll N^2 \)), the in-degrees \( k_i^{\text{in}} = \sum_j a_{ij} \) behave as mixtures of binomial random variables, with success probabilities modulated by the node's latent attributes. This mechanism naturally leads to broad-tailed degree distributions, often displaying approximate power-law behavior with exponents dependent on the latent variable distributions and kernel exponents \( \alpha, \beta \).

\begin{algorithm}[H]\label{code_coarse}
	\caption{Sampling of a Coarse-grained Latent Variable Graph}
	\SetKwInOut{Input}{Input}\SetKwInOut{Output}{Output}
	\Input{Number of nodes $N$, total number of links $n$, kernel exponents $\alpha$, $\beta$}
	\Output{Adjacency matrix $A \in \mathbb{N}^{N \times N}$}
	
	\BlankLine
	\textbf{1. Sample latent variables for each node:} \\
	\Indp
	Draw $x_i \sim \rho(x) \propto x^{-\eta}$ \tcp*{Attractiveness scores (destinations)} 
	Draw $y_i \sim \phi(y) = \lambda e^{-\lambda y}$ \tcp*{Trip generation scores (origins)}
	\Indm
	
	\BlankLine
	\textbf{2. Construct the link propensity matrix:} \\
	\Indp
	Compute $p_{ij} := x_i^{\alpha} \cdot y_j^{\beta}$ for all $(i,j)$ \\
	\Indm
	
	\BlankLine
	\textbf{3. Initialize adjacency matrix:} \\
	\Indp
	Set $A_{ij} \leftarrow 0$ for all $(i,j)$
	\Indm
	
	\BlankLine
	\textbf{4. Repeat categorical sampling:} \\
	\Indp
	\For{$k = 1$ \KwTo $n$}{
		Sample index $(i,j)$ with probability $p_{ij}$ \\
		Set $A_{ij} \leftarrow A_{ij} + 1$
	}
	\Indm
	
	\BlankLine
	\Return $A$
\end{algorithm}

\subsection*{$\rhd$ \textbf{Fine}-grained ensemble}
We consider the ensemble of directed multigraphs with $N$ nodes, where the total number of links is exactly $n$ and the maximum number of links allowed between each ordered node pair $(i,j)$ is constrained by a non-negative integer capacity matrix $m = \{m_{ij}\}$. The ensemble is defined as the set of all adjacency matrices $A = \{a_{ij}\} \in \mathbb{N}^{N \times N}$ such that each $a_{ij} \le m_{ij}$ and the total number of links is conserved:
\begin{equation*}
\mathcal{G}_{N,\mathbf{m}}(n) = \left\{ A \in \mathbb{N}^{N \times N} \,\middle|\, 0 \le a_{ij} \le m_{ij},\; \sum_{i,j} a_{ij} = n \right\}.
\end{equation*}

The probability measure on this set assigns \emph{equal probability} to every admissible matrix $A$, i.e., the uniform measure over $\mathcal{G}_{N,\mathbf{m}}(n)$. This uniform allocation is the defining property of the microcanonical (or fine-canonical) ensemble.

As described in pseudo-code \ref{code_fine}, we assume that the capacity matrix $\mathbf{m}$ is generated by an underlying latent-variable model. Specifically, each node $i$ is assigned a latent \emph{attractiveness} score $x_i$ drawn from a truncated Pareto distribution $\rho(x) \propto x^{-\eta}$ and a latent \emph{trip-generation potential} $y_i$ drawn from an exponential distribution $\phi(y) = \lambda e^{-\lambda y}$. The raw propensity for a directed link from node $j$ to node $i$ is then given by
\[
p_{ij} := x_i^{\alpha} y_j^{\beta},
\]
and the normalized edge probabilities are defined as $\pi_{ij} = p_{ij} / \sum_{i,j} p_{ij}$. The slot capacity matrix is then generated as a multinomial random vector since the normalized values $\{\pi_{ij}\}$ specify the probabilities that a capacity slot is associated with the node pair $(i,j)$. A total of $NM$ slots are drawn:
\[
(m_{11}, \dots, m_{NN}) \sim \mathrm{Multinomial}(NM, \{\pi_{ij}\}),
\]
thereby forming a latent infrastructure of link capacities $m_{ij}$ with $\sum_{i,j} m_{ij} = NM$, and each $m_{ij} \sim \mathrm{Binomial}(NM, \pi_{ij})$ marginally.

Given a realization of $\mathbf{m}$, a matrix $A$ is then sampled uniformly from $\mathcal{G}_{N,\mathbf{m}}(n)$ using a two-stage \emph{nested multivariate hypergeometric allocation}. First, the row totals $r_i$ are assigned by scalar hypergeometric draws according to the row capacity sums $R_i = \sum_j m_{ij}$, under the constraint $\sum_i r_i = n$. Second, for each row, the allocated row quota $r_i$ is distributed among the columns using further hypergeometric draws, ensuring that $0 \le a_{ij} \le m_{ij}$ and $\sum_j a_{ij} = r_i$. This sequential procedure guarantees exact satisfaction of both the global link constraint and all local capacity constraints, and produces $A$ with uniform probability over $\mathcal{G}_{N,\mathbf{m}}(n)$.

In the sparse regime ($n \ll NM$), the in-degrees $k_i^{\mathrm{in}} = \sum_j a_{ij}$ and out-degrees $k_j^{\mathrm{out}} = \sum_i a_{ij}$ behave as mixtures of Binomial random variables parameterized by the latent scores. The model thus provides a fine-grained (microcanonical) random graph with a latent-variable structure, strictly enforcing all hard constraints and yielding exact combinatorial statistics.

\begin{algorithm}[H]\label{code_fine}
	\caption{Sampling of a Fine-grained Latent Variable Graph}
	\SetKwInOut{Input}{Input}
	\SetKwInOut{Output}{Output}
	\Input{Number of nodes $N$; total number of links $n$; total number of slots $NM$}
	\Output{Adjacency matrix $A = \{a_{ij}\}$ such that $a_{ij} \le m_{ij}$ and $\sum_{i,j} a_{ij} = n$}
	
	\BlankLine
	\textbf{Step 1: Sample latent variables for each node}\;
	\For{$i \gets 1$ \KwTo $N$}{
		$x_i \gets$ draw from truncated Pareto distribution $\rho(x) \propto x^{-\eta}$\;
		$y_i \gets$ draw from exponential distribution $\phi(y) = \lambda e^{-\lambda y}$\;
	}
	
	\BlankLine
	\textbf{Step 2: Compute and normalize link propensities}\;
	\For{$i, j \in \{1,\ldots,N\}$}{
		$p_{ij} \gets x_i^\alpha \, y_j^\beta$\;
	}
	$\pi_{ij} \gets p_{ij} / \sum_{u,v} p_{uv}$\;
	
	\BlankLine
	\textbf{Step 3: Generate slot capacity matrix}\;
	$(m_{11}, \ldots, m_{NN}) \gets \mathrm{Multinomial}(NM, \{\pi_{ij}\})$\;
	Let $m_{ij}$ be the number of slots for edge $(i,j)$\;
	
	\BlankLine
	\textbf{Step 4: Sample uniformly from all admissible allocations with total $n$ and local caps $m_{ij}$}\;
	Sample $A$ \emph{uniformly at random} from $\mathcal{G}_{N,\mathbf{m}}(n)$ as follows:\;
	
	\BlankLine
	\Indp
	\textbf{(a) Allocate row totals via hypergeometric draws}\;
	Let $R_i = \sum_{j=1}^{N} m_{ij}$\;
	Let $n_{\mathrm{rem}} \gets n$, $M_{\mathrm{rem}} \gets NM$\;
	\For{$i \gets 1$ \KwTo $N-1$}{
		$r_i \gets$ draw from $\mathrm{Hypergeometric}(R_i, M_{\mathrm{rem}} - R_i, n_{\mathrm{rem}})$\;
		$n_{\mathrm{rem}} \gets n_{\mathrm{rem}} - r_i$\;
		$M_{\mathrm{rem}} \gets M_{\mathrm{rem}} - R_i$\;
	}
	$r_N \gets n_{\mathrm{rem}}$\;
	
	\BlankLine
	\textbf{(b) For each row, allocate $r_i$ links among columns via hypergeometric draws}\;
	\For{$i \gets 1$ \KwTo $N$}{
		$M_i \gets R_i$\;
		$n_i \gets r_i$\;
		\For{$j \gets 1$ \KwTo $N-1$}{
			$a_{ij} \gets$ draw from $\mathrm{Hypergeometric}(m_{ij}, M_i - m_{ij}, n_i)$\;
			$n_i \gets n_i - a_{ij}$\;
			$M_i \gets M_i - m_{ij}$;
		}
		$a_{iN} \gets n_i$\;
	}
	\Indm
	
	\BlankLine
	\Return{$A = \{a_{ij}\}$}
\end{algorithm}

The coarse-grained ensemble generation algorithm (Algorithm~\ref{code_coarse}) relies on repeated categorical sampling and achieves an overall time complexity of $\mathcal{O}(N^2 + n)$. This method is highly efficient, particularly in the sparse regime where the total number of links satisfies $n \ll N^2$. In contrast, the fine-grained ensemble (Algorithm~\ref{code_fine}) employs a nested multivariate hypergeometric sampling procedure, resulting in a higher computational cost of $\mathcal{O}(N^2 + NM)$, where $NM$ is the total number of available link slots. Since $NM \gg n$ in typical sparse settings, this additional cost reflects the fine-grained model's strict enforcement of local capacity constraints. Consequently, while the coarse-grained ensemble is computationally advantageous and well-suited to large-scale network models, the fine-grained approach provides exact control over edge multiplicities and guarantees uniform sampling over the admissible configuration space, at the expense of increased computational overhead.
An alternative to the fine-grained approach is offered by the latent stub-matching ensemble. Although a full treatment of this method lies beyond the main scope of the paper, for completeness we provide a detailed description in Appendix~\ref{app_stub}. The procedure begins by generating a random capacity matrix from latent variables, as described above, and then allocating the required number of links by sampling uniformly from the resulting pool of available stubs. This construction defines an ensemble that is both efficient and analytically tractable: it approximates the fine-grained model in the sparse regime while retaining much of the computational speed of the coarse-grained approach. As a result, the latent stub-matching method is particularly well-suited for large, heterogeneous networks with latent structure.

\subsection*{$\rhd$ \textbf{Continuum}-grained ensemble}
The principal motivation for introducing the continuum-grained ensemble is to establish a general framework that enables the derivation of analytical expressions for classical urn models, such as occupancy, vacancy, and congestion, in random allocation problems. In this ensemble, each bin or node is endowed with a latent variable and the assignment of balls (or links) is governed by a kernel function defined on these latent variables. By considering the large-system limit in which the discrete set of bins is replaced by a continuum, the model admits integral representations for key probabilistic quantities. This approach unifies and extends traditional balls-in-bins analyses, allowing the explicit computation of quantities such as the probability of a bin being unoccupied (vacancy), the expected load or congestion, and other statistics, directly in terms of the latent variable distribution and the kernel. The continuum-grained perspective thus serves as a powerful analytical tool for studying random allocation processes in large heterogeneous systems. 
In the following, we present the analytical results obtained within the continuum-grained ensemble framework, which serves as the continuous counterpart to the discrete coarse-grained and fine-grained ensembles used in the Monte Carlo simulations. Table~\ref{tab_param} summarizes the key normalization constants employed in the subsequent calculations in the case  \(X \sim \rho(x) = \rho_0 x^{-\eta} \mathbf{1}_{[a,x_c)}\) with \(\eta > 1\), and \(\nu(x) = \nu_0 x^{\alpha}\) for \(0<\alpha <1\) as discussed in Theorem~\ref{theo_continuumdegreee} and explicitly expressed in eq.\eqref{eq:rate_edg_cont}.
\begin{table}[!ht]
	\begin{center}
		\scalebox{0.8}{%
			\renewcommand{\arraystretch}{2.75}
			\begin{tabular}{
					>{\centering\arraybackslash}m{4.5cm}  
					>{\columncolor{gray!10}}>{\centering\arraybackslash}m{6.5cm}  
					>{\centering\arraybackslash}m{4.5cm}                         
					@{}
				}
				\toprule
				\shortstack{ \textbf{Normalization} \\ {Constants}} & 	\shortstack{ \textbf{General Expression} \\ $\rho(x)\sim x^{-\eta}$ and $\kappa (x,y)= x^{\alpha} y^{\beta}$} & \shortstack{\textbf{Special Case:} $N\to \infty$\\ $a = 1$, $\eta = 2$, $\alpha = \tfrac{5}{6}$} \\
				\midrule		
				
				\shortstack{	Maximum size: \\ $x_{\max}$} &
				$ N^{\tfrac{1}{\eta - 1}}$ &
				$N$ \\
				
				\shortstack{Density normalization:\\ $\rho_0$} &
				$\displaystyle \frac{(\eta - 1)\, a^{\eta - 1}}{1 - \left( \frac{a}{N^{\frac{1}{\eta - 1}}} \right)^{\eta - 1}}$ &
				$1$ \\
				
				\shortstack{Expected value:\\ $\mathbb{E}[x]$} &
				$\displaystyle \frac{\eta - 1}{a^{\eta - 1} - N^{1 - \eta}} \left( N^{\tfrac{2 - \eta}{\eta - 1}} - a^{2 - \eta} \right)$ &
				$\frac{Na}{N-a}\log\frac{N}{a}=\log N$ \\
				
				\shortstack{ Fractional moment: \\ $\mathbb{E}[x^\alpha]$ } &
				$\displaystyle \frac{\eta - 1}{a^{\eta - 1} - N^{1 - \eta}} \cdot \frac{N^{\tfrac{\alpha - \eta + 1}{\eta - 1}} - a^{\alpha - \eta + 1}}{\alpha - \eta + 1}$ &
				$\frac{1}{1-\alpha}=\displaystyle 6 $ \\
				
				\shortstack{Visitation rate:  \\ $\nu_0$ s.t. $\nu_x=\nu_0 x^{\alpha}$} &
				$\nu_0=\displaystyle \frac{1}{N\, \mathbb{E}[x^\alpha]}$ &
				$\nu_0=\displaystyle \frac{1}{6N}$ \\
				
				\shortstack{ Balls proportion: \\ $n_0$ s.t. $n(x)=n_0x^{\alpha}$ } &
				$n_0=\displaystyle \frac{1}{n N\, \mathbb{E}[x^\alpha]}$ &
				$n_0=\displaystyle \frac{1}{6nN}$ \\
				
				\shortstack{Empty-bins proportion: \\ $b_0$ s.t. $b(x)=b_0x ^{\alpha}$} &
				$b_0=\displaystyle \frac{b}{N\, \mathbb{E}[x^\alpha]}$ &
				$b_0=\displaystyle \frac{b}{6N}$ \\
				
				\shortstack{ Overflow proportion:\\  $C_0$ s.t. $C(x)=C_0 x$ }&
				$C_0=\displaystyle \frac{MN}{N\, \mathbb{E}[x]}$ &
				$C_0=\displaystyle \frac{M}{\log N}$ \\
				
				\bottomrule
			\end{tabular}
		}
	\end{center}
	\caption{Analytical evaluation in the continuum approximation of the important normalization constants in the random graph models. Parameters which will be held constant are $N=10^3$ and $M=10^3$ (i.e. $\mathfrak{n}=1$). }
	\label{tab_param}
\end{table}

\paragraph{$\bullet$ Occupancy problem}
In the parametrization considered here and from the conclusions in Theorem \ref{theo_continuumdegreee} we can write the visiting occupancy distribution (in-degree distribution) is asymptotically, for large and sparse graph,  the following (mixed-Poisson) degree distribution:
\begin{align*}
P_{n}(k)
&=\frac{n^{k}}{k!}\int_{0}^{\infty}\mathrm e^{-n\,\nu_{0}x^{\alpha}}\, \bigl(\nu_{0}x^{\alpha}\bigr)^{k}\, \rho_{0}\,x^{-\eta}\,\mathrm dx 
=\frac{\rho_{0}\,\bigl(n\nu_{0}\bigr)^{k}}{k!}
\int_{0}^{\infty}
x^{k\alpha-\eta}\,
\mathrm e^{-n\nu_{0}x^{\alpha}}\,\mathrm dx\\
&=\frac{\rho_{0}\,(n\nu_{0})^{k}}{k!}\,
\frac{1}{\alpha}\,(n\nu_{0})^{-\frac{k\alpha-\eta+1}{\alpha}}
\int_{0}^{\infty}
u^{\frac{k\alpha-\eta+1}{\alpha}-1}\,
\mathrm e^{-u}\,\mathrm du  \\
&=\frac{\rho_{0}}{\alpha}\,(n\nu_{0})^{\frac{\eta-1}{\alpha}}\,
\frac{\Gamma\!\bigl(k-\tfrac{\eta-1}{\alpha}\bigr)}
{\Gamma(k+1)} 
\end{align*}

where, in the second line, we have set
\(
u = n\nu_{0}x^{\alpha}
\;\Longrightarrow\;
x=(u/n\nu_{0})^{1/\alpha},\;
\) with $\mathrm dx
=\tfrac{1}{\alpha}\,
(n\nu_{0})^{-\frac{1}{\alpha}}\,
u^{\frac{1}{\alpha}-1}\mathrm du$
,
and the consequent integral at origin converges when  \(k>\tfrac{\eta-1}{\alpha}\).


Using Stirling’s expansion
\(
\displaystyle
\tfrac{\Gamma(k-a)}{\Gamma(k+1)}
\simeq k^{-(a+1)}\bigl[1+\mathcal O(k^{-1})\bigr]
\)
with \(a=(\eta-1)/\alpha\), we obtain the occupancy (in-degree) distribution as:
\begin{equation}\label{eq_occup}
	P_{n}(k)\;\sim\;
	\tfrac{\rho_{0}}{\alpha}\,
	(n\nu_{0})^{\frac{\eta-1}{\alpha}}\,
	k^{-\mu}.
\end{equation}
Hence, in the sparse regime ($n=o(N^{2})$) exhibiting a power-law tail with exponent  
\[
\mu \;=\; 1+\frac{\eta-1}{\alpha},
\]
while the prefactor carries the explicit $n$-dependence
$n^{(\eta-1)/\alpha}$. In  the particular case of simulations with parameters $\eta=2$ and $\beta=5/6$ we have that  $\mu= 1+ \frac{2-1}{5/6}=1+\frac{6}{5}\approx 2.2$.

Fig.~\ref{occupancies} presents the results of Monte Carlo simulations based on both fine-grained and coarse-grained generative procedures, alongside the analytical prediction obtained from the continuum-grained framework.
\begin{figure}[!ht]
	\centering
	\begin{subfigure}[c]{0.9\textwidth}
		\centering
		\includegraphics[width=0.7\linewidth]{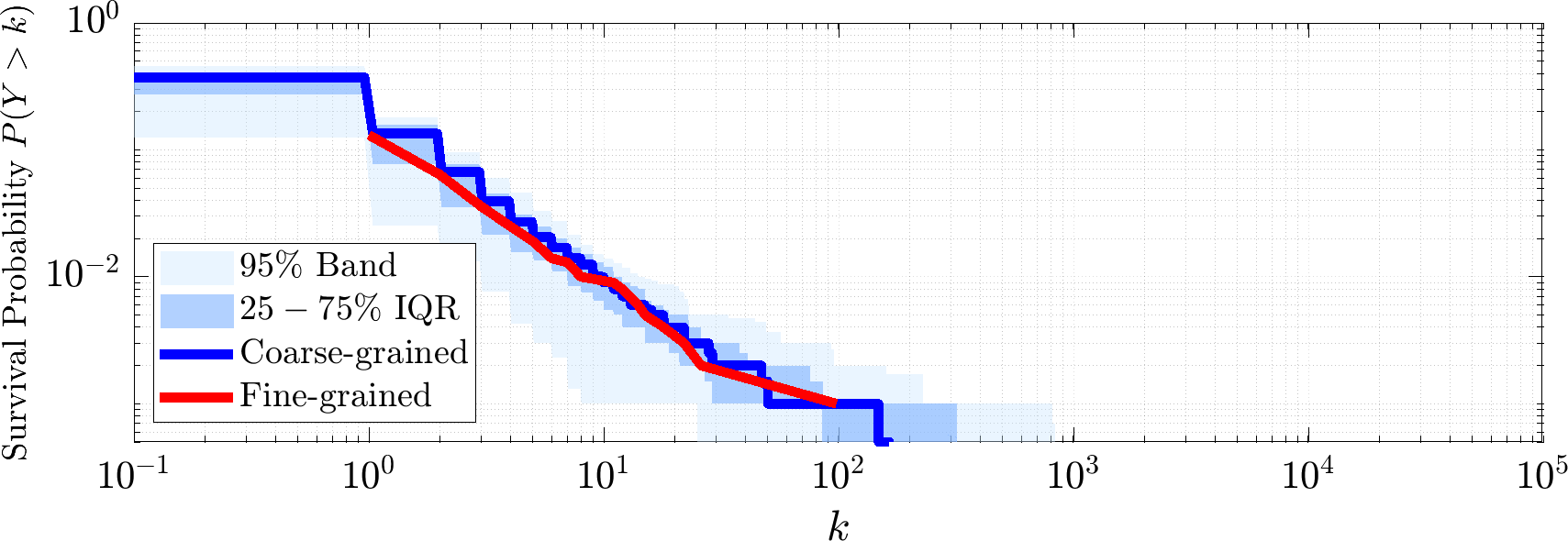}
		\caption{$n=10^3$}
	\end{subfigure} \\
	\vspace{0.5cm}
	\centering
	\begin{subfigure}[c]{0.9\textwidth}
		\centering
		\includegraphics[width=0.7\linewidth]{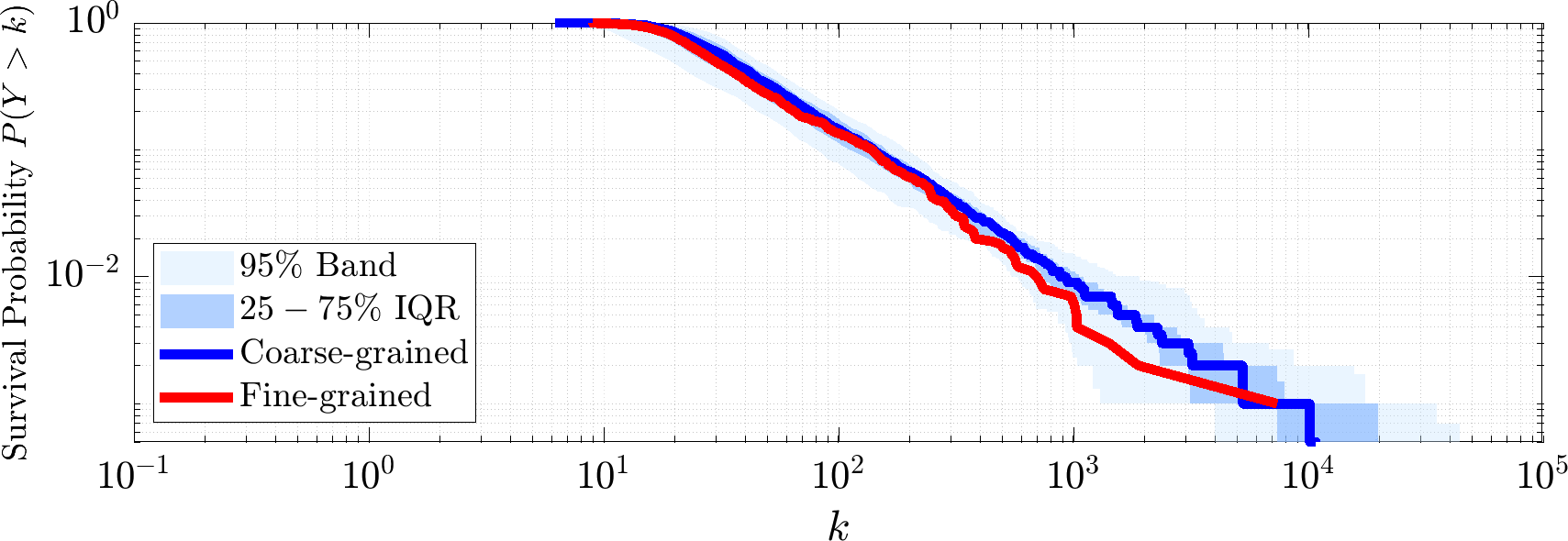}
		\caption{$n=10^5$}
	\end{subfigure}\\
\vspace{0.5cm}
\centering
\begin{subfigure}[c]{0.9\textwidth}
	\centering
	\includegraphics[width=0.7\linewidth]{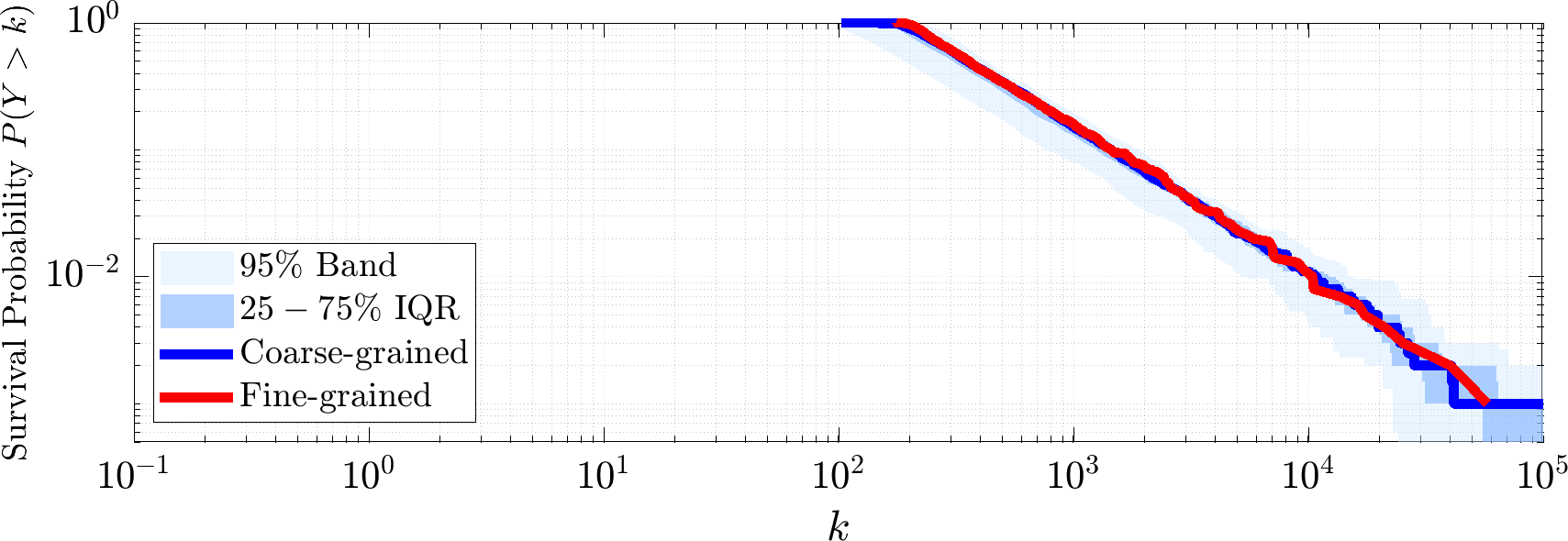}
	\caption{$n=10^6$}
\end{subfigure}
	\caption{Occupancy problem interpreted as  in-degree distribution of new visits received by destination bins, under varying numbers of ball draws \( n \) (i.e., total trips). The plot reports results from $S=50$ independent realizations using both fine-grained and coarse-grained Monte Carlo generative models. The observed survival distribution exhibits a fat-tailed profile, consistent with a power-law probability density function with exponent \( \mu = 2.2 \), in agreement with the predictions of the continuum approximation in eq.\eqref{eq_occup}.}
	 \label{occupancies}
\end{figure}

\paragraph{$\bullet$ Vacancy problem}
The vacancy problem is about the number of empty bins after \( n \) throws and it measures how many destinations (e.g., locations, urban zones, service facilities) remain unvisited or unused after \( n \) individuals make travel choices, indicating underutilized areas or infrastructural excess. In particular we will investigate the number of destinations \( E_n \) with no visit after \( n \) trips. Let \( x \in [a, x_M] \) denote the latent attractiveness of a destination, drawn from a fat-tail distribution \( \rho(x) \sim \rho_0 x^{-\eta} \), with \( \eta > 1 \) and support \( [a, x_M] \subset \mathbb{R}_+ \), where \( a > 0 \). The probability that a destination with attractiveness \( x \) receives a trip in a single throw is given by the kernel \( \nu(x) = \nu_0 x^\alpha \), with \( \nu_0 > 0 \) and \( \alpha > 0 \). Assuming independent and identically distributed throws, the probability that destination \( x \) receives no visits after \( n \) trips is \( p_x(0 \mid n) = (1 - \nu_0 x^\alpha)^n \). Therefore, the expected fraction of vacant destinations is given by the integral \( P(0, n) = \int_a^{x_M} (1 - \nu_0 x^\alpha)^n \rho_0 x^{-\eta} dx \). Setting \( \beta = -\nu_0 \), we rewrite this as a binomial differential \( I_\alpha := \int_a^{x_M} (1 + \beta x^\alpha)^n x^{-\eta} dx \). For integer \( n \), the antiderivative is expressible in terms of the incomplete Beta function \cite{chaudhry1994generalized} as \( P(0, n) = \frac{\rho_0}{\alpha} (-\nu_0)^{\frac{1 - \eta}{\alpha}} \left[ B(-\nu_0 x^\alpha; \tfrac{1 - \eta}{\alpha}, n) \right]_{x = a}^{x = x_M} \). Considering \( x_M < \infty \) as the natural cutoff $N^{1/(\eta-1)}$, the integral is finite and we may investigate the asymptotic regime as \( n \to \infty \). Define \( s = \frac{1 - \eta}{\alpha} \), and write \( P(0, n) \approx \frac{\rho_0}{\alpha} (n \nu_0)^{-s} \Gamma(s, n \nu_0 a^\alpha) \), where \( \Gamma(s, x) \) is the upper incomplete Gamma function. For large \( n \), the asymptotic expansion \( \Gamma(s, x) \sim x^{s - 1} e^{-x} \) yields to the vacancy distribution of visits as:
\begin{align}
 P(0, n)=&\tfrac{\rho_0}{\alpha\, \nu_0^{\frac{1 - \eta}{\alpha}}}
 \left[
 B_{}
 \left( \nu_0\,N^{\frac{\alpha}{\eta - 1}}; \tfrac{1 - \eta}{\alpha},\, n + 1 \right)
 -
 B_{}
 \left(\nu_0 a^\alpha; \tfrac{1 - \eta}{\alpha},\, n + 1 \right)
 \right],
 \\  \sim & \; \tfrac{\rho_0a^{1-\eta}}{\alpha \nu_0 a^{\alpha}  }  \cdot \frac{\exp\{-(\nu_0 a^\alpha) n \} }{n} \qquad \text{as }\; n\to \infty \label{eq_asy_empty}
\end{align}
This shows that the expected fraction of vacant destinations decays exponentially fast with a power-law prefactor. If we analogously model incoming trip probabilities, and assume independence between origin and destination effects, the expected fraction of destinations that are fully disconnected (neither origins nor destinations of any trip) becomes \( P(0, n) = P_{\text{out}}(0 \mid n) \cdot P_{\text{in}}(0 \mid n) \), with each term admitting similar asymptotic scaling.
Fig.~\ref{fig_empty} presents the results of Monte Carlo simulations based on both fine-grained and coarse-grained generative procedures, alongside the analytical prediction obtained from the continuum-grained framework.
\begin{figure}[!ht]
	\centering
	\begin{subfigure}[c]{0.9\textwidth}
		\centering
		\includegraphics[width=0.7\linewidth]{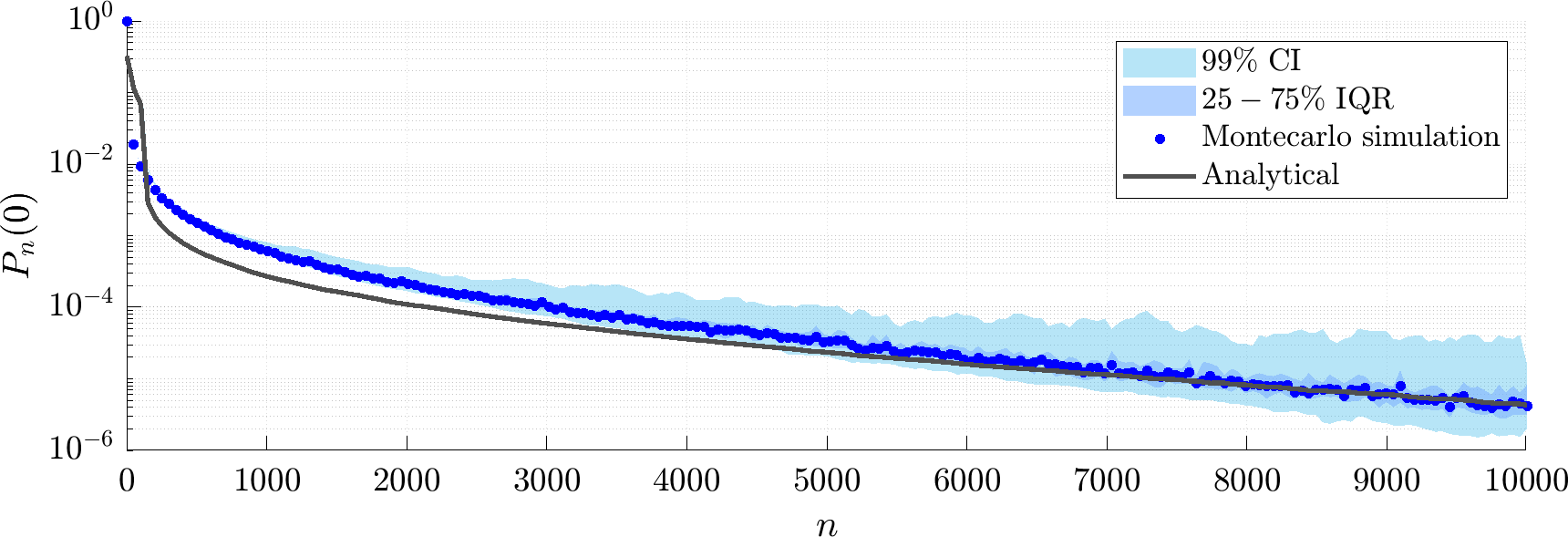}
		\caption{Coarse-grained}
	\end{subfigure} \\
	\vspace{0.65cm}
	\centering
	\begin{subfigure}[c]{0.9\textwidth}
		\centering
		\includegraphics[width=0.7\linewidth]{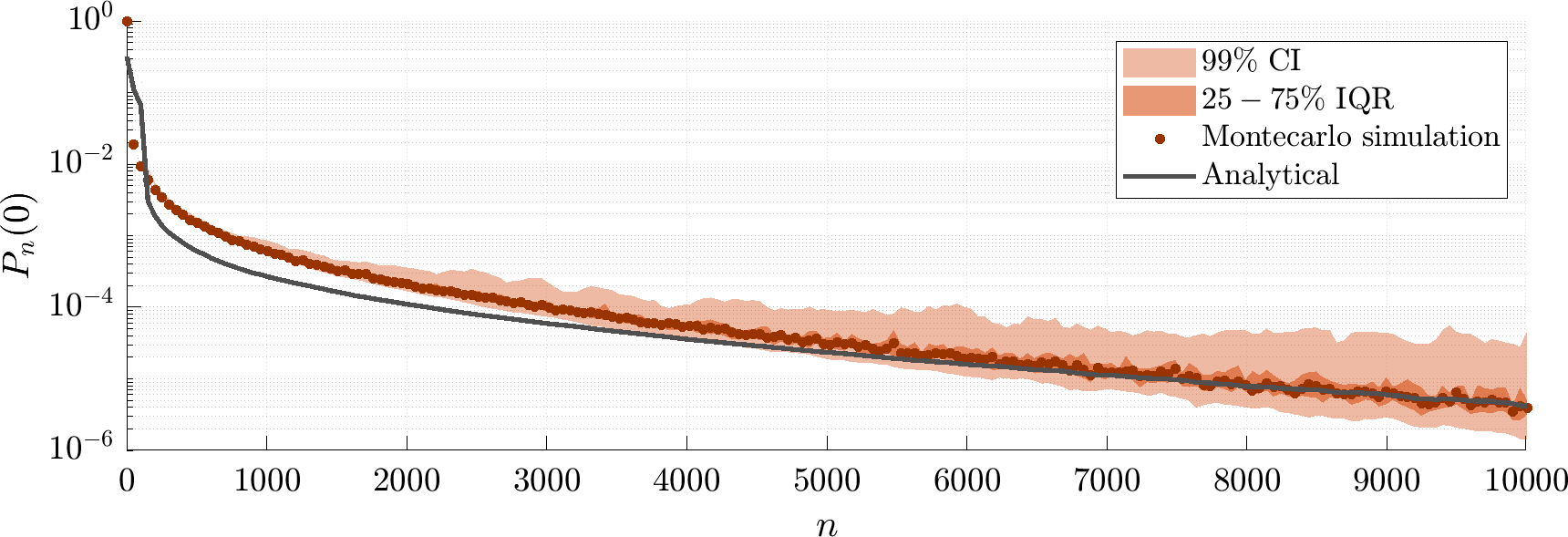}
		\caption{Fine-grained}
	\end{subfigure}
\caption{Vacancy problem as the expected fraction of empty destination bins after \( n = 10^5 \) draws, computed for (a) the coarse-grained ensemble graph and (b) the fine-grained ensemble graph, each averaged over \( S = 50 \) independent realizations. Both results show good agreement with the asymptotic analytical expression derived from the continuum-grained framework, as given in Eq.~\eqref{eq_asy_empty}.} \label{fig_empty}
\end{figure}

\paragraph{$\bullet$ Coverage problem}
 Assuming the concentration assumption, we approximate the stopping-time $\sum_{i=1}^N e^{-n_b \nu_0 x_i^\alpha}$ as $N \int_a^\infty \rho_0 x^{-\eta} e^{-n_b \nu_0 x^\alpha} \mathrm{d}x$.
 Via the change of variable \(u = n_b \nu_0 x^{\alpha}\), one obtains \(\int_a^\infty \rho_0 x^{-\eta} e^{-n_b \nu_0 x^\alpha} \mathrm{d}x = \tfrac{\rho_0}{\alpha} (n_b \nu_0)^s \Gamma(-s, n_b \nu_0 a^{\alpha})\), where \(s = \tfrac{\eta - 1}{\alpha} > 0\). For \(n_b \nu_0 a^\alpha \gg 1\), we use the asymptotic expansion \(\Gamma(\beta, z) \sim z^{\beta - 1} e^{-z}\), yielding \(\int_a^\infty \rho_0 x^{-\eta} e^{-n_b \nu_0 x^\alpha} \mathrm{d}x \sim \tfrac{\rho_0}{\alpha} a^{-(\eta - 1)} (n_b \nu_0)^{-1} e^{-n_b \nu_0 a^\alpha}\). Imposing \(N \int_a^\infty \cdots = b\), and setting \(z = n_b \nu_0 a^\alpha\), we obtain \(z e^z = \tfrac{N \rho_0}{\alpha b} a^{-(\eta - 1)}\), whose solution is given by the Lambert function $W(\cdot)$. Therefore, under the regime \(n_b \nu_0 a^\alpha \gg 1\), we have
\begin{equation}
{n_b = \frac{1}{\nu_0 a^\alpha} \, W\!\left( \frac{N \rho_0}{\alpha b} a^{-(\eta - 1)} \right)}.
\end{equation}
with a relative error of this approximation as \( \mathcal{O}\!\bigl(\tfrac{1}{n_b\nu_{0} a^{\alpha}} \bigr)\).

In our particular choice of the parameters $\eta=2$ and $\alpha=5/6$, we expect to have for $b\ll N $:
\begin{equation}\label{eq_exstop}
\mathbb{E}[T_b] = 6N (\log \tfrac{6N}{5b} -\log \log \tfrac{6N}{5b}) + \mathcal{O}(N)
\end{equation}
where we have used the asymptotic expansion in the case of large argument for the Lambert function.
Fig.~\ref{fig_stopping} presents the results of Monte Carlo simulations based on coarse-grained generative procedures, alongside the analytical prediction obtained from the continuum-grained framework.
\begin{figure}[!ht]
	\centering
	\begin{subfigure}[c]{0.9\textwidth}
		\centering
		\includegraphics[width=0.75\linewidth]{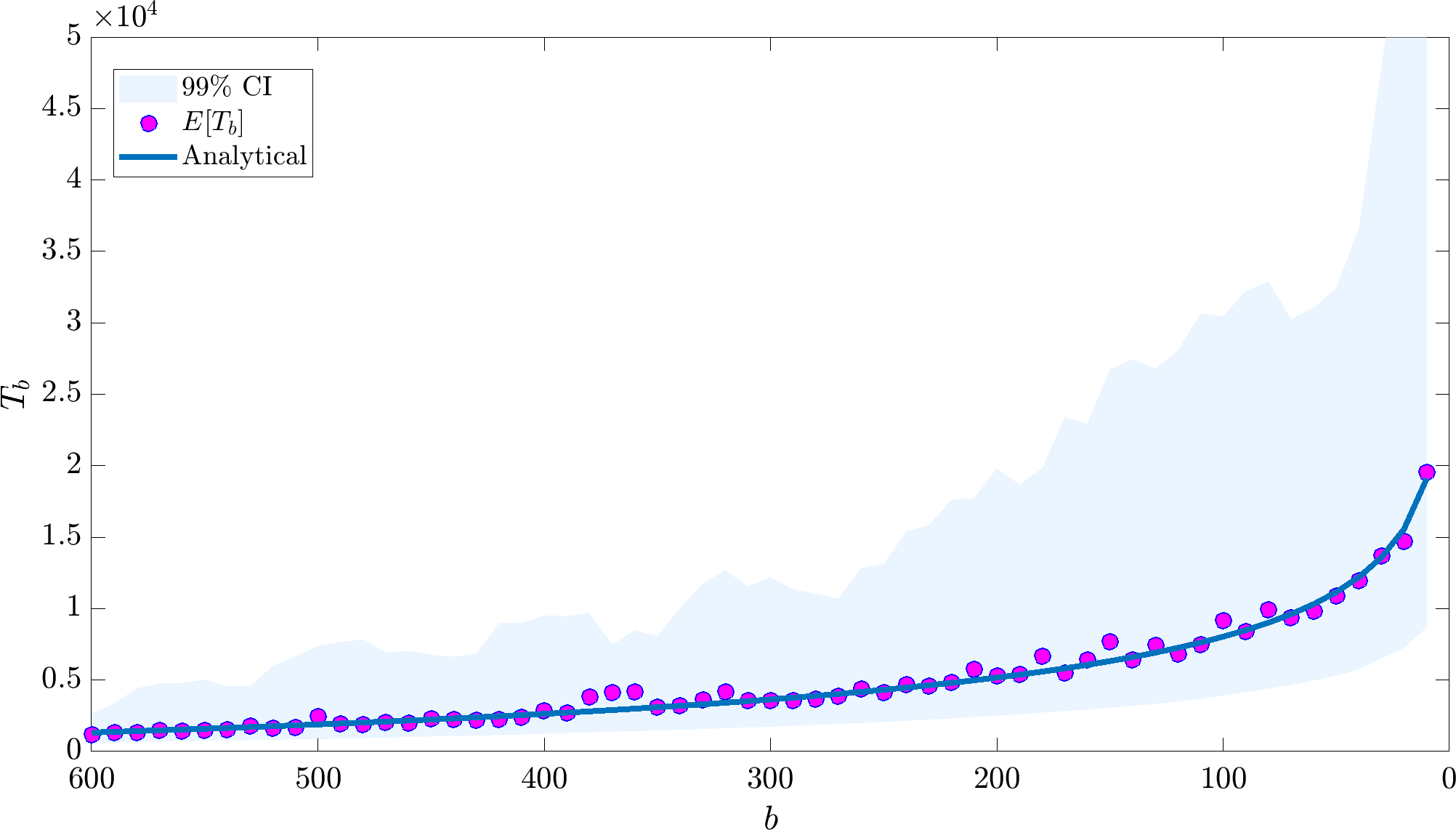}
	\end{subfigure}
		\caption{Coverage problem as the number of balls drawn before observing \( b \) empty bins out of a total of \( N = 10^3 \) bins. The results are consistent with the analytical prediction given by the closed-form expression in Eq.~\eqref{eq_exstop}.} \label{fig_stopping}
\end{figure}

\paragraph{$\bullet$ Overflow problem}
In overflow, the  capacity for each location bin becomes crucial via the capacity function as \(C(x)=C_{0}x\) where now \( x \) is the attractiveness as continuous random variable\footnote{{The constant \( C_0 \) is fixed by imposing \( \sum_{i=1}^N C(x_i) = NM \), approximated as \( C_0 N \mathbb{E}[X] = NM \).}
}.

The expected overflow as, in eq.~\eqref{eq_overflow}, is given by:
\[
\mathbb{E}[R] = N\int_a^{x_\star(n)} \left[ n\nu_0 x^\alpha - C_0 x \right] \rho_0 x^{-\eta} \, dx,
\]
where the threshold is defined by solving \( n\nu_0 x^\alpha = C_0 x \), yielding 
\begin{equation}
 x_\star(n) = \left( \frac{n\nu_0}{C_0} \right)^{\frac{1}{1-\alpha}}=\left(\frac{n}{NM}\frac{\mathbb{E}[x]}{\mathbb{E}[x^{\alpha}]}  \right)^{\frac{1}{1-\alpha}}.
 \end{equation}
Since \( \alpha < 1 \), we have \( x_\star(n) \to \infty \) as \( n \to \infty \), and overflow occurs only on the left tail \( [a, x_\star(n)] \), provided \( x_\star(n) > a \).

The expected overflow, for \( \eta \ne 2 \) and \( \eta \ne \alpha + 1 \) is:
\[
\mathbb{E}[R]=
N\rho_{0}
\Bigl[
\frac{n\nu_{0}\,a^{\alpha-\eta+1}}{\eta-\alpha-1}
+\frac{C_{0}\,a^{\,2-\eta}}{\eta-2}
\Bigr]
-
N\rho_{0}
\Bigl[
\frac{n\nu_{0}\,x_\star^{\,\alpha-\eta+1}}{\eta-\alpha-1}
-\frac{C_{0}\,x_\star^{\,2-\eta}}{\eta-2}
\Bigr] .
\]
In the case study of interest of \( \eta = 2 \), we instead find:
\begin{equation}
\mathbb{E}[R]=
N\rho_{0}
\Bigl[
\frac{n\nu_{0}a^{\alpha-1}}{1-\alpha}
-\frac{C_{0}}{1-\alpha}\,\log n
\Bigr]
+
N\rho_{0}
\Bigl[
C_{0}\log a
-\frac{C_{0}}{1-\alpha}\,
\log\!\Bigl(\tfrac{\nu_{0}}{C_{0}}\Bigr)
\Bigr].
\end{equation}
again with linear leading behavior and a subleading negative logarithmic correction.

Moreover since $\alpha=5/6$ we get the asymptotic behavior as:
\begin{equation}\label{eq_overflow_C}
E[R]\sim n(1+O(\tfrac{n}{\log n})) \qquad \text{ for } \; n\gg \tfrac{6MN}{\log N}
\end{equation}
Figure~\ref{fig_overflow} presents the results of Monte Carlo simulations based on coarse-grained generative procedures, alongside the analytical prediction obtained from the continuum-grained framework.
\begin{figure}[!ht]
	\centering
	\begin{subfigure}[c]{0.9\textwidth}
		\centering
		\includegraphics[width=0.75\linewidth]{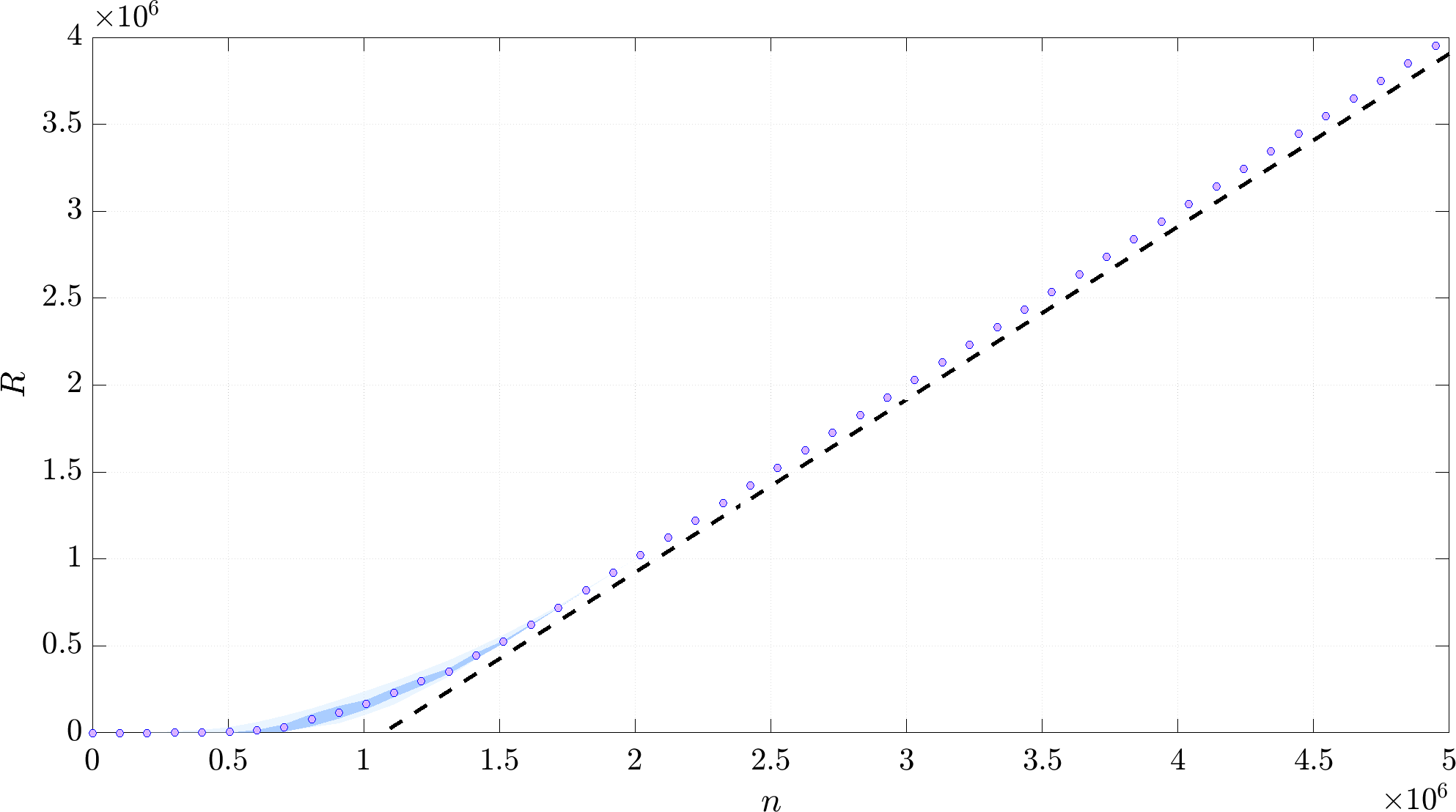}
		\caption{}
	\end{subfigure}
\caption{Total overflow, defined as the residual excess of balls assigned to destination bins beyond their capacity, computed via Monte Carlo simulations of the coarse-grained algorithm over \( S = 50 \) independent realizations. The numerical results, in blu, show perfect agreement with the closed-form analytical prediction, dashed black line, given in Eq.~\eqref{eq_overflow_C} for $n>1\cdot 10^6$, derived within the continuum-grained framework.}
\label{fig_overflow}
\end{figure}

\clearpage


\section{Discussion and Conclusion}

This paper has investigated the structural properties of origin-destination (OD) mobility networks through a formal analogy with combinatorial balls-into-bins problems. In our archetypal framework, each trip is represented as a distinguishable ball assigned from an origin to a destination (bin), generating a heterogeneous, directed mobility graph. By embedding this setting into probabilistic ensembles, we obtain a unified and tractable approach for analyzing flow distributions, congestion phenomena, and latent structural constraints.
Our main contributions arise from coordinating three complementary analytical scales. At the fine-grained level, occupancy and load statistics are derived via exact combinatorial enumeration, refined through asymptotic expansions. At the coarse-grained level, constraints are recast in cumulant form and analyzed through saddlepoint methods applied to moment-generating functions, recovering the fine-grained results in the sparse, large-network regime. At the continuum-grained level, we model interactions through latent-variable kernels and inhomogeneous random graphs, deriving macroscopic observables via integral transforms and the Laplace method. A large-deviation principle for the graphon density enforces consistency with discrete frameworks. This synthesis delivers closed-form approximations and tractable integral representations, complementing and extending our Monte Carlo simulations.

Beyond methodological advances, our framework provides actionable insights into key mobility metrics. In the occupancy problem, the degree of a location serves as a fundamental centrality measure, revealing spatial heterogeneities in connectivity and identifying hubs that dominate accessibility and flow concentration \cite{jacob2017measure}. Empirically, scale-free degree distributions, also reproduced in our data-driven simulations, highlight the dual role of hubs as efficiency drivers and vulnerability points. From a policy perspective, these distributions can inform targeted interventions: high-degree hubs may require capacity upgrades, multimodal integration, and congestion management, while low-degree areas may benefit from improved connectivity to mitigate spatial inequalities.
We have also analyzed allocation-load phenomena, vacancy, coverage, and overflow, consistently across the three scales. Vacancy quantifies underutilized locations, guiding strategies to improve accessibility in underserved areas. Coverage adopts a stopping-time perspective to inform service frequency, routing, and resource allocation for timely reach across the network. Overflow measures capacity exceedance, serving as an indicator of congestion and operational stress at critical destinations. Together, these metrics form a compact diagnostic toolkit for balancing utilization, improving equity, and strengthening resilience in urban mobility systems. Understanding these commuting dynamics can help policymakers and urban planners address inefficiencies, congestion, and mismatches in job-market accessibility.

Looking forward, our framework can be extended to incorporate additional structural constraints and network formation mechanisms. A natural step is the development of fine-grained and coarse-grained ensembles under fixed row margins, imposing constraints defined by the degree sequence, in the spirit of configurational models \cite{fosdick2018configuring,casiraghi2021configuration}\footnote{We note that coarse-grained ensemble graphs reduce to canonical ensemble graphs \cite{squartini2015breaking,cimini2019statistical,squartini2017maximum,giuffrida2023description,bianconi2009entropy} when soft constraints, i.e., constraints on expected values, are imposed.}. Further structural properties can be incorporated, such as degree homophily (assortativity), the tendency of nodes with similar degrees to connect preferentially, and degree transitivity (clustering), which captures triadic closure. Embedding these features within our ensembles would enrich the realism of the models, enabling more accurate predictions and a deeper understanding of emergent structural patterns in mobility networks.
In addition, two complementary directions appear especially promising. The first is the use of multilayer inhomogeneous random graph representations \cite{de2013mathematical,caldarelli2025lessons}  to capture the multimodal structure of transportation systems, balancing realism and tractability in modeling interactions that span multiple modes. The second is temporal network analysis, where mobility is represented as an evolving sequence of graphs \cite{hanneke2010discrete,zhang2017random,vanni2024visit}, enabling flow-based centrality measures, stochastic assignment under capacity limits, and resilience analysis of time-dependent congestion. 

Embedding these advances within ensemble formulations would yield models that remain mathematically rigorous while more closely aligned with the operational complexities of real mobility systems, where analytical methods and numerical simulations work hand in hand to capture, comprehend, and reproduce the underlying dynamics of real phenomena, while also supporting robust evaluation of policy interventions.


		\appendix
		\section{Appendix: Proofs}\label{sec:sample:appendix}

\subsection{Proof of Theorem \ref{th_macromicro}}\label{app_th_macromicro}

\begin{proof}[Coarse-grained]
	Under the asymptotic limit of infinitely many attempts, $\ell \to \infty $, and $p_{i,j}\ll 1$ so that $\ell p_{i,j} < \infty $ , we can write the conditional distribution of coarse-grained graphs as: 
	\begin{align*}
	\mathbb{P}_{\mathcal{C}}\left(\mathcal{A} \mid \sum_{i,j=1}^N a_{ij} = n\right) & = 
	\frac{\mathbb{P}_{\mathcal{C}}(\mathcal{A})}{\mathbb{P}_{\mathcal{C}}(n)}
	=\frac{\mathbb{P}_{\mathcal{C}}\left(\mathcal{A} \cap \left(\sum_{i,j} a_{ij} = n\right)\right)}{\mathbb{P}_{\mathcal{C}}\left(\sum_{i,j} a_{ij} = n\right)}
	\end{align*}
	\paragraph{1}
	The numerator can be approximated at higher orders as:
	\begin{align*}
	\mathbb{P}_{\mathcal{C}}\left(\mathcal{A} \cap \left(\sum_{i,j} a_{ij} = n\right)\right) &= \prod_{i,j=1}^N \binom{\ell}{a_{ij}} p_{ij}^{a_{ij}} (1 - p_{ij})^{\ell - a_{ij}} \\
	& \sim e^{-\ell \sum_{i,j} p_{ij} + \frac{\ell \sum_{i,j} p_{ij}^2}{2}} 
	\prod_{i,j=1}^N \frac{(\ell p_{ij})^{a_{ij}}}{a_{ij}!} 
	\cdot \left(1 + \sum_{i,j} p_{ij}^2\right).
	\end{align*}
	For large \(\ell\), the exponential can be expanded using a Taylor series:
	\[
	e^{-\frac{\ell \sum_{i,j} p_{ij}^2}{2}} \sim 1 - \frac{\ell \sum_{i,j} p_{ij}^2}{2} + \mathcal{O}\left(\left(\frac{\ell \sum_{i,j} p_{ij}^2}{2}\right)^2\right).
	\]
	With this adjustment, the full approximation becomes:
	\[
	\prod_{i,j=1}^N \binom{\ell}{a_{ij}} p_{ij}^{a_{ij}} (1 - p_{ij})^{\ell - a_{ij}} 
	\sim e^{-\ell \sum_{i,j} p_{ij}} 
	\prod_{i,j=1}^N \frac{(\ell p_{ij})^{a_{ij}}}{a_{ij}!} 
	\cdot \left(1 + \sum_{i,j} p_{ij}^2\right) 
	\cdot \left(1 - \frac{\ell \sum_{i,j} p_{ij}^2}{2}\right).
	\]
	
	To combine the higher-order corrections:
	\[
	\left(1 + \sum_{i,j} p_{ij}^2\right) \cdot \left(1 - \frac{\ell \sum_{i,j} p_{ij}^2}{2}\right) 
	\sim 1 + \sum_{i,j} p_{ij}^2 - \frac{\ell \sum_{i,j} p_{ij}^2}{2}.
	\]
	Thus, the final expression of the canonical distribution, becomes:
	\begin{align}
	\mathbb{P}_{\mathcal{C}}(\mathcal{A})
	& \sim e^{-\ell \sum_{i,j} p_{ij}} 
	\prod_{i,j=1}^N \frac{(\ell p_{ij})^{a_{ij}}}{a_{ij}!} 
	\cdot \left(1 + \sum_{i,j}\frac{a_{ij}p_{ij}}{\ell} - \frac{\ell \sum_{i,j} p_{ij}^2}{2}\right).
	\end{align}
	
	\paragraph{2} The denominator can be found evaluating the generating function as:
	$$
	\mathbb{P}_{\mathcal{C}}(n)=
	\frac{1}{n!} \frac{\partial^n}{\partial z^n} \exp \left(\ell \sum_{i,j=1}^N \ln \left( 1 + (z-1)p_{ij} \right) \right)\Bigg|_{z=0}
	$$
	where the logarithmic term captures the generating function of a sum of independent random variables.
	Then, for \( |z - 1|p_{ij} \ll 1 \), expand \( \ln(1 + (z-1)p_{ij}) \) using the Taylor series:
	\[
	\exp \left(\sum_{i,j=1}^N \ln \left( 1 + (z-1)p_{ij} \right) \right) = \exp\left(
	(z-1) \sum_{i,j=1}^N p_{ij}
	- \frac{(z-1)^2}{2} \sum_{i,j=1}^N p_{ij}^2
	+ \frac{(z-1)^3}{3} \sum_{i,j=1}^N p_{ij}^3
	- \cdots
	\right)
	\]
	\[
	\exp \left( (z-1)\langle p \rangle - \frac{(z-1)^2}{2} \langle p^2 \rangle \right) 
	\approx \sum_{k=0}^\infty \frac{((z-1)\langle p \rangle)^k}{k!} \left( 1 - \frac{(z-1)^2}{2} \langle p^2 \rangle \right).
	\]
	where we have defined:
	\[
	\langle p \rangle = \sum_{i,j=1}^N p_{ij}, \quad \langle p^2 \rangle = \sum_{i,j=1}^N p_{ij}^2.
	\]
	
	\begin{align*}
	\mathbb{P}_{\mathcal{C}}(n) \approx &  \frac{1}{n!} \frac{\partial^n}{\partial z^n} \exp \left( \ell (z-1) \langle p \rangle - \ell \frac{\langle p^2 \rangle}{2} (z-1)^2 \right) \bigg|_{z=0}
	\end{align*}

	\paragraph{3}
	Finally, in the case of large and sparse graph ($\ell \sum_{i,j=1}^Np^2_{ij}\ll 1$),  we have:
	\begin{align*}
	\mathbb{P}_{\mathcal{C}}\left(\mathcal{A} \mid \sum_{i,j=1}^N a_{ij} = n\right)&=\frac{\prod_{i,j=1}^N \frac{(\ell p_{ij})^{a_{ij}}}{a_{ij}!} e^{-\ell p_{ij}} \left[1 + \sum_{i,j}\frac{a_{ij}p_{ij}}{\ell} - \frac{\ell \sum_{i,j} p_{ij}^2}{2} \right]}
	{\frac{1}{n!} e^{-\ell \sum p_{ij}}
		(\ell \sum p_{ij})^n
		\left( 1 - \frac{\ell}{2} \frac{n(n-1)}{(\ell \sum p_{ij})^2} \sum p_{ij}^2 \right).
	} \\
	&=n! \prod_{i,j} \frac{\left(\frac{p_{ij}}{\sum p_{ij}}\right)^{a_{ij}}}{a_{ij}!}
	\left(1 + \sum_{i,j} \frac{a_{ij} p_{ij}}{\ell} - \frac{\ell}{2} \sum_{i,j} p_{ij}^2
	+ \frac{\ell n(n-1)}{2(\ell \sum p_{ij})^2} \sum p_{ij}^2 \right)
	\end{align*}
	where we have used the Stirling  approximation of Binomial coefficients in the numerator \cite{barbour1992poisson}\footnote{The expansion of the binomial coefficient for large $\ell$ so that $\ell \gg a_{ij}$ is $\binom{\ell}{a_{ij}} \sim \frac{\ell^\ell}{a_{ij}! \, \ell^{a_{ij}} (\ell - a_{ij})^{\ell - a_{ij}}}.$. Moreover for small probabilities we can write the Taylor approximation $(1 - p_{ij})^{\ell - a_{ij}} \sim e^{-\ell p_{ij}} e^{\frac{\ell p_{ij}^2}{2}} e^{a_{ij} p_{ij}}.$ 
		So that the binomial probabilities become:
		\[	\binom{\ell}{a_{ij}} p_{ij}^{a_{ij}} (1 - p_{ij})^{\ell - a_{ij}}
		\sim \frac{(\ell p_{ij})^{a_{ij}}}{a_{ij}!} e^{-\ell p_{ij}} 
		\left( 1 + \frac{a_{ij} p_{ij}}{\ell} + \frac{\ell p_{ij}^2}{2} \right).
		\] }. 
	In the denominator we have used the second order approximation of generating function as written in \ref{app_th_microcanoccu} together with property from Corollary \ref{coro_ell}, that $n \sim \ell\sum_{i,j}p_{ij}$, so the last two terms cancel out.
	
	\begin{equation}
	\mathbb{P}_{\mathcal{C}}\left(\mathcal{A} \mid \sum_{i,j=1}^N a_{ij} = n\right) = 	\frac{\mathbb{P}_{\mathcal{C}}(\mathcal{A})}{\mathbb{P}_{\mathcal{C}}(n)}= n! \prod_{i,j} \frac{\left(\frac{p_{ij}}{\sum p_{ij}}\right)^{a_{ij}}}{a_{ij}!}
	\left(1 + \sum_{i,j} \frac{a_{ij} p_{ij}}{\ell} \right)
	\end{equation}

\end{proof}

\begin{proof}[Fine-grained] 
	By using the Stirling approximation for binomial coefficients:	
	\begin{equation*}
	\binom{m_{ij}}{a_{ij}}\approx \frac{m^m e^{-m} \sqrt{2\pi m}}{a^a e^{-a} \sqrt{2\pi a} (m-a)^{m-a} e^{-(m-a)} \sqrt{2\pi (m-a)}} \approx \frac{m^{m + \frac{1}{2}}}{\sqrt{2\pi} \, a^{a + \frac{1}{2}} (m-a)^{(m-a) + \frac{1}{2}}}.
	\end{equation*}	
	we can approximate the microcanonical distribution as:
	\begin{align*}
	\mathbb{P}_m(\mathcal{A}) = &\frac{\binom{m_{11}}{a_{11}} \binom{m_{12}}{a_{12}} \cdots \binom{m_{ij}}{a_{ij}} \cdots \binom{m_{NN}}{a_{NN}}}{\binom{NM}{n}}  
	=  \frac{n! \, (NM-n)!}{(NM)!} \prod_{i,j=1}^N \frac{m_{ij}!}{a_{ij}! \, (m_{ij}-a_{ij})!}\\
	= & n! \prod_{k=0}^{n-1} \frac{1}{NM - k} \prod_{i,j=1}^N \frac{1}{a_{ij}!} \prod_{k'=0}^{a_{ij}-1} {(m_{ij} - k')}
	=  n! \prod_{i,j=1}^N \frac{1}{a_{ij}!} \left( \prod_{k=0}^{n-1} \frac{1}{NM - k} \right)^{\frac{a_{ij}}{n}} \prod_{k'=0}^{a_{ij}-1} (m_{ij} - k') \\
	=& n! \prod_{i,j=1}^N \frac{1}{a_{ij}!} \left( \frac{m_{ij}}{NM} \right)^{a_{ij}} \left( \prod_{k=0}^{n-1} \left( 1 - \frac{k}{NM} \right) \right)^{-\frac{a_{ij}}{n}} \prod_{k'=0}^{a_{ij}-1} \left( 1 - \frac{k'}{m_{ij}} \right)\\
	=& n! \prod_{i,j=1}^N \frac{1}{a_{ij}!} \left( \frac{m_{ij}}{NM} \right)^{a_{ij}} 
	e^{-\frac{a_{ij}}{n} \sum_{k=0}^{n-1} \ln\left( 1 - \frac{k}{NM} \right) + \sum_{k'=0}^{a_{ij}-1} \ln\left( 1 - \frac{k'}{m_{ij}} \right)}\\
	=& n! \prod_{i,j=1}^N \frac{1}{a_{ij}!} \left( \frac{m_{ij}}{NM} \right)^{a_{ij}} 
	e^{-\frac{a_{ij}(n-1)}{2NM} - \frac{(2n-1)a_{ij}(n-1)}{12 (NM)^2} - \frac{a_{ij}(a_{ij}-1)}{2m_{ij}} + \frac{(2a_{ij}-1)a_{ij}(a_{ij}-1)}{12 m_{ij}^2}}\\
	\approx & \; n! \exp\left(- \frac{1}{2}\sum_{i,j=1}^N \frac{a_{ij}(a_{ij}-1)}{m_{ij}} + \frac{n^2}{2NM}\right)\prod_{i,j=1}^N \frac{\left(\frac{m_{ij}}{NM}\right)^{a_{ij}}}{a_{ij}!}\\
	\end{align*}
	
	Finally, under the assumption that local sparsity scales in the same way of global sparsity than $\frac{a_{ij}}{m_{ij}}\sim \frac{n}{NM} $, in the regime of the law of large numbers, we get $\sum_{i,j=1}^N \frac{a_{ij}^2}{m_{ij}} = \frac{n^2}{NM} + \mathcal{O}\left( \frac{nN}{M} \right)
	$. So the microcanonical distribution can be written at the first order correction as:
	\begin{equation}
	P_m(\mathcal{A}) \approx n! \prod_{i,j=1}^N \frac{1}{a_{ij}!} \left( \frac{m_{ij}}{NM} \right)^{a_{ij}}\left(1+\mathcal{O}(\tfrac{nN}{M}) \right) .
	\end{equation}	
	in the asymptotic limit $nN\ll M=hN$ so $n\ll \mathfrak{n}$ 
	\footnote{In the Taylor expansion of the exponential we get that $\mathbb{P}_m(\mathcal{A})$ is:
		$$
			\approx n! \prod_{i,j=1}^N \frac{1}{a_{ij}!} \left( \frac{m_{ij}}{NM} \right)^{a_{ij}}
		\left( 1 + \tfrac{a_{ij}}{2} \left( \tfrac{n-1}{NM} - \tfrac{a_{ij}-1}{m_{ij}} \right) 
		+ \tfrac{a_{ij}^2}{8} \left( \tfrac{n-1}{NM} - \tfrac{a_{ij}-1}{m_{ij}} \right)^2 
		- \tfrac{a_{ij}}{12} \left( \tfrac{(2n-1)(n-1)}{(NM)^2} - \tfrac{(2a_{ij}-1)(a_{ij}-1)}{m_{ij}^2} \right) \right)
		$$ 
		The conditions from Taylor expansion of the exponential requires that:\[
		a_{ij}\frac{ m_{ij} n - NM a_{ij}}{NM \, m_{ij}} \ll 1 \quad \text{and} \quad 
		\frac{n^2 m_{ij}^2 - (NM)a_{ij}^2}{NM \, m_{ij} (n m_{ij} -NMa_{ij})} 
		= \frac{n m_{ij} + n \, m_{ij} a_{ij}}{NM \, m_{ij}} \ll 1.
		\]
		The sufficient conditions: the first condition implies that the total number of "allocations" $n$ is small compared to the total "capacity" $NM$, so that each link allocation behaves nearly independently because the pool (or total capacity) is much larger than the number of allocations. The local sparsity  ensures that no individual cell becomes saturated. It ensures that approximations involving binomial probabilities or multinomial coefficients remain valid. In short we have made use of sufficient conditions as:
		\[
		\frac{n}{NM} \ll 1,  \quad \frac{a_{ij}}{m_{ij}} \ll 1.
		\]
		which ensures global and  local sparsity, respectively.}.

	Finally, the goal is to determine how many total balls need to be drawn for the observed proportions   to match the expected proportions   on average or within a specific level of confidence. If the number of residents per origin location is considered infinitely large than it becomes a multinomial distribution $n! \prod_{i,j=1}^N \frac{\left(\frac{m_{ij}}{NM}\right)^{a_{ij}}}{a_{ij}!}$. 
	
	Finally, in the case of large ($ n \to \infty$) and sparse network ($n\ll NM$), we can write\footnote{The sparse graph regime is the condition where the total number of allocations $n$ is distributed sparsely across a large number of pairs $(i,j)$. In particular, we know that $a_{ij} $ represents the number of links allocated to the pair $(i,j)$ under the constraint that $\sum_{i,j=1}^N a_{ij}=n $ and $m_{ij}$ is the total possible number of links assignable to the pair $(i,j)$ and its normalized value $\pi_{ij}=\frac{m_{ij}}{NM}$ is the the relative importance (proportion) of $(i,j)$ in the network.}:
	\begin{align*}
	\mathbb{P}_{\mathcal{C}}(\mathcal{A}|n)\sim  n! \prod_{i,j=1}^N \frac{ \pi_{ij}^{a_{ij}}}{a_{ij}!} =\mathbb{P}_m(\mathcal{A}) 
	\end{align*}
	where  we have set $\pi_{ij}:=\frac{m_{ij}}{NM}=\frac{p_{ij}}{\sum p_{ij} } $.
	In particular, since $p_{ij}<1$ and $\pi_{ij}<1$ $\forall i,j$ but only $\sum_{i,j} \pi_{ij}=1$.  In this condition, in the limit of large network  $N\to \infty$ and $n\ll N$, the population size $NM$ is very large compared to the sample size $n$ so that sampling without replacement is not too much different than sampling with replacement, and hence, for  the micro-canonical ensemble graph,  the multivariate hypergeometric distribution converges to a multinomial distribution with ${\pi}_{ij}=\frac{m_{ij}}{NM}$ and applying Stirling’s formula of binomial coefficients.
\end{proof}


\subsection{Proof of Theorem \ref{th_canoccu} }\label{app_th_canoccu}
\begin{proof}
	To find the probability \( P(k_1, k_2, \dots, k_N) \), sum over all configurations \( \mathcal{A} \) that satisfy \( \sum_j a_{ij} = k_i \) for each \( i \) is for large and sparse network : 
	\begin{align*}
	P(k_1, k_2, \dots, k_N) = & \Pr\left[\bigcap_{i=1}^{N}(Y^{(M)}_{i}=k_{i} | \sum_i Y^{(M)}_{i}=n)\right] =\sum_{\substack{\mathcal{A} \\ \text{s.t. } \sum_j a_{ij} = k_i \, \forall i}} P_C(\mathcal{A} \big | \sum a_{ij}=n) \\
	=&   \frac{n!}{\prod_{i=1}^N k_i!} \prod_{i,j=1}^N \left( \frac{p_{ij}}{\sum_{j=1}^N p_{ij}} \right)^{k_i} \left[ 1 + \mathcal{O}\left( \tfrac{k_i^2}{\ell}\right) \right].
	\end{align*}
	where we have used that $(\ell p_{ij}) p_{ij} \ll 1$ and the fact that  sparse network condition includes the case that $\ell/N \ll 1$.
	
	Then we marginalize over $k_1,\ldots,k_{i-1},k_{i+1},\ldots,k_N$ such that $\sum_{i=1}^Nk_i=n$,  we can write the marginal probability density as:
	\begin{align*}
	P(k_i = k) =& \binom{n}{k} \left(\frac{\sum_j p_{ij}}{\sum_{i,j}p_{ij}} \right)^k \left( 1-\frac{\sum_j p_{ij}}{\sum_{i,j}p_{ij}} \right)^{n-k} \left[ 1 + \mathcal{O}\left( \frac{k^2}{\ell} \right) \right]
	\end{align*}
	
	Finally, the for large network  $n\gg 1$, the expected number of boxes with $k$ balls, is the occupancy:
	\begin{equation*}
	\mathbb{E}[N_k]=\sum_{i=1}^N \binom{n}{k}p_i^k (1-p_i)^k\left[ 1 + \mathcal{O}\left( \frac{k^2}{\ell} \right) \right]
	\end{equation*}
	where $p_i = \frac{\sum_j p_{ij}}{\sum_{i,j}p_{ij}}$, since $n\gg 1$ implies $\ell \gg 1$.

\end{proof}

\subsection{Proof of Theorem \ref{th_microcanoccu}}\label{app_th_microcanoccu}
\begin{proof}
	The property derives from the fact that the multivariate hypergeometric distribution is preserved when the counting variables are combined. In the microcanonical ensemble, the probability of realizing a specific graph configuration $\mathcal{A}$  with the given link distribution $\{m_{ij}\}$ is:
	\begin{equation*}
	P_m(\mathcal{A}) = \frac{\prod_{i,j} \binom{m_{ij}}{a_{ij}}}{\binom{NM}{n}},
	\end{equation*}
	where $a_{ij}$ is the number of ways to allocate $m_{ij}$ links between the pair $(i, j)$ under the constraints, i.e.  $m_{ij}$ is the number of potential links between nodes $i$ and $j$ (ball of color $j$ into the bin $i$), $a_{ij}$ is the number of allocated links between nodes  $i$ and $j$, $NM$  is the total number of possible links (balls) in the network, and $n$ is the total number of links (bins) in the network.
	Let us indicate $m_i=\sum_{j=1}^N m_{ij}$ the total number of potential links connected to node (bin) $i$ (row sums of potential links), and let $k_i=\sum_{j=1}^N a_{ij}$ to be the total number of allocated links connected to node $i$ (row sums of allocated links). To find the probability \( P(k_1, k_2, \dots, k_N) \), sum over all configurations \( \mathcal{A} \) that satisfy \( \sum_j a_{ij} = k_i \) for each \( i \):
	
	\[
	\Pr\left[\bigcap_{i=1}^{N}(Z^{(n)}_{i}=k_{i})\right] = P(k_1, k_2, \dots, k_N) = \sum_{\substack{\mathcal{A} \\ \text{s.t. } \sum_j a_{ij} = k_i \, \forall i}} P_m(\mathcal{A})
	\]
	
	Consider the numerator of \( P_m(\mathcal{A}) \):	
	\[
	\prod_{i,j} \binom{m_{ij}}{a_{ij}} = \prod_{i} \left( \prod_{j} \binom{m_{ij}}{a_{ij}} \right)
	\quad \text{and} \quad \sum_{\substack{\{ a_{ij} \} \\ \text{s.t. } \sum_j a_{ij} = k_i}} \prod_{j} \binom{m_{ij}}{a_{ij}} = \binom{m_i}{k_i}
	\]
	where we have summed over all \( \{ a_{ij} \} \) such that \( \sum_j a_{ij} = k_i \) and where \( m_i = \sum_j m_{ij} \) is the total number of potential links connected to node \( i \).
	
	The total numerator after summing over all configurations becomes \(\prod_{i} \binom{m_i}{k_i} \). Thus, the probability \( P(k_1, k_2, \dots, k_N) \) is:
	\begin{equation}\label{eq_multik}
	P(k_1, k_2, \dots, k_N) = \frac{\prod_{i} \binom{m_i}{k_i}}{\binom{NM}{n}}
	\end{equation}
	We can define $ \nu_i = {m_i/}{NM} $ as the proportion of the total number of balls  that are placed into box \( i \). and  defining the falling factorial $m_i^{(k_i)}= k_i!\binom{m_i}{k_i}$.
	\[
	\binom{m_i}{k_i} = \binom{\nu_i NM}{k_i} = \frac{(\nu_i NM)^{(k_i)}}{k_i!}
	\]
	
	where the falling factorial is:
	\[
	(\nu_i NM)^{(k_i)} = \nu_i NM \cdot (\nu_i NM - 1) \cdots (\nu_i NM - k_i + 1)=(\nu_i NM)^{k_i}\prod_{j=0}^{k_i-1} \left(1 - \frac{j}{\nu_i NM}\right)
	\]
	For the denominator we have \[
	\binom{NM}{n} = \frac{(NM)^{(n)}}{n!}=\frac{NM \cdot (NM - 1) \cdots (NM - n + 1)}{n!}=\frac{(NM)^{n}}{n!}\prod_{j=0}^{n-1} \left(1 - \frac{j}{NM}\right)
	\]
	Substituting the expressions into the probability:
	\[
	P(k_1, k_2, \dots, k_N) = \frac{n!}{\prod_{i=1}^N k_i!} \cdot \frac{\prod_{i=1}^N (\nu_i NM)^{(k_i)}}{(NM)^{(n)}}=\frac{n!}{\prod_{i=1}^N k_i!} \prod_{i=1}^N \nu_i^{k_i} \prod_{j=0}^{k_i-1} \left(1 - \frac{j}{\nu_i NM}\right) \bigg/ \prod_{j=0}^{n-1} \left(1 - \frac{j}{NM}\right)
	\]
	where \( \nu_i = \frac{m_i}{NM} \) is the share of items for node \( i \), and \( \prod_{j=0}^{k_i-1} \left(1 - \frac{j}{\nu_j NM}\right) \) accounts for finite-population corrections.
	
	To derive the marginal distribution \( P(k_i) \), we sum over all possible configurations of the remaining counts \( k_j \) for \( j \neq i \) such that \( \sum_{j \neq i} k_j = n - k_i \):
	\[
	P(k_i) = \sum_{\substack{k_1, \dots, k_{i-1}, k_{i+1}, \dots, k_N \\ \sum_{j \neq i} k_j = n - k_i}} P(k_1, k_2, \dots, k_N).
	\]
	
	Isolating the terms for node \( i \), we get  the final expression for the exact marginal probability distribution by integrating over all possible configurations of the other counts, incorporating finite-population corrections due to sampling without replacement. Namely :
	\[
	P(k_i) = \frac{n! \, \nu_i^{k_i}}{k_i!} \prod_{j=0}^{k_i-1} \left(1 - \frac{j}{\nu_i NM}\right) \cdot \frac{1}{\prod_{j=0}^{n-1} \left(1 - \frac{j}{NM}\right)} \sum_{\substack{k_1, \dots, k_{i-1}, k_{i+1}, \dots, k_N \\ \sum_{j \neq i} k_j = n - k_i}} \prod_{j \neq i} \frac{\nu_j^{k_j}}{k_j!} \prod_{l=0}^{k_j - 1} \left(1 - \frac{l}{\nu_j NM}\right).
	\]
	
	In this expression, the term \( \frac{n! \, \nu_i^{k_i}}{k_i!} \prod_{j=0}^{k_i-1} \left(1 - \frac{j}{\nu_i NM}\right) \) represents the probability of selecting exactly \( k_i \) items for node \( i \), accounting for finite-population adjustments, while the sum over the remaining \( k_j \) for \( j \neq i \) captures all configurations of the remaining counts constrained by \( \sum_{j \neq i} k_j = n - k_i \).
	The term $\sum \prod_{j \neq i} \frac{\nu_j^{k_j}}{k_j!}$  represents the exact probability of distributing $n-k_i$ balls across $N-1$ boxes, accounting for each possible configuration with finite-population effects (where the probability of each allocation depends on the previous allocations). 
	Thus we can use the  multinomial approximation for the probability of distributing $n-k_i$ balls across the remaining boxes, with an explicit Big-O error term that accounts for the finite-population corrections:
	\[
	\sum_{\substack{k_1, \dots, k_{i-1}, k_{i+1}, \dots, k_N \\ \sum_{j \neq i} k_j = n - k_i}} \prod_{j \neq i} \frac{\nu_j^{k_j}}{k_j!} \approx \frac{(1 - \nu_i)^{n - k_i}}{(n - k_i)!} + O\left(\frac{(n - k_i)(n - k_i - 1)}{N M}\right).
	\]
	
	To determine the expected number of boxes containing exactly \( k \) balls, define \( X_i^{(k)} \) as an indicator variable for each box \( i \), where \( X_i^{(k)} = 1 \) if box \( i \) has exactly \( k \) balls and \( X_i^{(k)} = 0 \) otherwise. The total number of boxes with exactly \( k \) balls is \( X^{(k)} = \sum_{i=1}^N X_i^{(k)} \). To find the expected number of boxes with exactly \( k \) balls, calculate \( \mathbb{E}[X^{(k)}] = \sum_{i=1}^N \mathbb{E}[X_i^{(k)}] \). Since \( X_i^{(k)} \) is an indicator variable, \( \mathbb{E}[X_i^{(k)}] \) is simply the probability that box \( i \) has exactly \( k \) balls, which is \( P(k_i = k) \). From the marginal distribution derived earlier, the expected number of boxes containing exactly \( k \) balls is then:
	
	\[
	\mathbb{E}[X^{(k)}] = \sum_{i=1}^N P(k_i = k).
	\]
	
	Substituting the expression for \( P(k_i = k) \) from above, in the limit of large network, we get:
	\[
	\mathbb{E}[N_k] = \sum_{i=1}^N \frac{n! \, \nu_i^k}{k!} \frac{(1 - \nu_i)^{n - k}}{(n - k)!}\left( 1+ O\left(\tfrac{n^2 + k^2}{N M}\right)\right).
	\]
	This expression represents the expected number of boxes with exactly \( k \) balls, incorporating the finite-population corrections from the hypergeometric-like structure
	\footnote{ Let us notice that we could  recover the same approximation for the expected occupancy, more easly from eq.\eqref{eq_multik}, considering that the probability that node \( i \) has exactly \( k \) items in this finite population (hypergeometric distribution) is:
		\(
		P(Z_i=k) = \frac{\binom{m_i}{k} \binom{NM - m_i}{n - k}}{\binom{NM}{n}}
		\).
		For large \( N \), we approximate each binomial coefficient:
		\begin{align*}
		\binom{m_i}{k} &\approx \frac{(\nu_i NM)^k}{k!} \left(1 + O\left(\frac{k^2}{NM}\right)\right), \\
		\binom{NM - m_i}{n - k} &\approx \frac{((1 - \nu_i) NM)^{n - k}}{(n - k)!} \left(1 + O\left(\frac{(n - k)^2}{NM}\right)\right), \\
		\binom{NM}{n} &\approx \frac{(NM)^n}{n!} \left(1 + O\left(\frac{n^2}{NM}\right)\right).
		\end{align*}
		
		Substituting these approximations, we obtain:
		\[
		P(Z_i=k) = \frac{n!}{k! (n - k)!} \nu_i^k (1 - \nu_i)^{n - k} \left(1 + O\left(\frac{k^2+ n^2}{NM}\right)\right)
		\]
		
		The expected number of nodes with exactly \( k \) items is
		$
		\mathbb{E}[N_{k}] = \sum_{i=1}^N P(Z_i=k)
		$}.
	This result provides a precise approximation of the expected count with an explicit big-O error bound, indicating the accuracy of the approximation as \( N \) grows.
\end{proof}

\subsection{Proof of Remark~\ref{coro_asym_fin}} \label{app_asym_fine}
\begin{proof}
	Let $T(n, N)$ denote the number of ways in which $n$ balls are distributed among $N$ boxes. Denote by $Q(n, m, r, k)$ the number of ways of distributing the $n$ balls among the $m=N$ boxes so that exactly $r$ boxes have exactly $k$ balls. Then if we suppose that each way is equally likely $\Pr[M_j=r]=Q(n, m, r, k)/T(n, m)$. However, when considering boxes to have different occupancy proportions $\pi_i$, we follow the approach of generating function of occupancy distribution as described in \cite{johnson1977urn}[Ch.3]. We can divide our collection into two groups containing $m'$ and $m''=m-m'$ boxes, we can write:
	\[
	\Pr[M_j = m_j \mid n;m] = \sum_{\substack{n'+ n'' = n,\\g'+ g{''}=g}} \binom{n}{n'} \left( \sum_{i=1}^{m'} \pi_i \right)^{n'} \left( \sum_{i=1}^{m''} \pi_{m'+i} \right)^{n''} \Pr[M_j' = g \mid n'; m'] \Pr[M_j'' = g'' \mid n'', m'']
	\]
	At this point, by using the generating functions of occupancy distributions:
	\begin{align*}
	H_j^{(m)}(z, x) =&  \sum_{n,g=0}^{\infty} \frac{z^{n}}{n!} x^{g} \Pr[M_j = g \mid n,m]
	= H_j^{(m')}\left(z \sum_{i=1}^{m'} \pi_i, x \right) H_j^{(m'')}\left(z \sum_{i=1}^{m''} \pi_{m'+i}, x \right)
	= \prod_{i=1}^{m} H_j^{(1)}(z \pi_i, x)
	\end{align*}
	where we have defined:
	\[
	H_j^{(1)}(z \pi_i, x) = \sum_{n,g=0}^{\infty} \frac{(z \pi_i)^n}{n!} x^{g} \Pr[M_j = g \mid n,1]
	\]
	Where there is only one box, then necessarily it contains all the available balls, and so:
	\[
	\Pr[M_j = g \mid n; 1] = 
	\begin{cases} 
	1 & \text{if } g = 1, n = j, \text{ or } g = 0, n \neq j, \\
	0 & \text{otherwise}.
	\end{cases}
	\]
	So:
	\begin{align*}
	H_j^{(m)}(z \pi_i, x) =& \prod_{i=1}^m e^{z\pi_i} +{(x-1)}\frac{(z\pi_i)^j}{j!}  
	\end{align*}
	Finally, the expected value \( \mathbb{E}[M_j] \)  is calculated by taking the partial derivative of \( H_j^{(m)}(z, x) \) with respect to \( x \), followed by the partial derivative with respect to \( z \), and then evaluating at \( x=1 \) and \( z=0 \):
	\begin{align*}
	\mathbb{E}[N_k] = & \left. \frac{\partial^n}{\partial z^n} \frac{\partial}{\partial x} H_k^{(m)}(z, x) \right|_{\substack{x=1,\\ z=0}}
	= \frac{\partial^n}{\partial z^n} \sum_{i=1}^{m} (z P_i) e^{z (1-\pi_i)} \Big|_{\substack{z=0}}  = \frac{\partial}{\partial z} \sum_{i=1}^{m} \sum_{n'=0}^{\infty} \frac{z^{n'+k} \pi_i^{j}}{n'!k!} (1-\pi_i)^{n'} \Big|_{z=0}\\
	=& \sum_{i=1}^{m} \binom{n}{k}\pi_i^k (1 - \pi_i)^{n-k}
	\end{align*}
	
\end{proof}

\subsection{Proof of Proposition \ref{prop_cvc}}\label{app_cvc}
\begin{proof}
	The proof derives directly from the results proved and discussed in \cite{sevast1973poisson,holst1977some} where we have taken into account that  the parameter \( m \) is the {limiting expected number of empty bins}, formally defined by the asymptotic condition:
	$
	m := \lim_{N \to \infty} \sum_{i=1}^N (1 - \nu_i)^n.
	$
	The asymptotic regime of large and sparse graphs implies that:
	\begin{equation}
	\lim_{N \to \infty} \max_{1 \leq i \leq N} (1 - \nu_i)^n = 0 \quad , \quad \lim_{N \to \infty} \sum_{i=1}^N (1 - \nu_i)^n < \infty.
	\end{equation}	
	For large \( n \) and small \( \nu_i \), use the approximation:
	\[
	(1 - \nu_i)^n \approx e^{-n \nu_i} \quad \Rightarrow \quad \hat{m} \approx \sum_{i=1}^N e^{-n \nu_i}
	\]	\qed
	
	Let $T_b$ is the first time $n$ such that exactly $b$ bins remain empty,	i.e. it is  the stopping time  \( T_b=\min\{n:E_n=b \} \)  to be  the (random) number of balls required until exactly \( b \) bins remain empty.  In the sparse, balanced regime where $\nu_k \sim \tfrac{1}{N}$, we approximate:
	\[
	f_N(T_b) := \sum_{k=1}^N (1 - \nu_k)^{T_b} \approx \sum_{k=1}^N e^{-\nu_k T_b}.
	\]
	The balanced regime condition is also referred to as a regularity or uniform boundedness assumption and it ensures that every bin has a comparable chance of being selected, maintaining a stable, analyzable structure.
	The vacancy problem investigates $T_b$ i.e. the throw on which for the first time exactly $b$ cells remain empty. This is a hitting-time or first passage time problem.  
	Now, let $Z_N := \sum_{k=1}^N e^{-\nu_k T_b}$, then, for fixed $b$, the number of empty bins after $n$ throws is approximately:
	$
	E_n \sim \text{Po}\left( \sum_{k=1}^N e^{-n \nu_k} \right).
	$
	
	Set $n = T_b$ such that $E_n = b$. Conditioning on this Poisson value, the Poisson parameter $Z_N$ has the distribution:
	\[
	Z_N=f_N(T_b) \xrightarrow{d} \text{Gamma}(b+1, 1) = \tfrac{1}{2} \chi^2_{2(b+1)}.
	\]
	which indicates the expected number of trips still empty in a new trip observation of length  $T_b$, and, for $b\gg 1$.
	To estimate \( \mathbb{E}[T_b] \), we use a delta-method expansion around \( n_b \), noting that \( \mathbb{E}[E_{T_b}] = b \), so the expansion yields
	\[
	\mathbb{E}[T_b] = n_b + \frac{\operatorname{Var}[E_{n_b}]}{f_N'(n_b)} + o(1),
	\quad \text{where } f_N(t) = \sum_{i=1}^N e^{-t \nu_i}.
	\]
	
	In the balanced case where \( \nu_i \sim 1/N \), we estimate:
	\[
	f_N'(n_b) = \sum_{i=1}^N \nu_i e^{-n_b \nu_i} = \Theta\left(\frac{b}{N}\right),
	\quad
	\operatorname{Var}[E_{n_b}] = \sum_{i=1}^N e^{-n_b \nu_i} \left(1 - e^{-n_b \nu_i} \right) = \Theta(b).
	\]
	
	Substituting into the expansion gives:
	$
	\frac{\operatorname{Var}[E_{n_b}]}{f_N'(n_b)} = \Theta(N),
	\quad \Rightarrow \quad
	\mathbb{E}[T_b] = n_b + \Theta(N),
	$
	where \( n_b \) is the solution of
	$
	\sum_{i=1}^N e^{-n_b \nu_i} = b.
	$
	
	We can, so, write 
	\begin{equation*}
	\mathbb{E}[T_b] = n_b + \Theta(N),
	\quad \text{where } n_b \text{ solves } \sum_{i=1}^N e^{-n_b \nu_i} = b.
	\end{equation*}
	Under balance condition of $0<C<N\nu_i<D<+\infty$, we can approximate using the Laplace method applied as a boundary approximation:
	\begin{equation*}
	n_b\approx \frac{1}{\min_i \nu_i}\log \frac{N}{b}\approx \frac{N}{C}\log\frac{N}{b}
	\end{equation*}

\end{proof}

\section{Appendix: Graph Code via Stub-Matching}\label{app_stub}
An effective alternative to fine-grained graph generation is the stub-matching approach, inspired by \cite{bayati2010sequential,van2021sequential}, which can be extended to incorporate latent variables. Here, the number of slots between each ordered pair $(i,j)$ is determined by a capacity matrix $m = \{m_{ij}\}$ generated from a multinomial distribution whose parameters are derived from node-specific latent variables. Each node $i$ is assigned a latent attractiveness $x_i$ and a latent trip-generation potential $y_i$, drawn respectively from a truncated Pareto and an exponential distribution. These latent variables determine the edge propensities $p_{ij} = x_i^\alpha y_j^\beta$, which are normalized to define the probabilities $\pi_{ij} = p_{ij} / \sum_{u,v} p_{uv}$. The matrix $m$ is then drawn from $\mathrm{Multinomial}(NM, \{\pi_{ij}\})$, assigning each node pair a number of available slots summing to $NM$ as shown in the pseudo-code \ref{code_stub}.

A specific graph realization is constructed by creating a pool $\mathcal{S}$ containing $m_{ij}$ stubs labeled by $(i,j)$ for all node pairs. Exactly $n$ stubs are then selected uniformly at random (without replacement) from this pool, with each selected stub incrementing the corresponding entry $a_{ij}$ in the adjacency matrix $A = \{a_{ij}\}$. This procedure ensures that $0 \leq a_{ij} \leq m_{ij}$ and $\sum_{i,j} a_{ij} = n$.  

\begin{algorithm}[H] \label{code_stub}
	\caption{Sampling of a fine-grained Latent Variable Graph via Stub-Matching}
	\SetKwInOut{Input}{Input}\SetKwInOut{Output}{Output}
	\Input{Number of nodes $N$; total number of links $n$; total number of slots $NM$}
	\Output{Adjacency matrix $A = \{a_{ij}\}$ such that $a_{ij} \le m_{ij}$ and $\sum a_{ij} = n$}
	
	\BlankLine
	\textbf{Step 1: Sample latent variables}\;
	\For{$i \gets 1$ \KwTo $N$}{
		$x_i \gets$ draw from Pareto$(x_{\min}=1, \eta=2)$\;
		$y_i \gets$ draw from Exponential$(\lambda=1/5)$\;
	}
	
	\BlankLine
	\textbf{Step 2: Compute pairwise propensities and normalize}\;
	\For{$i,j \in \{1,\ldots,N\}$}{
		$p_{ij} \gets x_i^\alpha \cdot y_j^\beta$\;
	}
	$\pi_{ij} \gets p_{ij} / \sum_{u,v} p_{uv}$\;
	
	\BlankLine
	\textbf{Step 3: Generate capacity matrix from multinomial distribution}\;
	$(m_{11}, \ldots, m_{NN}) \gets \text{Multinomial}(NM, \{\pi_{ij}\})$\;
	
	\BlankLine
	\textbf{Step 4: Create stub pool of total size $NM$}\;
	Initialize list $\mathcal{S}$ as an array with $m_{ij}$ copies of each pair $(i,j)$\;
	
	\BlankLine
	\textbf{Step 5: Allocate $n$ links by drawing from the stub pool without replacement}\;
	Select $n$ elements $(i,j)$ from $\mathcal{S}$ uniformly at random (without replacement)\;
	\For{each sampled pair $(i,j)$}{
		$a_{ij} \gets a_{ij} + 1$\;
	}
	
	\BlankLine
	\Return{$A = \{a_{ij}\}$}
\end{algorithm}

Let us observe that in the sparse regime ($n \ll NM$ and $a_{ij} \ll m_{ij}$ for all $i,j$), the distribution of $m_{ij}$ is sharply peaked around its mean $\mathbb{E}[m_{ij}] = NM \pi_{ij}$ with fluctuations of order $\sqrt{NM \pi_{ij}(1-\pi_{ij})}$, and the probability of exceeding slot capacities is negligible. In this limit, the law of large numbers ensures that the conditional law for $a_{ij}$ approximated by a binomial or Poisson distribution:
\[
\mathbb{P}(a_{ij}\,|\,m_{ij}) \approx \binom{m_{ij}}{a_{ij}} \left(\frac{n}{NM}\right)^{a_{ij}} \left(1-\frac{n}{NM}\right)^{m_{ij}-a_{ij}}.
\]

As a result, the stub-matching ensemble provides a computationally efficient and analytically tractable approximation.

		\bibliographystyle{elsarticle-num} 
		\bibliography{cas-refs}

\begin{thebibliography}{10}
\expandafter\ifx\csname url\endcsname\relax
  \def\url#1{\texttt{#1}}\fi
\expandafter\ifx\csname urlprefix\endcsname\relax\def\urlprefix{URL }\fi
\expandafter\ifx\csname href\endcsname\relax
  \def\href#1#2{#2} \def\path#1{#1}\fi

\bibitem{barthelemy2016structure}
M.~Barthelemy, The structure and dynamics of cities, Cambridge University
  Press, 2016.

\bibitem{bettencourt2021introduction}
L.~M. Bettencourt, Introduction to urban science (2021).

\bibitem{du2024unveiling}
Y.~Du, T.~Aoki, N.~Fujiwara, Unveiling realistic mobility patterns with
  home--origin--destination data aggregation, The European Physical Journal
  Plus 139~(5) (2024) 403.

\bibitem{barthelemy2011spatial}
M.~Barth{\'e}lemy, Spatial networks, Physics reports 499~(1-3) (2011) 1--101.

\bibitem{barbosa2018human}
H.~Barbosa, M.~Barthelemy, G.~Ghoshal, C.~R. James, M.~Lenormand, T.~Louail,
  R.~Menezes, J.~J. Ramasco, F.~Simini, M.~Tomasini, Human mobility: Models and
  applications, Physics Reports 734 (2018) 1--74.

\bibitem{aoki2022urban}
T.~Aoki, S.~Fujishima, N.~Fujiwara, Urban spatial structures from human flow by
  hodge--kodaira decomposition, Scientific reports 12~(1) (2022) 11258.

\bibitem{chan2001location}
Y.~Chan, Location theory and decision analysis, Springer, 2001.

\bibitem{de2024modelling}
J.~de~Dios~Ort{\'u}zar, L.~G. Willumsen, Modelling transport, John wiley \&
  sons, 2024.

\bibitem{casiraghi2021configuration}
G.~Casiraghi, V.~Nanumyan, Configuration models as an urn problem, Scientific
  Reports 11~(1) (2021) 13416.

\bibitem{armenter2014balls}
R.~Armenter, M.~Koren, A balls-and-bins model of trade, American Economic
  Review 104~(7) (2014) 2127--2151.

\bibitem{nie2021understanding}
W.-P. Nie, Z.-D. Zhao, S.-M. Cai, T.~Zhou, Understanding the urban mobility
  community by taxi travel trajectory, Communications in Nonlinear Science and
  Numerical Simulation 101 (2021) 105863.

\bibitem{vanni2024visit}
F.~Vanni, A visit generation process for human mobility random graphs with
  location-specific latent-variables: From land use to travel demand, Chaos,
  Solitons \& Fractals 185 (2024) 115175.

\bibitem{cimini2019statistical}
G.~Cimini, T.~Squartini, F.~Saracco, D.~Garlaschelli, A.~Gabrielli,
  G.~Caldarelli, The statistical physics of real-world networks, Nature Reviews
  Physics 1~(1) (2019) 58--71.

\bibitem{bianconi2009entropy}
G.~Bianconi, Entropy of network ensembles, Physical Review E—Statistical,
  Nonlinear, and Soft Matter Physics 79~(3) (2009) 036114.

\bibitem{farkas2004equilibrium}
I.~Farkas, I.~Der{\'e}nyi, G.~Palla, T.~Vicsek, Equilibrium statistical
  mechanicsof network structures, in: Complex networks, Springer, 2004, pp.
  163--187.

\bibitem{hartle2021dynamic}
H.~Hartle, F.~Papadopoulos, D.~Krioukov, Dynamic hidden-variable network
  models, Physical Review E 103~(5) (2021) 052307.

\bibitem{bollobas2007phase}
B.~Bollob{\'a}s, S.~Janson, O.~Riordan, The phase transition in inhomogeneous
  random graphs, Random Structures \& Algorithms 31~(1) (2007) 3--122.

\bibitem{mitzenmacher2017probability}
M.~Mitzenmacher, E.~Upfal, Probability and computing: Randomization and
  probabilistic techniques in algorithms and data analysis, Cambridge
  university press, 2017.

\bibitem{sevast1973poisson}
B.~Sevast’Yanov, Poisson limit law for a scheme of sums of dependent random
  variables, Theory of Probability \& Its Applications 17~(4) (1973) 695--699.

\bibitem{schneider1959gravity}
M.~Schneider, Gravity models and trip distribution theory, Papers in Regional
  Science 5~(1) (1959) 51--56.

\bibitem{sen2012gravity}
A.~Sen, T.~E. Smith, Gravity models of spatial interaction behavior, Springer
  Science \& Business Media, 2012.

\bibitem{di2022gravity}
M.~Di~Vece, D.~Garlaschelli, T.~Squartini, Gravity models of networks:
  Integrating maximum-entropy and econometric approaches, Physical Review
  Research 4~(3) (2022) 033105.

\bibitem{guimera2020bayesian}
R.~Guimer{\`a}, I.~Reichardt, A.~Aguilar-Mogas, F.~A. Massucci, M.~Miranda,
  J.~Pallar{\`e}s, M.~Sales-Pardo, A bayesian machine scientist to aid in the
  solution of challenging scientific problems, Science advances 6~(5) (2020)
  eaav6971.

\bibitem{cabanas2025human}
O.~Cabanas-Tirapu, L.~Dan{\'u}s, E.~Moro, M.~Sales-Pardo, R.~Guimer{\`a}, Human
  mobility is well described by closed-form gravity-like models learned
  automatically from data, Nature Communications 16~(1) (2025) 1336.

\bibitem{anand2009entropy}
K.~Anand, G.~Bianconi, Entropy measures for networks: Toward an information
  theory of complex topologies, Physical Review E—Statistical, Nonlinear, and
  Soft Matter Physics 80~(4) (2009) 045102.

\bibitem{park2004statistical}
J.~Park, M.~E. Newman, Statistical mechanics of networks, Physical Review
  E—Statistical, Nonlinear, and Soft Matter Physics 70~(6) (2004) 066117.

\bibitem{coolen2017generating}
A.~C. Coolen, A.~Annibale, E.~S. Roberts, Generating random networks and
  graphs, Oxford university press, 2017.

\bibitem{lee2018recent}
D.~Lee, B.~Kahng, Y.~Cho, K.-I. Goh, D.-S. Lee, Recent advances of percolation
  theory in complex networks, Journal of the Korean Physical Society 73~(2)
  (2018) 152--164.

\bibitem{squartini2017maximum}
T.~Squartini, D.~Garlaschelli, Maximum-entropy networks: Pattern detection,
  network reconstruction and graph combinatorics, Springer, 2017.

\bibitem{dorogovtsev2022nature}
S.~N. Dorogovtsev, J.~F. Mendes, The Nature of Complex Networks, Oxford
  University Press, 2022.

\bibitem{bianconi2007entropy}
G.~Bianconi, The entropy of randomized network ensembles, Europhysics Letters
  81~(2) (2007) 28005.

\bibitem{janson2011random}
S.~Janson, A.~Rucinski, T.~Luczak, Random graphs, John Wiley \& Sons, 2011.

\bibitem{bollobas1998random}
B.~Bollob{\'a}s, Random graphs, in: Modern graph theory, Springer, 1998, pp.
  215--252.

\bibitem{turova2011survey}
T.~S. Turova, Survey of scalings for the largest connected component in
  inhomogeneous random graphs, in: Random Walks, Boundaries and Spectra,
  Springer, 2011, pp. 259--275.

\bibitem{glasscock2015graphon}
D.~Glasscock, What is... a graphon, Notices of the AMS 62~(1) (2015) 46--48.

\bibitem{dembo2009large}
A.~Dembo, Large deviations techniques and applications, Springer, 2009.

\bibitem{leonard2000large}
C.~L{\'e}onard, Large deviations for poisson random measures and processes with
  independent increments, Stochastic processes and their applications 85~(1)
  (2000) 93--121.

\bibitem{johnson1977urn}
N.~Johnson, S.~Kotz, Urn Models and Their Application: An Approach to Modern
  Discrete Probability Theory, Approach to Modern Discrete Probability Theory,
  Wiley, 1977.

\bibitem{feller1971introduction}
W.~Feller, et~al., An introduction to probability theory and its applications
  (1971).

\bibitem{SELFroy2006small}
S.~Roy, S.~M. Bhattacharjee, Is small-world network disordered?, Physics
  Letters A 352~(1-2) (2006) 13--16.

\bibitem{barbour1992poisson}
A.~D. Barbour, L.~Holst, S.~Janson, Poisson approximation, Oxford University
  Press, 1992.

\bibitem{holst1977some}
L.~Holst, Some asymptotic results for occupancy problems, The Annals of
  Probability 5~(6) (1977) 1028--1035.

\bibitem{kim2018review}
B.~Kim, K.~H. Lee, L.~Xue, X.~Niu, A review of dynamic network models with
  latent variables, Statistics surveys 12 (2018) 105.

\bibitem{rastelli2016properties}
R.~Rastelli, N.~Friel, A.~E. Raftery, Properties of latent variable network
  models, Network Science 4~(4) (2016) 407--432.

\bibitem{boguna2004cut}
M.~Bogun{\'a}, R.~Pastor-Satorras, A.~Vespignani, Cut-offs and finite size
  effects in scale-free networks, The European Physical Journal B 38 (2004)
  205--209.

\bibitem{chaudhry1994generalized}
M.~A. Chaudhry, S.~M. Zubair, Generalized incomplete gamma functions with
  applications, Journal of Computational and Applied Mathematics 55~(1) (1994)
  99--124.

\bibitem{jacob2017measure}
R.~Jacob, K.~Harikrishnan, R.~Misra, G.~Ambika, Measure for degree
  heterogeneity in complex networks and its application to recurrence network
  analysis, Royal Society open science 4~(1) (2017) 160757.

\bibitem{fosdick2018configuring}
B.~K. Fosdick, D.~B. Larremore, J.~Nishimura, J.~Ugander, Configuring random
  graph models with fixed degree sequences, Siam Review 60~(2) (2018) 315--355.

\bibitem{squartini2015breaking}
T.~Squartini, J.~de~Mol, F.~den Hollander, D.~Garlaschelli, Breaking of
  ensemble equivalence in networks, Physical review letters 115~(26) (2015)
  268701.

\bibitem{giuffrida2023description}
F.~Giuffrida, T.~Squartini, P.~Gr{\"u}nwald, D.~Garlaschelli, Description
  length of canonical and microcanonical models, arXiv preprint
  arXiv:2307.05645 (2023).

\bibitem{de2013mathematical}
M.~De~Domenico, A.~Sol{\'e}-Ribalta, E.~Cozzo, M.~Kivel{\"a}, Y.~Moreno, M.~A.
  Porter, S.~G{\'o}mez, A.~Arenas, Mathematical formulation of multilayer
  networks, Physical Review X 3~(4) (2013) 041022.

\bibitem{caldarelli2025lessons}
G.~Caldarelli, L.~Chiesi, G.~Chirici, B.~Galmarini, S.~Mancuso, J.~Moi,
  M.~De~Domenico, Lessons from complex networks to smart cities, Nature Cities
  (2025) 1--8.

\bibitem{hanneke2010discrete}
S.~Hanneke, W.~Fu, E.~P. Xing, Discrete temporal models of social networks,
  Electronic journal of statistics 4 (2010) 585--605.

\bibitem{zhang2017random}
X.~Zhang, C.~Moore, M.~E. Newman, Random graph models for dynamic networks, The
  European Physical Journal B 90~(10) (2017) 1--14.

\bibitem{bayati2010sequential}
M.~Bayati, J.~H. Kim, A.~Saberi, A sequential algorithm for generating random
  graphs, Algorithmica 58~(4) (2010) 860--910.

\bibitem{van2021sequential}
F.~van Ieperen, I.~Kryven, Sequential stub matching for uniform generation of
  directed graphs with a given degree sequence, arXiv preprint arXiv:2103.15958
  (2021).

\end{thebibliography}
		
		
		
		
		
	\end{document}